\definecolor{lightblue}{RGB}{231,255,255}
\definecolor{lightred}{RGB}{255,224,224}
\definecolor{lightgreen}{RGB}{224,255,224}
\definecolor{lightyellow}{RGB}{255,255,224}
\definecolor{lightpurple}{RGB}{255,224,255}
\definecolor{darkerred}{RGB}{64,0,0}
\definecolor{darkred}{RGB}{128,0,0}
\definecolor{darkblue}{RGB}{0,0,128}
\definecolor{darkgreen}{RGB}{0,128,0}
\definecolor{darkpurple}{RGB}{128,0,128}
\definecolor{black}{RGB}{0,0,0}
\newcommand{\clocks}{\mathcal{C}}
\newcommand{\actions}{\mathcal{A}}
\newcommand{\inactions}{\actions^{\mathsf i}}
\newcommand{\outactions}{\actions^{\mathsf o}}
\newcommand{\coactions}{\actions^{\mathsf u}}
\newcommand{\states}{\mathcal{S}}
\newcommand{\trans}[1][]{\xrightarrow{{#1}}}
\newcommand{\dedrule}[2]{\frac{#1}{#2}}
\newcommand{\iosa}{IOSA}
\newcommand{\iosac}{\iosa}
\newcommand{\I}{\mathcal{I}}
\newcommand{\pll}{||}
\newcommand{\act}{\mathit{a}}
\newcommand{\actb}{\mathit{b}}
\newcommand{\init}{\mathsf{init}}
\newcommand{\sinit}{\mathsf{init}}
\newcommand{\E}{\mathrm{E}}
\newcommand{\reduct}[1][]{\overset{_{#1}}{\rightarrowtail}}
\newcommand{\invreduct}[1][]{\overset{_{#1}}{\leftarrowtail}}
\newcommand{\II}{\mathcal{I}}
\newcommand{\uars}{\mathcal{U}}
\newcommand{\safe}[1][]{\ifthenelse{\equal{#1}{}}{\mathrm{safe}}{\mathrm{safe\hspace{1pt}}(#1)}}
\newcommand{\emptyfun}{\ensuremath\mathbf{0}}
\newcommand{\stable}[1]{\mathrm{st\hspace{1pt}}(#1)}
\newcommand{\istates}{\mathcal{S}}
\newcommand{\Trans}[1][]{\xRightarrow{\, {#1} \, }}
\def\THICKhrulefill{\leavevmode \leaders \hrule height 5pt\hfill \kern \z@}
\newcommand{\activeck}{\ensuremath{\mathrm{active}}}
\newcommand{\enablingck}{\ensuremath{\mathrm{enabling}}}
\newcommand{\inv}{\mathsf{Inv}}
\newcommand{\notinv}{\mathsf{Inv}^c}
\newcommand{\pstates}{\textbf{S}}
\newcommand{\labels}{\mathcal{L}}
\newcommand{\transitions}{\mathcal{T}}
\newcommand{\borel}[1][]{\ifthenelse{\equal{#1}{}}{\mathscr{B}}{\mathscr{B}(#1)}}
\newcommand{\sampleSp}{S}
\renewcommand{\emptyset}{\varnothing}
\renewcommand{\P}{\mathcal{P}}
\newcommand{\nlmp}{NLMP}
\renewcommand{\phi}{\varphi}
\renewcommand{\iff}{\Leftrightarrow}
\newcommand{\sigmaA}{\Sigma}
\newcommand{\R}{\ensuremath{\mathbb{R}}}
\newcommand{\N}{\ensuremath{\mathbb{N}}}
\newcommand{\pathWrite}{\mathit{Path}}
\newcommand{\Path}[1]{
  \ifthenelse{\equal{#1}{}}
  {\ensuremath{\pathWrite}}
  {\ensuremath{\pathWrite(#1)}}
}
\newcommand{\finpath}[1]{
  \ifthenelse{\equal{#1}{}}
  {\ensuremath{\pathWrite^*}}
  {\ensuremath{\pathWrite^*(#1)}}
}
\newcommand{\omegapath}[1]{
  \ifthenelse{\equal{#1}{}}
  {\ensuremath{\pathWrite^\omega}}
  {\ensuremath{\pathWrite^\omega(#1)}}
}
\newcommand{\bisim}{\sim}
\newcommand{\partsof}{\mathcal{P}}
\newcommand{\compactcdots}{{\cdot}{\cdot}{\cdot}}
\newcommand{\compactints}{\int\!\!\compactcdots\!\!\int}
\newcommand{\uen}{\ensuremath{\mathrm{uen}}}
\newcommand{\ES}{\ensuremath{\mathsf{ES}}}
\newcommand{\EG}{\ensuremath{\mathsf{EG}}}
\newcommand{\triggers}{\leadsto}
\title{Input/Output Stochastic Automata with Urgency: Confluence and
  weak determinism\thanks{This work was supported by grants ANPCyT
    PICT-2017-3894 (RAFTSys), SeCyT-UNC 33620180100354CB (ARES), and
    the ERC Advanced Grant 695614 (POWVER).}}
\titlerunning{Input/Output Stochastic Automata with Urgency}
\author{Pedro R. D'Argenio\inst{1,2,3}, Ra\'ul E. Monti\inst{1,2}}
\institute{
Universidad Nacional de C\'ordoba, FAMAF, C\'ordoba, Argentina
\and CONICET, C\'ordoba, Argentina
\and Saarland University, Department of Computer Science, Saarbr\"ucken, Germany}
\authorrunning{P.\,R. D'Argenio and R.\,E. Monti}
\begin{document}

\maketitle

\begin{abstract}
  In a previous work, we introduced an input/output variant of
  stochastic automata (\iosa) that, once the model is closed (i.e., all
  synchronizations are resolved), the resulting automaton is fully
  stochastic, that is, it does not contain non-deterministic choices.
  However, such variant is not sufficiently versatile for
  compositional modelling.
  In this article, we extend \iosa\ with urgent actions.  This extension
  greatly increases the modularization of the models, allowing to take
  better advantage on compositionality than its predecessor.  However,
  this extension introduces non-determinism even in closed models.  We
  first show that confluent models are weakly deterministic in the
  sense that, regardless the resolution of the non-determinism, the
  stochastic behaviour is the same.  In addition, we provide
  sufficient conditions to ensure that a network of interacting \iosa{s}
  is confluent without the need to analyse the larger composed \iosa.
\end{abstract}

\section{Introduction}\label{sec:introduction}

The advantages of compositional modelling complex systems can hardly be
overestimated.
On the one hand, compositional modelling facilitates systematic
design, allowing the designer to focus on the construction of small
models for the components whose operational behavior is mostly well
understood, and on the synchronization between the components, which
are in general quite evident.
On the other hand, it facilitates the interchange of components in a
model, enables compositional analysis, and helps on attacking the state
explosion problem.

In particular we focus on modelling of stochastic system for
dependability and performance analysis, and aim to general models that
require more than the usual negative exponential distribution.
Indeed, phenomena such as timeouts in communication protocols, hard
deadlines in real-time systems, human response times or the
variability of the delay of sound and video frames (so-called jitter)
in modern multi-media communication systems are typically described by
non-memoryless distributions such as uniform, log-normal, or Weibull
distributions.

The analysis of this type of model quite often can only be performed
through discrete event simulation~\cite{LawKelton:1999}.  However,
simulation requires that the model under study is fully stochastic,
that is, they should not contain non-deterministic choices.
Unfortunately, compositional modelling languages such as stochastic process
algebras with general distributions (see~\cite{bravettiDArgenio04} and
references therein) and
Modest~\cite{tse/BohnenkampDHK06,fmsd/HahnHHK13,Hartmanns15:phd}, were
designed so that the non-determinism arises naturally as the result of
composition.

Based on stochastic
automata~\cite{procomet/DArgenioKB98,DArgenio99:phd,DArgenioK05:iandc}
and probabilistic I/O automata~\cite{WuSS97:tcs}, we introduced
input/output stochastic automata (\iosa)~\cite{DArgenio2016}.
\iosa{s} were designed so that parallel composition works naturally
and, moreover, the system becomes fully stochastic --not containing
non-determinism-- when closed, i.e., when all interactions are
resolved and no input is left available in the model.
\iosa\ splits the set of actions into inputs and outputs and let
them behave in a reactive and generative manner
respectively~\cite{GlabbeekSS95:iandc}.  Thus, inputs are passive and
their occurrence depends only on their interaction with outputs.
Instead, occurrence of outputs are governed by the expiration of a timer
which is set according to a given random variable.
In addition, and not to block the occurrence of outputs, \iosa{s} are
required to be input enabled.

\begin{wrapfigure}[8]{r}{0.39\textwidth}
\centering\vspace{-2em}
\includegraphics[scale=.4]{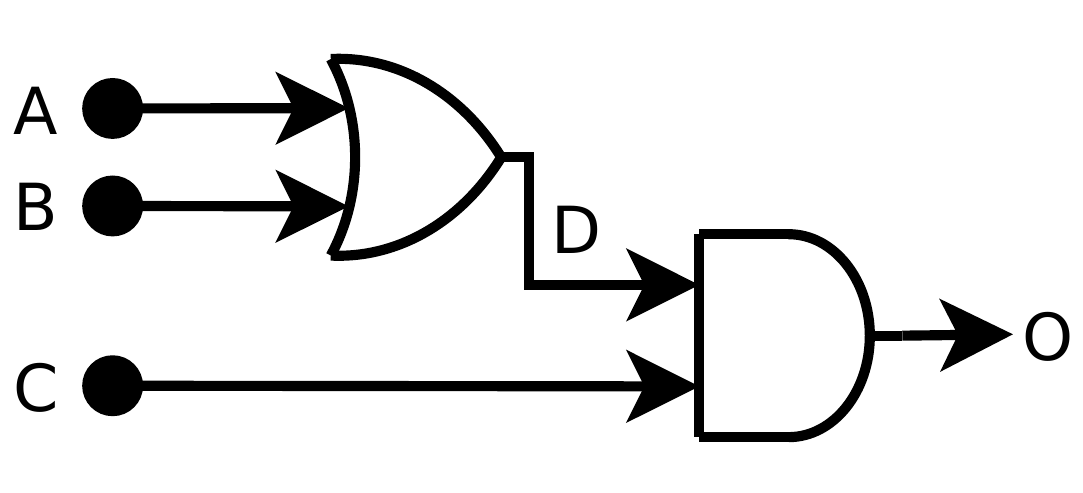}
\caption{A simple digital system.}
\label{fig:digital}
\end{wrapfigure}
We have used \iosa\ as input language of the rare event simulation
tool FIG~\cite{BuddeDM2016,Budde2017:phd} and have experienced the
limitations of the language, in particular when transcribing models
originally given in terms of variants of dynamic fault trees (DFT)
with repairs~\cite{DBLP:journals/csr/RuijtersS15}.
To illustrate the problem, suppose the simple digital system of
Fig.~\ref{fig:digital}.  We would like to measure the average time
that the output \textsf{O} is 1 given that we know the distributions
of the times in which the values on inputs \textsf{A}, \textsf{B}, and
\textsf{C} change from 0 to 1 and vice-versa.  The natural modelling of such system is to define
5 \iosa\ modules, three of them modelling the behaviour of the input
signals and the other two modelling the OR and AND gates.  Then we compose and
synchronize the 5 modules properly.  The main problem is that, while
the dynamic behaviour of the input signal modules are governed by
stochastically timed actions, the dynamic behavior of the gates are
instantaneous and thus, for instance the output \textsf{D} of the OR
gate, may change immediately after the arrival of signals
\textsf{A} or \textsf{B}.
Similar situations arise when modeling the behaviour of DFT under
complex gates like priority AND, Spares or Repair boxes.
As a consequence, we observe that the introduction of urgent actions
will allow for a direct and simple compositional modelling of
situations like the one recently described.
Also, it is worth to notice that the need for instantaneous but
causally dependent synchronization have been observed in many other
timed modelling languages, notably, in Uppaal, with the introduction
of committed locations, urgent locations and urgent
synchronization~\cite{DBLP:conf/cav/BengtssonGKLLPY96,DBLP:conf/sfm/BehrmannDL04}

Based on IMC~\cite{Hermanns02} and, particularly, on
I/O-IMC~\cite{Crouzen2014:phd}, in this article we extended
\iosa\ with urgent actions (Sec.~\ref{sec:iosa}).  Urgent actions are
also partitioned in input and output actions and, though inputs behave
reactively and passively as before, urgent outputs are executed
instantaneously as soon as the enabling state is reached.
We also give semantics to \iosa\ with urgent actions (from now
on, we simply call it \iosa) in terms of
NLMP~\cite{DArgenioSTW12:mscs,Wolovick2012:phd}
(Sec.~\ref{sec:semantics}), and define its parallel composition
(Sec.~\ref{sec:parallelcomposition}.)

The problem is that urgent actions on \iosa\ introduce non-determinism.
Fortunately, non-determinism is limited to urgent actions and, in many
occasions, it is introduced by confluent urgent output actions as a
result of a parallel composition.  Such non-determinism turns to be
spurious in the sense that it does not change the stochastic behaviour
of the model.
In this paper, we characterize confluence on \iosa{s}
(Sec.~\ref{sec:confluence}), define the concept of weak determinism,
and show that a confluent closed \iosa\ is weakly deterministic
(Sec.~\ref{sec:weak:det}).  Notably, a weakly deterministic \iosa\ is
amenable to discrete event simulation.
Milner~\cite{Milner1989:book} has provided a proof that confluence
preserves weak determinism but it is confined to a discrete
non-probabilistic setting.  A similar proof has been used by
Crouzen~\cite{Crouzen2014:phd} on I/O-IMC but, though the model is
stochastic, the proof is limited to discrete non-probabilistic
transitions.  Contrarily, our proof has to deal with continuous
probabilities (since urgent action may sample on continuous random
variables), hence making use of the solid measure theoretical
approach.  In particular, we address the complications of defining a
particular form of weak transition on a setting that is normally
elusive.

Based on the work of Crouzen~\cite{Crouzen2014:phd} for I/O-IMC, in
Sec.~\ref{sec:ensuring-confluence}, we provide sufficient conditions
to ensure that a closed \iosa\ is confluent and hence, weakly
deterministic.  If the \iosa\ is the result of composing several
smaller \iosa{s}, the verification of the conditions is performed by
inspecting the components rather than the resulting composed \iosa.

\section{ Input/Output Stochastic Automata with urgency.}
\label{sec:iosa}

Stochastic automata \cite{DArgenio99:phd,DArgenioK05:iandc} use
continuous random variables (called \emph{clocks}) to observe the passage
of time and control the occurrence of events. These variables are
set to a value according to their associated probability
distribution, and, as time evolves, they count down at the same
rate. When a clock reaches zero, it may trigger some action.
This allows the modelling of systems where events occur at random
continuous time steps.

Following ideas from \cite{WuSS97:tcs}, \iosa{s} restrict Stochastic
Automata by splitting actions into input and output actions which
will act in a reactive and generative way
respectively~\cite{GlabbeekSS95:iandc}.  This splitting reflects the fact that
input actions are considered to be controlled externally, while
output actions are locally controlled.

Therefore, we consider the system to be input enabled. Moreover,
output actions could be stochastically controlled or
instantaneous. In the first case, output actions are controlled
by the expiration of a single clock while in the second case the
output actions take place as soon as the enabling state is
reached. We called these instantaneous actions \emph{urgent}. A
set of restrictions over \iosa\ will ensure that, almost surely,
no two non-urgent outputs are enabled at the same time.

\begin{definition}\label{def:iosac}
  An \emph{input/output stochastic automaton with urgency} (\iosa) is
  a structure $(\states,\actions,\clocks,\trans,C_0,s_0)$, where
  $\states$ is a (denumerable) set of states, $\actions$ is a
  (denumerable) set of labels partitioned into disjoint sets of
  \emph{input} labels $\inactions$ and \emph{output} labels
  $\outactions$, from which a subset $\coactions\subseteq\actions$ is
  marked as \emph{urgent}.  We consider the distinguished \emph{silent}
  urgent action $\tau\in {\coactions\cap\outactions}$ which is not amenable to
  synchronization.  $\clocks$ is a (finite) set of clocks such that
  each $x\in\clocks$ has an associated continuous probability measure
  $\mu_x$ on $\R$ s.t.\ $\mu_x(\R_{>0})=1$,
  ${\trans}\subseteq\states\times\clocks\times\actions
  \times\clocks\times S$ is a transition function, $C_0$ is the set of
  clocks that are initialized in the initial state, and $s_0\in
  \states$ is the initial state.

  In addition, an \iosa\ with urgency should satisfy the following
  constraints:
  \begin{enumerate}[\rm a]
    \renewcommand{\theenumi}{(\alph{enumi})}
  \item\label{def:iosa:input-and-commit-are-reactive}    If $s\trans[C,\act,C']s'$ and $\act\in \inactions\cup\coactions$,
    then $C=\emptyset$.
  \item\label{def:iosa:output-is-generative}    If $s\trans[C,\act,C']s'$ and $\act\in \outactions\setminus\coactions$,
    then $C$ is a singleton set.
  \item\label{def:iosa:clock-control-one-output}    If $s\trans[\lbrace x\rbrace,\act_1,C_1]s_1$ and
    $s\trans[\lbrace x\rbrace,\act_2,C_2]s_2$
    then $\act_1=\act_2$, $C_1=C_2$ and $s_1=s_2$.
  \item\label{def:iosa:input-enabled}    For every $\act\in\inactions$ and state $s$, there exists a
    transition $s\trans[\emptyset,\act,C]s'$.
  \item\label{def:iosa:input-deterministic}    For every $\act\in\inactions$, if $s\trans[\emptyset,\act,C_1']s_1$ and
    $s\trans[\emptyset,\act,C_2']s_2$, $C_1' = C_2'$ and $s_1 = s_2$.
  \item\label{def:iosa:clock-never-go-off-early}    There exists a function $\activeck:\states\rightarrow 2^{\clocks}$ such that:
    \begin{inparaenum}[i]
    \renewcommand{\theenumii}{\normalfont(\roman{enumii})}
    \item\label{def:iosa:clock-never-go-off-early:i}      $\activeck(s_0)\subseteq C_0$,
          \item\label{def:iosa:clock-never-go-off-early:ii}      $\enablingck(s)\subseteq \activeck(s)$,
          \item\label{def:iosa:clock-never-go-off-early:iii}      if $s$ is stable, $\activeck(s)=\enablingck(s)$, and
          \item\label{def:iosa:clock-never-go-off-early:iv}      if $t\trans[C,a,C']s$ then
            $\activeck(s)\subseteq(\activeck(t)\setminus C)\cup C'$.
    \end{inparaenum}
  \end{enumerate}
  where $\enablingck(s)=\{y\mid s\trans[\{y\},\_,\_]\_\}$, and $s$ is
  \emph{stable}, denoted $\stable{s}$, if there is no
  $a\in\coactions\cap\outactions$ such that
  $s\trans[\emptyset,a,\_]\_$. ($\_$ indicates the existential
  quantification of a parameter.)
\end{definition}

The occurrence of an output transition is controlled by the expiration
of clocks.  If $a\in\outactions$, $s\trans[C,a,C']s'$ indicates that
there is a transition from state $s$ to state $s'$ that can be taken
only when all clocks in $C$ have expired and, when taken, it triggers
action $a$ and sets all clocks in $C'$ to a value sampled from their
associated probability distribution.  Notice that if $C=\emptyset$
(which means $a\in\outactions\cap\coactions$) $s\trans[C,a,C']s'$ is
immediately triggered.
Instead, if $a\in\inactions$, $s\trans[\emptyset,a,C']s'$ is only
intended to take place if an external output synchronizes with it,
which means, in terms of an open system semantics, that it may take
place at any possible time.

Restrictions \ref{def:iosa:input-and-commit-are-reactive} to
\ref{def:iosa:input-deterministic} ensure that any \emph{closed}
\iosa\ without urgent actions is deterministic~\cite{DArgenio2016}.
An \iosa\ is closed if all its synchronizations have been resolved,
that is, the \iosa\ resulting from a composition does not have input
actions ($\inactions=\emptyset$).
Restriction \ref{def:iosa:input-and-commit-are-reactive} is
two-folded: on the one hand, it specifies that output actions must
occur as soon as the enabling state is reached, on the other hand,
since input actions are reactive and their time occurrence can only
depend on the interaction with an output, no clock can control their
enabling.  Restriction \ref{def:iosa:output-is-generative} specifies
that the occurrence of a non-urgent output is locally controlled by a
single clock.  Restriction \ref{def:iosa:clock-control-one-output}
ensures that two different non-urgent output actions leaving the same
state are always controlled by different clocks (otherwise it would
introduce non-determinism).  Restriction \ref{def:iosa:input-enabled}
ensures input enabling.  Restriction
\ref{def:iosa:input-deterministic} determines that \iosa{s} are input
deterministic. Therefore, the same input action in the same state can
not jump to different states, nor set different clocks.
Finally, \ref{def:iosa:clock-never-go-off-early}~guarantees that clocks
enabling some output transition have not expired before, that is, they
have not been used before by another output transition (without being
reset in between) nor inadvertently reached zero.
This is done by ensuring the existence of a function ``$\activeck$''
that, at each state, collects clocks that are required to be active
(i.e. that have been set but not yet expired).  Notice that enabling
clocks are required to be active
(conditions~\ref{def:iosa:clock-never-go-off-early:ii}
and~\ref{def:iosa:clock-never-go-off-early:iii}). Also note that every
clock that is active in a state is allowed to remain active in a
successor state as long as it has not been used, and clocks that have
just been set may become active in the successor state
(condition~\ref{def:iosa:clock-never-go-off-early:iv}).

Note that since clocks are set by sampling from a continuous random
variable, the probability that the values
of two different clocks are equal is 0. This fact along with restriction
\ref{def:iosa:clock-control-one-output} and
\ref{def:iosa:clock-never-go-off-early}
guarantee that almost never two different non-urgent
output transitions are enabled at the same time.

\begin{wrapfigure}[10]{r}{0.36\textwidth}
    \vspace{-2.7em}
  \centering
  \noindent
  \begin{tikzpicture}[shorten >=1pt,node distance=2.2cm,on grid,auto, scale=0.7, transform shape, initial text={}, initial where=above]
    \tikzstyle{every state}=[text=black,draw=black,minimum size=.5cm]
    \node[state,initial]    (s1)                 {$s0$};
    \node[state]            (s2)   [right=of s1] {$s1$};
    \node[state]            (s3)   [right=of s2] {$s2$};
    \node[state,draw=white] (I1) [above left =.7cm and .5cm of s1] {$\II_1$};
    \path[->]
    (s1) edge []       node {$\{x\},a!,\emptyset$} (s2)
    (s2) edge []       node {$\emptyset,c!!,\emptyset$}      (s3);
  \end{tikzpicture}

  \noindent
  \begin{tikzpicture}[shorten >=1pt,node distance=2.2cm,on grid,auto, scale=0.7, transform shape, initial text={}, initial where=above]
    \tikzstyle{every state}=[text=black,draw=black,minimum size=.5cm]
    \node[state,initial]    (s1)                {$s3$};
    \node[state]            (s2) [right=of s1]  {$s4$};
    \node[state]            (s3) [right=of s2]  {$s5$};
    \node[state,draw=white] (I2) [above left =.7cm and .5cm of s1] {$\II_2$};
    \path[->]
    (s1) edge []       node {$\{y\},b!,\emptyset$} (s2)
    (s2) edge []       node {$\emptyset,d!!,\emptyset$}      (s3);
  \end{tikzpicture}

  \medskip

  \begin{tikzpicture}[shorten >=1pt,node distance=2.2cm,on grid,auto,
      scale=0.7, transform shape, initial text={}, initial where=above]
    \tikzstyle{every state}=[text=black,draw=black,minimum size=.5cm]
    \node[state,initial] (s1)                               {$s6$};
    \node[state] (s2) [above right =1cm and 2.2cm of s1] {$s7$};
    \node[state] (s3) [below right =1cm and 2.2cm of s2] {$s8$};
    \node[state] (s4) [below right =1cm and 2.2cm of s1] {$s9$};
    \node[state,draw=white] (I2) [above left =1cm and .5cm of s1] {$\II_3$};
    \path[->]
    (s1) edge [] node[midway,above,rotate=26.48] {$\emptyset,c??,\emptyset$} (s2)
    (s1) edge [] node[midway,below,rotate=-26.48] {$\emptyset,d??,\emptyset$} (s4)
    (s2) edge [] node[midway,above,rotate=-26.48] {$\emptyset,d??,\{z\}$} (s3)
    (s3) edge [] node[midway,below,rotate=26.48] {$\{z\},e!,\emptyset$} (s4);
  \end{tikzpicture}

  \caption{Examples of \iosa{s}.}
  \label{fig:components}
\end{wrapfigure}

\medskip
\noindent\textit{Example \refstepcounter{example}\label{ex:iosas}\theexample.}
  Fig.~\ref{fig:components} depicts three simple examples of \iosa{s}.
  Although \iosa{s} are input enabled, we have omitted self loops of
  input enabling transitions for the sake of readability.
  In the figure, we represent output actions suffixed by `!' and by
  `!!' when they are urgent, and input actions suffixed by `?' and by
  `??' when they are urgent.

\section{Semantics of \iosa}\label{sec:semantics}

The semantics of \iosa\ is defined in terms of non-deterministic
labeled Markov processes
(\nlmp)~\cite{DArgenioSTW12:mscs,Wolovick2012:phd} which extends
LMP~\cite{DesharnaisEP02:iandc} with \emph{internal} non-determinism.

The foundations of \nlmp\ is strongly rooted in measure theory, hence
we recall first some basic definitions.  Given a set $\sampleSp$ and a
collection $\sigmaA$ of subsets of $\sampleSp$, we call $\sigmaA$ a
\emph{$\sigma$-algebra} iff $\sampleSp \in \sigmaA$ and $\sigmaA$ is
closed under complement and denumerable union.  We call the pair
$(\sampleSp, \sigmaA)$ a \emph{measurable space}.  Let
$\borel(\sampleSp)$ denote the Borel $\sigma$-algebra on the topology
$\sampleSp$.
A function $\mu:\sigmaA\to [0,1]$ is a \emph{probability measure} if
\begin{inparaenum}[(i)]
\item  $\mu(\bigcup_{i\in\N} Q_i)=\sum_{i\in\N} \mu(Q_i)$ for all countable
  family of pairwise disjoint measurable sets $\{Q_i\}_{i\in\N}
  \subseteq \sigmaA$, and
\item
  $\mu(\sampleSp)=1$.
\end{inparaenum}
In particular, for $s\in S$, $\delta_s$ denotes the Dirac measure so
that $\delta_s(\{s\})=1$.
Let $\Delta(\sampleSp)$ denote the set of all probability measures
over $(\sampleSp,\sigmaA)$.
Let $(\sampleSp_1, \sigmaA_1)$ and $(\sampleSp_2, \sigmaA_2)$ be two
measurable spaces.  A function $f:\sampleSp_1\to\sampleSp_2$ is said
to be \emph{measurable} if for all $Q_2\in\sigmaA_2$,
$f^{-1}(Q_2)\in\sigmaA_1$.
There is a standard construction to endow
$\Delta(\sampleSp)$ with a $\sigma$-algebra \cite{giry81catprob} as follows:
$\Delta(\sigmaA)$ is defined as the smallest $\sigma$-algebra
containing the sets $\Delta^q(Q) \doteq \{\mu \mid \mu(Q)\geq q \}$,
with $Q\in \sigmaA$ and $q\in[0,1]$.
Finally, we define the \emph{hit $\sigma$-algebra} $H(\Delta(\Sigma))$
as the minimal $\sigma$-algebra containing all sets $H_{\xi} =
\{\zeta\in\Delta(\Sigma) \mid \zeta\cap\xi \neq\emptyset \}$ with
$\xi\in\Delta(\Sigma)$.

  A \emph{non-deterministic labeled Markov process} (\nlmp\ for short)
  is a structure
  $(\pstates,\Sigma,\{\transitions_\act\mid\act\in\labels\})$ where
  $\Sigma$ is a $\sigma$-algebra on the set of states $\pstates$, and
  for each label $\act\in\labels$ we have that $\transitions_\act
  :\pstates\to\Delta(\Sigma)$ is measurable from $\Sigma$ to
  $H(\Delta(\Sigma))$.

The formal semantics of an \iosa\ is defined by a \nlmp\ with two
classes of transitions: one that encodes the discrete steps and
contains all the probabilistic information introduced by the sampling
of clocks, and another describing the time steps, that only records
the passage of time synchronously decreasing the value of all clocks.
For simplicity, we assume that the set of clocks has a total order and
their current values follow the same order in a vector.

\begin{definition}\label{def:iosasemantic}
  Given an \iosa\ $\I =
  (\istates,\actions,\clocks,\trans,C_0,s_0)$ with $\clocks =
  \{x_1, \ldots, x_N\}$, its semantics is defined by the
  \nlmp\ $\P(\I)
  =(\pstates,\borel(\pstates),\{\transitions_\act\mid\act\in\labels\})$
  where
  \begin{itemize}
  \item    $\pstates = (\istates\cup\{\init\}) \times\R^N$,
    $\labels = \actions \cup \R_{>0} \cup \{\init\}$, with
    $\init\notin\istates\cup\actions\cup\R_{>0}$
  \item    $\transitions_\init(\init,\vec{v}) =
    \{\delta_{s_0}\times\prod_{i=1}^N\mu_{x_i}\}$,
  \item    $\transitions_\act(s,\vec{v}) = \{\mu^{\vec{v}}_{ C',s'} \mid
    s\trans[C, \act, C']s', \bigwedge_{x_i \in C} \vec{v}(i)\leq 0 \}$,
    for all $a\in\actions$, where
    $\mu^{\vec{v}}_{C', s'} = \delta_{s'} \times \prod_{i=1}^N \overline{\mu}_{x_i}$
    with $\overline{\mu}_{x_i} = {\mu}_{x_i}$ if
    $x_i \in C'$ and $\overline{\mu}_{x_i} = \delta_{\vec{v}(i)}$ otherwise, and
  \item    $\transitions_d(s,\vec{v}) = \{\delta_{s} \times \prod_{i=1}^N\delta_{\vec{v}(i)-d}\}$
    if there is no urgent $b\in\outactions\cap\coactions$
    for which $s\trans[\_,b,\_]\_$ and
    $0 < d \leq \min\{\vec{v}(i) \mid
        \exists a{\in}\outactions, C'{\subseteq}\clocks, s'{\in}S :
           s\trans[\{x_i\}, a, C'] s' \}$,
    and
    $\transitions_d(s,\vec{v}) = \emptyset$ otherwise, for all $d\in\R_{\geq 0}$.
  \end{itemize}
\end{definition}

The state space is the product space of the states of the \iosa\ with
all possible clock valuations. A distinguished initial state $\init$
is added to encode the random initialization of all clocks (it would
be sufficient to initialize clocks in $C_0$ but we decided for this
simplification).  Such encoding is done by transition
$\transitions_\init$.  The state space is structured with the usual
Borel $\sigma$-algebra.  The discrete step is encoded by
$\transitions_\act$, with $a\in\actions$.  Notice that, at state
$(s,\vec{v})$, the transition $s\trans[C, \act, C']s'$ will only take
place if $\bigwedge_{x_i \in C} \vec{v}(i)\leq 0$, that is, if the
current values of all clocks in $C$ are not positive.  For the
particular case of the input or urgent actions this will always be
true.  The next actual state would be determined randomly as follows:
the symbolic state will be $s'$ (this corresponds to $\delta_{s'}$ in
$\mu^{\vec{v}}_{C', s'} = \delta_{s'} \times
\prod_{i=1}^N\overline{\mu}_{x_i}$), any clock not in $C'$ preserves
the current value (hence $\overline{\mu}_{x_i} = \delta_{\vec{v}(i)}$
if $x_i\notin C'$), and any clock in $C'$ is set randomly according to
its respective associated distribution (hence $\overline{\mu}_{x_i} =
{\mu}_{x_i}$ if $x_i \in C'$).  The time step is encoded by
$\transitions_d(s,\vec{v})$ with $d\in\R_{\geq 0}$.  It can only take
place at $d$ units of time if there is no output transition enabled at
the current state within the next $d$ time units (this is verified by
condition $0 < d \leq \min\{\vec{v}(i) \mid \exists a{\in}\outactions,
C'{\subseteq}\clocks, s'{\in}S : s\trans[\{x_i\}, a, C'] s' \}$).  In
this case, the system remains in the same symbolic state (this
corresponds to $\delta_{s}$ in
$\delta^{-d}_{(s,\vec{v})}=\delta_{s}\times\prod_{i=1}^N\delta_{\vec{v}(i)-d}$),
and all clock values are decreased by $d$ units of time (represented
by $\delta_{\vec{v}(i)-d}$ in the same formula).  Note the difference
from the timed transitions semantics of pure
\iosa~\cite{DArgenio2016}. This is due to the maximal progress
assumption, which forces to take urgent transition as soon as they get
enabled.  We encode this by not allowing to make time transitions in
presence of urgent actions, i.e.\ we check that there is no urgent
$b\in\outactions\cap\coactions$ for which $s\trans[\_,b,\_]\_$.
(Notice that $b$ may be $\tau$.)  Otherwise,
$\transitions_d(s,\vec{v})=\emptyset$.  Instead, notice the
\emph{patient} nature of a state $(s,\vec{v})$ that has no output
enabled.  That is, $\transitions_d(s,\vec{v}) = \{\delta_{s} \times
\prod_{i=1}^N\delta_{\vec{v}(i)-d}\}$ for all $d>0$ whenever there is
no output action $b\in\outactions$ such that $s\trans[\_,b,\_]\_$.

In a similar way to
\cite{DArgenio2016}, it is possible to show that $\P(\I)$ is indeed a
\nlmp, i.e. that $\transitions_a$ maps into measurable sets in
$\Delta(\borel(\pstates))$, and that $\transitions_a$ is a measurable
function for every $a\in\labels$.

\section{Parallel Composition}\label{sec:parallelcomposition}

In this section, we define parallel composition of \iosa{s}. Since
outputs are intended to be autonomous (or locally controlled), we do
not allow synchronization between them. Besides, we need to avoid name
clashes on the clocks, so that the intended behavior of each component
is preserved and moreover, to ensure that the resulting composed
automaton is indeed an \iosa.  Furthermore, synchronizing \iosa{s}
should agree on urgent actions in order to ensure their immediate
occurrence. Thus we require to compose only \emph{compatible}
\iosa{s}.

\begin{definition}
  \label{def:compatible}
  Two \iosac{s} $\II_1$ and $\II_2$ are \emph{compatible} if they do
  not share synchronizable output actions nor clocks, i.e.
  $\outactions_1 \cap \outactions_2 \subseteq\{\tau\}$ and
  $\clocks_1\cap\clocks_2 = \emptyset$ and, moreover, they agree on
  urgent actions, i.e.\
  $\actions_1\cap\coactions_2 = \actions_2\cap\coactions_1$.
\end{definition}

\begin{definition}
  \label{def:parcomp}
  Given two compatible \iosac{s} $\II_1$ and $\II_2$, the parallel
  composition $\II_1\pll \II_2$ is a new
  \iosac\ $(\states_1\times\states_2, \actions, \clocks, \trans,
  C_0, s_0^1\pll s_0^2)$ where
  \begin{inparaenum}[(i)]
  \item    $\outactions = \outactions_1 \cup \outactions_2$
  \item    $\inactions = (\inactions_1 \cup \inactions_2)\setminus \outactions$
  \item    $\coactions = \coactions_1 \cup \coactions_2$
  \item    $\clocks = \clocks_1\cup \clocks_2$
  \item    $C_0 = C_0^1\cup C_0^2$
  \end{inparaenum}
  and $\trans$ is   defined by rules in
  Table~\ref{tb:parcomp} where we write $s\pll t$ instead of
  $(s,t)$.
\end{definition}

\begin{table}[t]
  \caption{Parallel composition on \iosac}\label{tb:parcomp}  \vspace{-1em}\
  \begin{minipage}{.48\textwidth}
  \begin{gather*}
    \hspace{-1em}
    \dedrule{s_1\trans[C,\act,C']_1s_1'}{s_1\pll s_2\trans[C,\act,C']s_1'\pll s_2}    \ \act{\in}(\actions_1{\!\setminus}\actions_2){\cup}\{\tau\}\! \tag{R1}\label{eq:par:l}
  \end{gather*}
  \end{minipage}
  \hfill
  \begin{minipage}{.48\textwidth}
  \begin{gather*}
    \hspace{-1em}
    \dedrule{s_2\trans[C,\act,C']_2s_2'}{s_1\pll s_2\trans[C,\act,C']s_1\pll s_2'}    \ \act{\in}(\actions_2{\!\setminus}\actions_1){\cup}\{\tau\}\! \tag{R2}\label{eq:par:r}
  \end{gather*}
  \end{minipage}

  ~\hfill
  \begin{minipage}{.68\textwidth}
  \begin{gather*}
    \dedrule{s_1\trans[C_1,\act,C'_1]_1s_1' \quad s_2\trans[C_2,\act,C'_2]_2s_2'}            {s_1\pll s_2\trans[C_1\cup C_2,\act,C'_1\cup C'_2]s_1'\pll s_2'}            \ \act{\in}(\actions_1{\cap}\actions_2){\setminus}\{\tau\}
            \tag{R3}\label{eq:par:s}
  \end{gather*}
  \end{minipage}
  \hfill~
\end{table}

Def~\ref{def:parcomp} does not ensure \emph{a priori} that the
resulting structure satisfies conditions
\ref{def:iosa:input-and-commit-are-reactive}--\ref{def:iosa:clock-never-go-off-early}
in Def.~\ref{def:iosac}.  This is only guaranteed by the following
proposition.

\begin{proposition}\label{prop:comp:is:closed}
  Let $\II_1$ and $\II_2$ be two compatible \iosa{s}. Then
  $\II_1\pll\II_2$ is indeed an \iosa.
\end{proposition}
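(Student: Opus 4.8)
The plan is to verify each of the six constraints (\ref{def:iosa:input-and-commit-are-reactive})--(\ref{def:iosa:clock-never-go-off-early}) of Definition~\ref{def:iosac} for the structure $\II_1\pll\II_2$, in every case reasoning by a case analysis on which of the rules (\ref{eq:par:l}), (\ref{eq:par:r}), (\ref{eq:par:s}) produced the transition under scrutiny, and appealing to the fact that $\II_1$ and $\II_2$ already satisfy the same constraints, together with the compatibility hypotheses $\outactions_1\cap\outactions_2\subseteq\{\tau\}$, $\clocks_1\cap\clocks_2=\emptyset$ and $\actions_1\cap\coactions_2=\actions_2\cap\coactions_1$. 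Two observations drive almost everything. First, since non-shared actions move a single component while shared non-$\tau$ actions synchronise, and since compatibility forbids two components from sharing a non-urgent output, a transition carrying a non-empty clock set can only originate from a single component. Second, the disjointness of $\clocks_1$ and $\clocks_2$ means that $\activeck_i(s_i)$, $C$ and $C'$ arising from a component-$i$ move never interfere with the clocks of the other component.

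I would handle the ``easy'' constraints first. For (\ref{def:iosa:input-and-commit-are-reactive}), I note that if the composite label $\act$ is urgent or input then it is urgent/input in each component it touches: for a shared action urgency transfers by the agreement hypothesis, and a shared input cannot be an output of either component since $\act\notin\outactions_1\cup\outactions_2$; in all three rules the per-component clock sets are then empty by the component's own (\ref{def:iosa:input-and-commit-are-reactive}), so their union is empty. Constraint (\ref{def:iosa:output-is-generative}) follows because a non-urgent composite output is a non-urgent output of exactly one component (it cannot be shared, as $\outactions_1\cap\outactions_2\subseteq\{\tau\}\subseteq\coactions$), hence the move comes from (\ref{eq:par:l}) or (\ref{eq:par:r}) and the singleton property is inherited. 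Constraint (\ref{def:iosa:clock-control-one-output}) is similar: a clock $x$ lies in exactly one $\clocks_i$, so any $\{x\}$-controlled transition comes from that component with the other frozen, and the component's own (\ref{def:iosa:clock-control-one-output}) forces the targets to coincide. For input enabling (\ref{def:iosa:input-enabled}) I split on whether the input $\act$ is shared: if $\act\in\actions_1\setminus\actions_2$ rule (\ref{eq:par:l}) lifts the guaranteed component-$1$ transition (symmetrically for component $2$), and if $\act$ is shared then $\act\neq\tau$ and rule (\ref{eq:par:s}) combines the input transitions of both. Input determinism (\ref{def:iosa:input-deterministic}) follows by the same split, since a given input either always uses (\ref{eq:par:l})/(\ref{eq:par:r}) or always uses (\ref{eq:par:s}), and determinism is inherited componentwise.

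The substantial work is condition (\ref{def:iosa:clock-never-go-off-early}), for which I take the natural witness $\activeck(s_1\pll s_2)=\activeck_1(s_1)\cup\activeck_2(s_2)$ (with $\activeck_i$ a witness for $\II_i$) and check its four clauses. Clauses (i) and (ii) are immediate from the componentwise versions together with $\enablingck(s_1\pll s_2)\subseteq\enablingck_1(s_1)\cup\enablingck_2(s_2)$, which again holds because enabling clocks live in one component. Clause (iv) is a set-algebra computation that goes through precisely because $\clocks_1\cap\clocks_2=\emptyset$: in the (\ref{eq:par:s}) case one has $C=C_1\cup C_2$ and $C'=C'_1\cup C'_2$ with $C_1,C'_1\subseteq\clocks_1$ and $C_2,C'_2\subseteq\clocks_2$, so removing $C$ from $\activeck_1(t_1)\cup\activeck_2(t_2)$ removes $C_i$ from $\activeck_i(t_i)$ and nothing else, and the two componentwise inclusions of (iv) reassemble into the desired one.

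The hard part, and the place I would be most careful, is clause (iii): if $s_1\pll s_2$ is stable then $\activeck=\enablingck$ there. This needs two facts. I would first show that stability of the composite forces stability of each component, by proving the contrapositive: an urgent output of $\II_i$ out of $s_i$ lifts to an urgent output of the composite out of $s_1\pll s_2$ --- directly via (\ref{eq:par:l})/(\ref{eq:par:r}) when it is $\tau$ or unshared, and by synchronising with the matching (input-enabled, hence present) urgent transition of the partner via (\ref{eq:par:s}) when it is shared, where agreement on urgent actions guarantees the partner also sees it as urgent. Given componentwise stability, I then need the reverse of (ii), namely $\enablingck_1(s_1)\cup\enablingck_2(s_2)\subseteq\enablingck(s_1\pll s_2)$: every clock enabling an output in a component (necessarily non-urgent, by (\ref{def:iosa:output-is-generative})) enables it in the composite, the key point being that such an output is \emph{not} shared (compatibility again), so (\ref{eq:par:l})/(\ref{eq:par:r}) applies with no synchronisation obligation. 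Combining this with componentwise (iii) gives $\activeck(s_1\pll s_2)=\activeck_1(s_1)\cup\activeck_2(s_2)=\enablingck_1(s_1)\cup\enablingck_2(s_2)=\enablingck(s_1\pll s_2)$, which closes the proof.
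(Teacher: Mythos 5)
Your overall skeleton coincides with the paper's proof: check the six constraints of Def.~\ref{def:iosac} by case analysis on rules (\ref{eq:par:l})--(\ref{eq:par:s}), and for constraint~\ref{def:iosa:clock-never-go-off-early} take the witness $\activeck(s_1\pll s_2)=\activeck_1(s_1)\cup\activeck_2(s_2)$ and verify its four clauses (your clause-(iv) set computation and your stability argument in clause~\ref{def:iosa:clock-never-go-off-early:iii} are exactly the paper's). However, there is a recurring genuine error: several times you deduce from compatibility that an output action of one component is ``not shared'', and conclude that the corresponding composite transition must be produced by (\ref{eq:par:l}) or (\ref{eq:par:r}) with the other component frozen. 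Compatibility only gives $\outactions_1\cap\outactions_2\subseteq\{\tau\}$: an output of $\II_1$ may perfectly well be an \emph{input} of $\II_2$, in which case it lies in $\actions_1\cap\actions_2$ and the transition is produced by (\ref{eq:par:s}). This output/input synchronization is the central mechanism of the composition (in Fig.~\ref{fig:components}, $c$ is an output of $\II_1$ and an input of $\II_3$), so the case your argument excludes is in fact the main case.

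The consequences are uneven. For constraint~\ref{def:iosa:output-is-generative} the conclusion survives: in the (\ref{eq:par:s}) case the input side contributes an empty clock set by~\ref{def:iosa:input-and-commit-are-reactive}, so the union is still a singleton; but your stated argument simply does not cover this case. For constraint~\ref{def:iosa:clock-control-one-output} the gap is substantive: two $\{x\}$-controlled composite transitions can both arise from (\ref{eq:par:s}), where the component owning $x$ moves uniquely by its own~\ref{def:iosa:clock-control-one-output} while the partner takes input transitions on the same label; to conclude that the composite reset sets and target states coincide you must invoke the partner's input determinism~\ref{def:iosa:input-deterministic} --- precisely the extra hypothesis the paper's proof flags as needed for this constraint, and which your proposal never uses, so as written your argument for~\ref{def:iosa:clock-control-one-output} does not go through. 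Similarly, in clause~\ref{def:iosa:clock-never-go-off-early:iii} of~\ref{def:iosa:clock-never-go-off-early}, the inclusion $\enablingck(s_1)\cup\enablingck(s_2)\subseteq\enablingck(s_1\pll s_2)$ holds not because component outputs are unshared, but because the partner is input enabled, so the (\ref{eq:par:s}) synchronization can always fire; indeed you use exactly this (correct) argument when showing that stability of $s_1\pll s_2$ forces stability of $s_1$ and $s_2$, and then contradict it one sentence later for the enabling clocks. All three defects are repairable by adding the (\ref{eq:par:s}) case with appeals to~\ref{def:iosa:input-and-commit-are-reactive}, \ref{def:iosa:input-deterministic}, and~\ref{def:iosa:input-enabled} respectively, but the proof as stated is incomplete.
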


\begin{figure}[t]
    \center
    \begin{tikzpicture}[shorten >=1pt,node distance=3cm,on grid,auto, scale=.8, transform shape, initial text={}]
        \tikzstyle{every state}=[rectangle, rounded corners=7pt, text=black,draw=black,minimum size=.5cm]
        \node[state,initial] at (0,0)     (s1)    {$s0\pll s3\pll s6$};
        \node[state]         at (3,0)     (s2)    {$s1\pll s3\pll s6$};
        \node[state]         at (6,0)     (s3)    {$s2\pll s3\pll s7$};
        \node[state]         at (0,-1.4)  (s4)    {$s0\pll s4\pll s6$};
        \node[state]         at (3,-1.4)  (s5)    {$s1\pll s4\pll s6$};
        \node[state]         at (9,-1.4)  (s6)    {$s2\pll s4\pll s7$};
        \node[state]         at (0,-2.8)  (s7)    {$s0\pll s5\pll s9$};
        \node[state]         at (3,-2.8)  (s8)    {$s1\pll s5\pll s9$};
        \node[state]         at (6,-2.8)  (s9)    {$s2\pll s5\pll s9$};
        \node[state]         at (9,-2.8)  (s10)   {$s2\pll s5\pll s8$};
        \path[->]
        (s1) edge []  node {$\{x\},a!,\emptyset$}       (s2)
        (s1) edge []  node {$\{y\},b!,\emptyset$}       (s4)
        (s2) edge []  node {$\emptyset,c!!,\emptyset$}  (s3)
        (s2) edge []  node {$\{y\},b!,\emptyset$}       (s5)
        (s3) edge []  node {$\{y\},b!,\emptyset$}       (s6)
        (s4) edge []  node {$\{x\},a!,\emptyset$}       (s5)
        (s4) edge []  node {$\emptyset,d!!,\emptyset$}  (s7)
        (s5) edge []  node {$\emptyset,c!!,\emptyset$}  (s6)
        (s5) edge []  node {$\emptyset,d!!,\emptyset$}  (s8)
        (s6) edge [left]  node {$\emptyset,d!!,\emptyset$}  (s10)
        (s7) edge []  node {$\{x\},a!,\emptyset$}       (s8)
        (s8) edge []  node {$\emptyset,c!!,\emptyset$}  (s9)
        (s10)edge []  node[above] {$\{x\},e!,\emptyset$}(s9)
        ;
    \end{tikzpicture}

\caption{\iosa\ resulting from the composition $\II_1\pll \II_2\pll \II_3$ of \iosa{s} in Fig.~\ref{fig:components}.}
\label{fig:composed}
\end{figure}

\begin{example}\label{ex:iosas:pll}
  The result of composing $\II_1\pll\II_2\pll\II_3$ from
  Example~\ref{ex:iosas} is depicted in Fig.~\ref{fig:composed}.
\end{example}

Larsen and Skou's probabilistic bisimulation~\cite{LarsenS91:iandc}
has been extended to NLMPs in~\cite{DArgenioSTW12:mscs}.  It can be
shown that the bisimulation equivalence is a congruence for parallel
composition of \iosa.  In fact, this has already been shown for
\iosa\ without urgency in~\cite{DArgenio2016} and since the
characteristics of urgency do not play any role in the proof over
there, the result immediately extends to our setting.  So we report
the theorem and invite the reader to read the proof
in~\cite{DArgenio2016}.

\begin{theorem}
  Let $\bisim$ denote the bisimulation equivalence relation on
  NLMPs~\cite{DArgenioSTW12:mscs} properly lifted to
  \iosa~\cite{DArgenio2016}, and let $\II_1$, $\II'_1$, $\II_2$,
  $\II'_2$ be \iosa{s} such that $\II_1\bisim \II'_1$ $\II_2\bisim
  \II'_2$.  Then, $\II_1\pll\II_2\bisim \II'_1\pll\II'_2$.
\end{theorem}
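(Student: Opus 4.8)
The plan is to exhibit an explicit bisimulation on the semantic \nlmp{s} of the two composed automata and to verify the transfer conditions of \nlmp\ bisimulation rule by rule. Let $\Rel_1$ be a bisimulation on the disjoint union of $\P(\II_1)$ and $\P(\II'_1)$ witnessing $\II_1\bisim\II'_1$, and let $\Rel_2$ do the same for $\II_2\bisim\II'_2$. Since compatible automata have disjoint clocks, a state of $\P(\II_1\pll\II_2)$ has the form $((s_1,s_2),\vec v)$ with $\vec v$ a valuation of $\clocks_1\cup\clocks_2$, and I write $\restr{\vec v}{\clocks_i}$ for its projection onto the clocks of component $i$. I would define the candidate relation $\Rel$ by declaring $((s_1,s_2),\vec v)\mathrel{\Rel}((t_1,t_2),\vec w)$ exactly when the projections are component-wise related, i.e. $(s_1,\restr{\vec v}{\clocks_1})\mathrel{\Rel_1}(t_1,\restr{\vec w}{\clocks'_1})$ and $(s_2,\restr{\vec v}{\clocks_2})\mathrel{\Rel_2}(t_2,\restr{\vec w}{\clocks'_2})$, and take $\Rel$ to be the induced equivalence. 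Relating the initial states is immediate once $\Rel$ is shown to be a bisimulation, so the whole argument reduces to establishing the transfer property.

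Before checking transfer I must handle the measure-theoretic bookkeeping, which is where the real work lies. \nlmp\ bisimulation compares, for each label, each $\Rel$-closed measurable set of states, and each threshold, whether the transition bundle of one related state contains a measure giving that set at least that probability iff the bundle of the other does; the comparison is organised through the hit $\sigma$-algebra on $\Delta(\borel(\pstates))$. I would first show that $\Rel$ is measurable in the required sense, i.e. that the $\Rel$-closed sets form the sub-$\sigma$-algebra generated by products $Q_1\times Q_2$ of $\Rel_1$-closed and $\Rel_2$-closed sets; this uses that the state space factorizes and that the clock projections are measurable. The crux is a Fubini-style lemma: if $\mu_1,\mu_1'$ agree on all $\Rel_1$-closed sets and $\mu_2,\mu_2'$ agree on all $\Rel_2$-closed sets, then $\mu_1\times\mu_2$ and $\mu_1'\times\mu_2'$ agree on all $\Rel$-closed sets. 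I expect this to be the main obstacle, because the lifting is phrased through $\Rel$-closed sets and the hit $\sigma$-algebra rather than through couplings, so one cannot simply multiply couplings; instead one argues on generators, reducing an arbitrary $\Rel$-closed set to the $\pi$-system of rectangles $Q_1\times Q_2$, using $(\mu_1\times\mu_2)(Q_1\times Q_2)=\mu_1(Q_1)\,\mu_2(Q_2)$ together with uniqueness of measure extension, and exploiting that disjointness of $\clocks_1,\clocks_2$ makes every semantic measure $\delta_{s'}\times\prod_i\overline\mu_{x_i}$ split cleanly along the two clock blocks.

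With the lifting lemma in hand, transfer is a case analysis on the origin of each composed transition according to Table~\ref{tb:parcomp}. For a discrete label $a$ fired by rule \eqref{eq:par:l} (symmetrically \eqref{eq:par:r}), only the first component moves; the target semantic measure factorizes as $\bigl(\delta_{s_1'}\times\prod_{x\in\clocks_1}\overline\mu_x\bigr)\times\bigl(\delta_{s_2}\times\prod_{x\in\clocks_2}\delta_{\vec v(x)}\bigr)$, whose first factor is matched by the $\Rel_1$-transfer of $\P(\II_1)$ while the unchanged second component stays at $\Rel_2$-related points, so its Dirac factor is $\Rel_2$-related; the lifting lemma then relates the products. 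For a synchronizing label fired by \eqref{eq:par:s}, both components move on the shared $a$, the target measure is the product of the two component target measures, and I match each factor with the corresponding component bisimulation before recombining. In every case I must also check that the guard $\bigwedge_{x_i\in C}\vec v(i)\le 0$ transfers, which holds because related states carry $\Rel_i$-related clock valuations on each component and the guard is determined component-wise.

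Finally, the $\transitions_d$ time steps and the $\transitions_\init$ step are deterministic (single Dirac) transitions, so their matching is direct: the init measure $\delta_{s_0}\times\prod_i\mu_{x_i}$ factorizes across the components and is $\Rel$-related by construction, while a time step of duration $d$ decreases all clocks uniformly and its very availability is governed solely by the enabling condition of Def.~\ref{def:iosasemantic}, namely the absence of an enabled urgent output and the bound $d\le\min\{\vec v(i)\mid\dots\}$. Since each component bisimulation preserves per-label enabledness, related states agree componentwise on whether an urgent output is enabled and on the relevant minima, so time steps are available on both sides and yield $\Rel$-related Diracs; this is exactly where urgency enters, and it is absorbed uniformly rather than posing a separate difficulty. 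Concluding, $\Rel$ is a bisimulation containing the pair of initial states, whence $\II_1\pll\II_2\bisim\II_1'\pll\II_2'$.
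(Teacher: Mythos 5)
Note first that the paper does not actually prove this theorem: its ``proof'' is the observation that the congruence proof for \iosa\ \emph{without} urgency in~\cite{DArgenio2016} nowhere relies on urgency, so it carries over verbatim. Your proposal instead reconstructs the underlying congruence proof directly, and its architecture is the right one: the componentwise relation $\Rel$, the rule-by-rule transfer analysis along Table~\ref{tb:parcomp}, the factorization of every semantic measure $\delta_{s'}\times\prod_i\overline{\mu}_{x_i}$ along the two disjoint clock blocks, and the observation that $\init$, time steps and urgent labels need no special treatment (which is precisely why the paper can afford to defer to~\cite{DArgenio2016}).

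There is, however, a genuine gap at the step you yourself single out as the crux. You propose to prove the lifting lemma by (1) showing that the $\Rel$-closed measurable sets form the $\sigma$-algebra generated by rectangles $Q_1\times Q_2$ with $Q_i$ being $\Rel_i$-closed, and then (2) transporting agreement from this $\pi$-system by uniqueness of measure extension. Claim (1) is false for general equivalence relations, and bisimulation equivalences carry no smoothness property that would rescue it. Concretely, on $[0,1]$ let $E$ relate $x,y$ iff $x-y\in\Q$. The set $D=\{(x,y)\mid x-y\in\Q\}$ is Borel and $(E\times E)$-closed, yet it does not lie in the $\sigma$-algebra generated by rectangles of $E$-closed Borel sets: any member of that $\sigma$-algebra lies in the $\sigma$-algebra generated by countably many such rectangles; every $E$-closed Borel set is Lebesgue-null or conull (ergodicity of the rational translations), so for those countably many rectangles there are conull sets $A,B$ such that every generated set either contains $A\times B$ or is disjoint from it; and $D$ does neither, being null yet meeting $A\times B$ (because $A\cap\bigcup_{q\in\Q}(B+q)$ is conull, hence nonempty). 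Consequently, agreement on rectangles plus Dynkin's argument yields agreement only on $\sigma(\text{rectangles})$, which can be strictly smaller than the family of \emph{all} $\Rel$-closed measurable sets --- and the NLMP transfer condition, as you yourself describe it, quantifies over the latter. The lemma you state is nevertheless true, and the correct proof is the Fubini argument you name but do not carry out: for $\Rel$-closed $Q$, every section $Q_x=\{y\mid (x,y)\in Q\}$ is $\Rel_2$-closed and satisfies $Q_x=Q_{x'}$ whenever $x\mathrel{\Rel_1}x'$; hence $g(x)=\mu_2(Q_x)=\mu_2'(Q_x)$ is an $\Rel_1$-invariant measurable function, so $g^{-1}(B)$ is $\Rel_1$-closed and measurable for every Borel $B$, whence $\int g\,d\mu_1=\int g\,d\mu_1'$; two applications of Fubini then give $(\mu_1\times\mu_2)(Q)=(\mu_1'\times\mu_2')(Q)$. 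With the lemma repaired in this way, the remainder of your case analysis goes through.
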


\section{Confluence}\label{sec:confluence}

\begin{wrapfigure}[5]{r}{120pt}
    \centering\vspace{-7em}
    \begin{tikzpicture}[shorten >=1pt,node distance=2.2cm,on grid,auto, scale=0.8, transform shape, initial text={}, initial where=above]
        \tikzstyle{every state}=[text=black,draw=black,fill=white,minimum size=.7cm]
        \draw [dashed] (-0.5,-2.7) -- (2.8,0.6);
        \node at (-0.8,0.2) {\LARGE $\forall$};
        \node at (3.1,-2.3) {\LARGE $\exists$};
        \node[state]    (s)                  {$s$};
        \node[state]    (s1)   [right=of s]  {$s_1$};
        \node[state]    (s2)   [below=of s]  {$s_2$};
        \node[state]    (s3)   [right=of s2] {$s_3$};
        \path[->]
        (s) edge []   node {$\emptyset,\act,C_1$}     (s1)
        (s) edge []   node[rotate=-90,yshift=.3cm,xshift=-.8cm]  {$\emptyset,\actb,C_2$}    (s2)
        (s1) edge []  node[rotate=-90,yshift=.3cm,xshift=-.8cm] {$\emptyset,\actb,C_2$}    (s3)
        (s2) edge []  node {$\emptyset,\act,C_1$}     (s3);
    \end{tikzpicture}
\caption{Confluence in \iosa.}\label{fig:confluence}
\end{wrapfigure}
Confluence, as studied by Milner~\cite{Milner1989:book}, is related
to a form of weak determinism: two silent transitions taking place on
an interleaving manner do not alter the behaviour of the process
regardless of which happens first.  In particular, we will eventually
assume that urgent actions in a closed \iosa\ are silent as they do
not delay the execution.  Thus we focus on confluence of urgent
actions only.
The notion of confluence is depicted in Fig.~\ref{fig:confluence} and
formally defined as follows.

\begin{definition}\label{def:confluence}
  An \iosa\ $\II$ is \emph{confluent with respect to actions
    $a,b\in\coactions$} if, for every state $s\in\states$ and
  transitions $s\trans[\emptyset,a,C_1]s_1$ and
  $s\trans[\emptyset,b,C_2]s_2$, there exists a state $s_3\in\states$
  such that $s_1\trans[\emptyset,b,C_2]s_3$ and
  $s_2\trans[\emptyset,a,C_1]s_3$.
  $\II$ is \emph{confluent} if it is confluent with respect to every
  pair of urgent actions.
              \end{definition}
Note that we are asking that the two actions converge in a single
state, which is stronger than Milner's strong confluence, where
convergence takes place on bisimilar but potentially different states.

Confluence is preserved by parallel composition: 
\begin{proposition}  \label{prop:compPreservesConfl}
  If both $\II_1$ and $\II_2$ are confluent w.r.t.\ actions
  $a,b\in\coactions$, then so is $\II_1\pll\II_2$.
  Therefore, if $\II_1$ and $\II_2$ are confluent,
  $\II_1\pll\II_2$ is also confluent.
\end{proposition}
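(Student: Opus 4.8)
The plan is to establish the first claim---preservation of confluence with respect to a fixed pair $a,b\in\coactions$---and then to derive the second by applying it to every pair of urgent actions. Fix a composite state $s\pll t$ together with transitions $s\pll t\trans[\emptyset,a,C_1]u_1$ and $s\pll t\trans[\emptyset,b,C_2]u_2$; the enabling clock sets are empty because $a$ and $b$ are urgent (restriction~\ref{def:iosa:input-and-commit-are-reactive}). Since whether an action lies in $\actions_1\setminus\actions_2$, in $\actions_2\setminus\actions_1$, or in $\actions_1\cap\actions_2$ is a static property, each transition arises from one of the rules R1, R2, R3, with $\tau$ confined to R1/R2 (R3 excludes it). I would structure the proof as a case analysis on the pair of rules used to derive the two transitions, noting that the nine combinations collapse to a handful under the symmetries that swap the two components and that swap $a$ with $b$.

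The unifying idea is that each urgent action acts on a set of components (one for a local action, both for a shared one), and the two transitions are reconciled component by component. When both transitions move the same component $i$---the cases R1--R1, R2--R2, and the component-$i$ half of every case involving R3---I invoke confluence of $\II_i$ with respect to $a,b$ to obtain a common local successor. When the footprints are disjoint (R1--R2 and its mirror) the two moves touch different components and commute outright, with no appeal to component confluence. In every case I then reassemble a common successor $u_3$ by re-applying R1/R2/R3 from $u_1$ and $u_2$; this is legal because the side conditions of the rules depend only on the action labels, not on states or clock values, so they still hold from the completed states.

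Before invoking them I would check that the component-confluence hypotheses are meaningful exactly where used: shared actions are urgent in both components by compatibility; a local urgent action of the composite belongs to $\coactions_i$ of the component that owns it (since $\coactions_j\subseteq\actions_j$ forces an action absent from $\actions_j$ into $\coactions_i$); and $\tau\in\coactions_i\cap\outactions_i$ in each component, so confluence with respect to $\tau,\tau$ is always available.

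The delicate step is the clock bookkeeping in the synchronized cases (those involving R3). There the composite reset set splits along the disjoint clock sets as $C=(C\cap\clocks_1)\cup(C\cap\clocks_2)$ (using $\clocks_1\cap\clocks_2=\emptyset$ from compatibility), and recomposing with R3 requires the two closing component transitions to carry exactly those component reset sets. This is precisely what Def.~\ref{def:confluence} supplies: the closing transitions reuse the same reset sets $C_1,C_2$ and converge to a single state rather than merely to bisimilar ones. Thus the strong form of confluence adopted here (equal labels, equal reset sets, one common target) is exactly what makes the recomposition type-check; a weaker up-to-bisimulation notion would not obviously close the synchronized diagrams. Finally, for the second claim, since $\coactions=\coactions_1\cup\coactions_2$ and each $\II_i$ is confluent with respect to every pair of its urgent actions (and vacuously for pairs it lacks), the first claim yields confluence of $\II_1\pll\II_2$ with respect to every pair in $\coactions$, i.e.\ confluence.
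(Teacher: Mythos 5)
Your proposal is correct and follows essentially the same route as the paper's proof: a case analysis on which of the rules R1/R2/R3 produced each of the two urgent transitions, invoking component confluence when both transitions move the same component, commuting outright when the footprints are disjoint, and closing the diagram by re-applying the rules (whose side conditions depend only on the action labels). The paper spells out only the representative case where the $a$-transition comes from R1 and notes the remaining cases are analogous, while you additionally make explicit the symmetry reductions, the legitimacy of the component-confluence hypotheses, and the clock-set bookkeeping in the synchronized cases—but these are elaborations of the same argument, not a different one.
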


However, parallel composition may turn non-confluent components into a
confluent composed system.

By looking at the \iosa\ in Fig.~\ref{fig:confluence-is-det}, one can
notice that the non-determinism introduced by confluent urgent output
actions is spurious in the sense that it does not change the
stochastic behaviour of the model after the output urgent actions have been abstracted.  Indeed, since time does not
progress, it is the same to sample first clock $x$ and then clock $y$
passing through state $s_1$, or first $y$ and then $x$ passing through
$s_2$, or even sampling both clocks simultaneously through a transition
$s_1\trans[\emptyset,\tau,\{x,y\}]s_3$.  In any of the cases, the
stochastic resolution of the execution of $\act$ or $\actb$ in the
stable state $s_3$ is the same.
This could be generalized to any number of confluent transitions.

\begin{wrapfigure}[12]{r}{3.4cm}
    \centering\vspace{-2.6em}
    \begin{tikzpicture}[shorten >=1pt,node distance=1.8cm,on grid,auto, scale=0.8, transform shape, initial text={}, initial where=above]
        \tikzstyle{every state}=[text=black,draw=black,fill=white,minimum size=.7cm]
        \node[state] (s0)                       {$s_0$};
        \node[state] (s1) [below left = of s0]  {$s_1$};
        \node[state] (s2) [below right = of s0] {$s_2$};
        \node[state] (s3) [below right = of s1] {$s_3$};
        \node[state] (s4) [below left = of s3]  {$s_4$};
        \node[state] (s5) [below right = of s3] {$s_5$};
        \path[->]
        (s0) edge []   node[yshift=.55cm,xshift=-1.5cm] {$\emptyset,\tau,\{x\}$}     (s1)
        (s0) edge []   node[yshift=-.05cm,xshift=0cm]  {$\emptyset,\tau,\{y\}$}    (s2)
        (s1) edge []  node[yshift=-.5cm,xshift=-1.5cm] {$\emptyset,\tau,\{y\}$}    (s3)
        (s2) edge []  node[yshift=.1cm,xshift=0cm] {$\emptyset,\tau,\{x\}$}     (s3)
        (s3) edge []   node[yshift=.55cm,xshift=-1.5cm] {$\{x\},\act!,\emptyset$}     (s4)
        (s3) edge []   node[yshift=-0.05cm,xshift=0cm]  {$\{y\},\actb!,\emptyset$}    (s5);
    \end{tikzpicture}
\caption{Confluence is weakly deterministic}\label{fig:confluence-is-det}
\end{wrapfigure}

Thus, it will be convenient to use term rewriting techniques
to collect all clocks that are active in the convergent stable state
and have been activated through a path of urgent actions.
Therefore, we recall some basic notions of rewriting systems.
An \emph{abstract reduction system}~\cite{DBLP:books/daglib/0092409}
is a pair $(\E,\reduct)$, where the reduction $\reduct$ is a binary
relation over the set $\E$, i.e. ${\reduct}\subseteq\E\times\E$.  We
write $a\reduct b$ for $(a,b)\in{\reduct}$.
We also write $a\reduct[*]b$ to denote that there is a path $a_0\reduct
a_1\dots\reduct a_n$ with $n\geq 0$, $a_0=a$ and $a_n=b$.  An element
$a\in\E$ is in \emph{normal form} if there is no $b$ such that
$a\reduct b$.  We say that $b$ is a normal form of $a$ if $a\reduct[*]b$
and $b$ is in normal form.
A reduction system $(\E,\reduct)$ is \emph{confluent} if for all
$a,b,c\in\E$ $a\invreduct[*]c\reduct[*]b$ implies
$a\reduct[*]d\invreduct[*]b$ for some $d\in\E$.  This notion of
confluence is implied by the following statement: for
all $a,b,c\in\E$, $a\invreduct[]c\reduct[] b$ implies that either
$a\reduct[]d\invreduct[]b$ for some $d\in\E$, or $a=b$.
A reduction system is \emph{normalizing} if every element has a normal
form, and it is \emph{terminating} if there is no infinite chain
$a_0\reduct a_1\reduct\cdots$.  A terminating reduction system is also
normalizing.
In a confluent reduction system every element has at most one normal
form. If in addition it is also normalizing, then the normal form is
unique.

We now define the abstract reduction system introduced by the urgent
transitions of an \iosa.

\begin{definition}\label{def:urgentabstractreduction}
  Given an \iosa\ $\I=(\states,\actions,\clocks,\trans_{\I},C_0,s_0)$,
  define the abstract reduction system $\uars_I$ as
  $(\states\times\partsof(\clocks)\times\N_0,\reduct)$ where
  $(s,C,n)\reduct (s',C\cup C',n+1)$ if and only if there exists
  $a\in\coactions$ such that $s\trans[\emptyset,a,C'] s'$.
\end{definition}

An \iosa\ is \emph{non-Zeno} if there is no loop of urgent actions.
The following result can be straightforwardly proven.

\begin{proposition}\label{prop:uarnormandconfl}
  Let the \iosa\ $\I$ be closed and confluent.  Then $\uars_\I$ is
  confluent, and hence every element has at most one normal form.
  Moreover, an element $(s,C,n)$ is in normal form iff $s$ is stable
  in $\I$.
  If in addition $\I$ is non-Zeno, $\uars_\I$ is also terminating and
  hence every element has a unique normal form.
        \end{proposition}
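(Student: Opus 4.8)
The plan is to prove the four assertions in order, leaning on the general facts about abstract reduction systems recalled just before the statement. For confluence of $\uars_\I$ I would not try to verify confluence directly but instead check the stronger one-step condition quoted in the text: that $x\invreduct z\reduct y$ implies either $x=y$ or $x\reduct d\invreduct y$ for some $d$. So fix $z=(s,C,n)$ with two reduction steps out of it. By the definition of $\reduct$ these arise from urgent transitions $s\trans[\emptyset,a,C_1]s_1$ and $s\trans[\emptyset,b,C_2]s_2$ with $a,b\in\coactions$, producing $x=(s_1,C\cup C_1,n+1)$ and $y=(s_2,C\cup C_2,n+1)$. If the two transitions coincide then $x=y$ and the escape clause of the one-step condition holds. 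Otherwise, since $\I$ is confluent with respect to the pair $a,b$ (this pair may have equal components, as in Fig.~\ref{fig:confluence-is-det} where both labels are $\tau$), there is a state $s_3$ with $s_1\trans[\emptyset,b,C_2]s_3$ and $s_2\trans[\emptyset,a,C_1]s_3$.

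These two closing transitions are again urgent, hence they induce reductions $x\reduct(s_3,(C\cup C_1)\cup C_2,n+2)$ and $y\reduct(s_3,(C\cup C_2)\cup C_1,n+2)$. Because set union is commutative and both branches increment the counter exactly twice, the two right-hand sides are the \emph{same} triple $d=(s_3,C\cup C_1\cup C_2,n+2)$. Thus the fork closes in a single step on each side, the one-step condition is satisfied, and by the recalled result $\uars_\I$ is confluent; consequently every element has at most one normal form.

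For the normal-form characterization I would argue directly from the two definitions. An element $(s,C,n)$ is in normal form exactly when no $a\in\coactions$ admits a transition $s\trans[\emptyset,a,C']s'$ (recall that by restriction~\ref{def:iosa:input-and-commit-are-reactive} every urgent transition already has empty clock guard, matching the shape of $\reduct$). This is where closedness is used: since $\inactions=\emptyset$ we have $\coactions\subseteq\outactions$, so $\coactions\cap\outactions=\coactions$, and the absence of such a transition is precisely the definition of $\stable{s}$. Hence $(s,C,n)$ is in normal form iff $s$ is stable, independently of $C$ and $n$.

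Finally, assume $\I$ is non-Zeno. An infinite reduction chain $(s_0,C_0,0)\reduct(s_1,C_1,1)\reduct\cdots$ projects onto an infinite run of consecutive urgent transitions $s_0\trans s_1\trans\cdots$ in $\I$, which non-Zenoness forbids; hence $\uars_\I$ is terminating, therefore normalizing, and combined with confluence (at most one normal form) every element then has a unique normal form. I expect this last step to be the main obstacle, since it requires bridging ``no urgent loop'' to ``no infinite urgent run'': the two coincide when the urgent-reachable part of $\states$ is finite (an infinite urgent path must then revisit a state and so contain a loop), and I would either make this finiteness explicit or read non-Zenoness directly as the absence of infinite urgent runs. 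By contrast, the confluence argument is essentially the commuting-diamond computation above and poses no real difficulty.
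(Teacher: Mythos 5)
Your proof is correct. Note that the paper offers no proof of this proposition at all --- it is dismissed with ``can be straightforwardly proven'' --- so there is nothing to compare against step by step; your argument is clearly the intended one. The key observation is exactly right: because the paper's notion of confluence is \emph{stronger} than Milner's (the two transitions must converge in a \emph{single} state, with the \emph{same} clock sets $C_1$, $C_2$), the fork in $\uars_\I$ closes in one step on each side, and commutativity of union together with the synchronized counter $n{+}2$ makes the two reducts literally the same triple, so the quoted one-step criterion applies. Your handling of the degenerate case (identical transitions, escape clause $x=y$) is also the right reading of Def.~\ref{def:confluence}; taken literally over \emph{all} pairs of transitions, that definition would fail even on the paper's own example in Fig.~\ref{fig:confluence-is-det}, so it must be understood as quantifying over distinct transitions, which is how you use it. The normal-form characterization is where closedness is needed, precisely as you say: $\inactions=\emptyset$ gives $\coactions\cap\outactions=\coactions$, so ``no reduction out of $(s,C,n)$'' coincides with $\stable{s}$.

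Your caveat on the last claim is a genuine observation, and it concerns the paper's statement rather than your proof: since $\states$ is only required to be denumerable, ``no loop of urgent actions'' does not exclude an infinite chain of urgent transitions through pairwise distinct states, so non-Zenoness as literally defined does not imply termination of $\uars_\I$ in general. Either the urgent-reachable part of the state space must be finite (then any infinite urgent run revisits a state and yields a loop), or non-Zenoness must be read as the absence of infinite urgent runs --- which is the semantically natural reading, as an infinite sequence of zero-delay actions is exactly a Zeno behaviour. Making one of these two readings explicit, as you propose, is the right way to discharge the termination claim.
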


\section{Weak determinism}\label{sec:weak:det}

As already shown in Fig.~\ref{fig:confluence-is-det}, the
non-determinism introduced by confluence is spurious.
In this section, we show that closed confluent \iosa{s} behave
deterministically in the sense that the stochastic behaviour of the
model is the same, regardless the way in which non-determinism is
resolved.
Thus, we say that a closed \iosa\ is \emph{weakly deterministic} if
\begin{inparaenum}[(i)]
\item  almost surely at most one discrete non-urgent transition is enabled
  at every time point,
\item  the election over enabled urgent transitions does not affect the non
  urgent-behavior of the model, and
\item  no non-urgent output and urgent output are enabled simultaneously.
\end{inparaenum}
To avoid referring explicitly to time in (i), we say instead that a closed
\iosa\ is weakly deterministic if it almost never reaches a state in which
two different non-urgent discrete transitions are enabled.  Moreover,
to ensure (ii), we define the following weak transition.

For this definition and the rest of the section we will assume that
the \iosa\ is closed and all its urgent actions have been abstracted,
that is, all actions in $\coactions$ have been renamed to $\tau$.

\begin{definition}\label{def:weaktransition}
  For a non stable state $s$, and $v\in\R^N$, we define
  $(s,\vec{v})\Trans[C]_n\mu$ inductively by the following rules:
  \begin{gather*}
    \text{\emph{(T1)}}\quad
    \dedrule{\begin{array}{c}s\trans[\emptyset,\tau,C]s' \\ \stable{s'}\end{array}}
            {(s,\vec{v})\Trans[C]_1\mu^{\vec{v}}_{C,s'}}
    \qquad\quad
    \text{\emph{(T2)}}\quad
    \dedrule{\begin{array}{c}s\trans[\emptyset,\tau,C']s' \\
             \forall {\vec{v}'\in\R^N}:\exists {C'',\mu'}: (s',\vec{v}')\Trans[C'']_n\mu'\end{array}}
            {(s,\vec{v})\Trans[C'\cup C'']_{n+1}\hat{\mu}}  \end{gather*}
  where
  $\mu^{\vec{v}}_{C,s}$ is defined as in Def.~\ref{def:iosasemantic} and
  $\hat{\mu}=\int_{\states\times\R^N} f_n^{C''}d\mu^{\vec{v}}_{C',s'}$, with
  $f_n^{C''}(t,\vec{w})= \nu$, if $(t,\vec{w})\Trans[C'']_n\nu$, and
  $f_n^{C''}(t,\vec{w})= \emptyfun$ otherwise.
  We define the \emph{weak transition} $(s,\vec{v})\Trans[]\mu$ if
  $(s,\vec{v})\Trans[C]_n\mu$ for some $n\geq 1$ and
  $C\subseteq\clocks$.
\end{definition}

As given above, there is no guarantee that $\Trans[C]_n$ is well
defined.  In particular, there is no guarantee that $f_n^{C''}$ is a
well defined measurable function.
We postpone this to Lemma~\ref{lm:weak:transition} below.

With this definition, we can introduce the concept of weak
determinism:

\begin{definition}\label{def:deterministiciosa}
  A closed \iosa\ $\I$ is \emph{weakly deterministic} if $\Trans[]$ is
  well defined in $\I$ and, in $P(\I)$,
  any state $(s,v) \in\pstates$ that satisfies one of the following
  conditions is almost never reached from any
  $(\init,v_0)\in\pstates$:
  \begin{inparaenum}[(a)]
  \item\label{def:determfirst}    $s$ is stable and $\cup_{a\in\actions\cup\{\init\}}\transitions_a
    (s,v)$ contains at least two different probability measures,
  \item\label{def:determsecond}    $s$ is not stable, $(s,v)\Trans[]\mu$, $(s,v)\Trans[]\mu'$ and
    $\mu\neq\mu'$, or
  \item\label{def:determthird}    $s$ is not stable and $(s,v)\trans[a]\mu$ for some
    $a\in\outactions\setminus\coactions$.
  \end{inparaenum}
\end{definition}

By ``almost never'' we mean that the measure of the set of all paths
leading to any measurable set in $\borel(\pstates)$ containing only
states satisfying (a), (b), or (c) is zero.
Thus, Def.~\ref{def:deterministiciosa} states that, in a weakly
deterministic \iosa, a situation in which a non urgent output action
is enabled with another output action, being it urgent (case (c)) or
non urgent (case (a)), or in which sequences of urgent transitions lead to
different stable situations (case (b)), is almost never reached.

For the previous definition to make sense we need that $\P(\I)$
satisfies \emph{time additivity}, \emph{time determinism}, and
\emph{maximal progress}~\cite{Yi90:concur}.  This is stated in the
following theorem whose proof follows as
in~\cite[Theorem~16]{DArgenio2016}.
\begin{theorem}
  Let $\I$ be an \iosa\ $\I$.  Its semantics $\P(\I)$
  satisfies, for all $(s,\vec{v})\in\pstates$, $\act\in\outactions$ and
  $d,d'\in\R_{>0}$,
  \begin{inparaenum}[(i)]
  \item    $\transitions_\act(s,\vec{v})\neq\emptyset \ \Rightarrow \
    \transitions_d(s,\vec{v})=\emptyset$
    (maximal progress),
  \item    $\mu,\mu'\in\transitions_d(s,\vec{v}) \ \Rightarrow \ \mu=\mu'$
    (time determinism), and
  \item    ${\delta^{-d}_{(s,\vec{v})}{\in}\transitions_d(s,\vec{v})}
    \wedge
    {\delta^{-d'}_{(s,\vec{v}-d)}{\in}\transitions_{d'}(s,\vec{v}-d)}
    \ \iff \
    {\delta^{-(d+d')}_{(s,\vec{v})}{\in}\transitions_{d+d'}(s,\vec{v})}$
    (time additivity).
  \end{inparaenum}
\end{theorem}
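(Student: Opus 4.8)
The plan is to unfold Definition~\ref{def:iosasemantic} for the time transitions $\transitions_d$ and isolate two structural observations that reduce all three properties to elementary bookkeeping. I will write $E(s)=\{i\mid\exists \act\in\outactions, C'\subseteq\clocks, s'\in\states: s\trans[\{x_i\},\act,C']s'\}$ for the indices of clocks that guard some output out of $s$, and $m(s,\vec{v})=\min\{\vec{v}(i)\mid i\in E(s)\}$ for the threshold appearing in the definition, adopting the convention $\min\emptyset=+\infty$ so that the \emph{patient} states ($E(s)=\emptyset$) are covered uniformly. The two observations are: (1) both gatekeepers of $\transitions_d(s,\vec{v})$ — namely the symbolic guard requiring that \emph{no} urgent $\act\in\outactions\cap\coactions$ satisfies $s\trans[\_,\act,\_]\_$, and the index set $E(s)$ — depend on the state $s$ alone, not on $\vec{v}$ or $d$; and (2) since $E(s)$ is independent of the valuation, a uniform shift of all clocks gives $m(s,\vec{v}-d)=m(s,\vec{v})-d$. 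Whenever it is nonempty, $\transitions_d(s,\vec{v})$ is exactly the singleton $\{\delta^{-d}_{(s,\vec{v})}\}$.

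Given this, I would dispatch \emph{time determinism} (ii) in one line: by the definition $\transitions_d(s,\vec{v})$ is either $\emptyset$ or a singleton, so any $\mu,\mu'$ in it coincide. For \emph{maximal progress} (i) I would split on the enabled output $\act\in\outactions$ witnessing $\transitions_\act(s,\vec{v})\ne\emptyset$, i.e.\ on a transition $s\trans[C,\act,C']s'$ whose guard $\bigwedge_{x_i\in C}\vec{v}(i)\le 0$ holds. If $\act$ is urgent then restriction~\ref{def:iosa:input-and-commit-are-reactive} forces $C=\emptyset$, so $s\trans[\emptyset,\act,\_]\_$ violates the symbolic guard and $\transitions_d(s,\vec{v})=\emptyset$ for all $d$. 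If $\act$ is non-urgent then restriction~\ref{def:iosa:output-is-generative} makes $C=\{x_i\}$ a singleton with $\vec{v}(i)\le 0$; as $i\in E(s)$ this forces $m(s,\vec{v})\le 0$, leaving no $d>0$ with $d\le m(s,\vec{v})$, so again $\transitions_d(s,\vec{v})=\emptyset$.

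The substance of the argument is \emph{time additivity} (iii), which I would prove by translating membership into the inequality $0<d\le m(s,\vec{v})$ (plus the symbolic guard) and using observation (2). For the forward implication, the two hypotheses give $0<d\le m(s,\vec{v})$ and $0<d'\le m(s,\vec{v}-d)=m(s,\vec{v})-d$; adding them yields $0<d+d'\le m(s,\vec{v})$, and since both sub-steps sit over the same symbolic state $s$ the urgent guard is still satisfied, so $\delta^{-(d+d')}_{(s,\vec{v})}\in\transitions_{d+d'}(s,\vec{v})$. For the converse, $0<d+d'\le m(s,\vec{v})$ together with $d,d'>0$ splits as $0<d\le m(s,\vec{v})$ and $0<d'\le m(s,\vec{v})-d=m(s,\vec{v}-d)$, which are precisely the conditions recovering the two sub-steps.

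I do not expect a genuine obstacle here — the result is mechanical once the definition is unwound — so the real work is hygiene around the degenerate cases. Specifically, I will fix the convention $\min\emptyset=+\infty$ up front so the patient states need no separate treatment, and I will flag explicitly that in both directions of additivity the intermediate configuration $(s,\vec{v}-d)$ retains the symbolic component $s$; this is what guarantees the blocking guard and the set $E(s)$ are identical across the two sub-steps, and hence what legitimises splitting and recombining the delay without ever re-examining the discrete transition structure.
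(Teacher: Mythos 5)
Your proposal is correct, and it is essentially the argument the paper intends: the paper gives no inline proof but defers to the proof of Theorem~16 in the cited predecessor work, which proceeds by the same mechanical unfolding of the definition of $\transitions_d$ that you carry out (your case~(i) treatment of the new urgency guard --- where an enabled urgent output falsifies the symbolic side-condition of $\transitions_d$, and an enabled non-urgent output forces the minimum to be $\leq 0$ --- is exactly the adaptation needed for the extended semantics). Your bookkeeping conventions ($\min\emptyset=+\infty$ for patient states, the shift identity $m(s,\vec{v}-d)=m(s,\vec{v})-d$, and the observation that both guards depend only on $s$) are sound and suffice for all three items.
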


The next lemma states that, under the hypothesis that the
\iosa\ is closed and confluent, $\Trans[C]_n$ is well defined.
Simultaneously, we prove that $\Trans[C]_n$ is deterministic.

\begin{lemma}\label{lm:weak:transition}
  Let $\I$ be a closed and confluent \iosa.  Then, for all $n\geq 1$,
  the following holds:
  \begin{enumerate}
  \item    If $(s,\vec{v})\Trans[C]_n\mu$ then there is a stable state $s'$
    such that
    \begin{inparaenum}[(i)]
    \item      $\mu=\mu^{\vec{v}}_{C,s'}$,
    \item      $(s,C',m)\reduct[*](s',C'{\cup}C,m{+}n)$ for all
      $C'\subseteq\clocks$ and $m\geq0$, and
    \item      if $(s,\vec{v}')\Trans[C']_n\mu'$ then $C'=C$ and moreover, if
      $\vec{v}'=\vec{v}$, also $\mu'=\mu$; and
    \end{inparaenum}
  \item    $f^C_n$ is a measurable function.
  \end{enumerate}
\end{lemma}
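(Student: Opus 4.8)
The plan is to prove all four items by a single induction on $n$, letting the abstract reduction system $\uars_\I$ of Definition~\ref{def:urgentabstractreduction} carry the determinism of the construction: every weak transition of index $n$ will correspond to a length-$n$ reduction to a normal form, so that Proposition~\ref{prop:uarnormandconfl} (confluence and uniqueness of normal forms in $\uars_\I$) does the combinatorial work. In the base case $n=1$, a transition $(s,\vec v)\Trans[C]_1\mu$ can only come from rule~(T1), so $s\trans[\emptyset,\tau,C]s'$ with $\stable{s'}$ and $\mu=\mu^{\vec v}_{C,s'}$, giving~(i) immediately; item~(ii) is the single step $(s,C^\circ,m)\reduct(s',C^\circ\cup C,m+1)$ (with $C^\circ$ playing the role of the universally quantified accumulator); and for~(iii), any $(s,\vec v')\Trans[C'']_1\mu'$ arises from $s\trans[\emptyset,\tau,C'']s''$ with $\stable{s''}$, so $(s',C^\circ\cup C,m+1)$ and $(s'',C^\circ\cup C'',m+1)$ are both normal forms of $(s,C^\circ,m)$, hence equal by Proposition~\ref{prop:uarnormandconfl}, forcing $C=C''$, $s'=s''$, and then $\mu'=\mu$ when $\vec v'=\vec v$.

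For the inductive step, assume the claim for $n$ and let $(s,\vec v)\Trans[\bar C]_{n+1}\mu$, necessarily by~(T2): there is $s\trans[\emptyset,\tau,C']s'$ with every $(s',\vec v')$ admitting an index-$n$ weak transition, $\bar C=C'\cup C''$, and $\mu=\hat\mu$. Since the reductions in item~(ii) live on $\states\times\partsof(\clocks)\times\N$ and ignore valuations, the induction hypothesis pins down a single clock set $C''$ and a single stable target $s^\ast$ with $(s',\vec w)\Trans[C'']_n\mu^{\vec w}_{C'',s^\ast}$ for all $\vec w$ (this is also what makes the premise of~(T2) well formed). Item~(i) then follows by evaluating $\hat\mu=\int \mu^{\vec w}_{C'',s^\ast}\,d\mu^{\vec v}_{C',s'}$ clock by clock: a clock in $C'\cup C''$ is freshly sampled (it is reset by the inner kernel if it lies in $C''$, and is already $\mu_{x_i}$-distributed by the outer kernel if it lies in $C'\setminus C''$), while a clock outside $C'\cup C''$ keeps the value $\vec v(i)$; thus $\hat\mu=\mu^{\vec v}_{\bar C,s^\ast}$. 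Item~(ii) combines the step $(s,C^\circ,m)\reduct(s',C^\circ\cup C',m+1)$ with the hypothesis $(s',C^\circ\cup C',m+1)\reduct[*](s^\ast,C^\circ\cup\bar C,m+n+1)$.

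The crux is again item~(iii). A second weak transition $(s,\vec v')\Trans[C^\sharp]_{n+1}$ begins with a possibly different step $s\trans[\emptyset,\tau,D']t'$ followed by an index-$n$ transition reaching, by the hypothesis, a stable $u^\ast$ after exactly $n$ further reductions. Both weak transitions therefore reduce $(s,C^\circ,m)$ to a normal form in exactly $n+1$ steps, namely $(s^\ast,C^\circ\cup\bar C,m+n+1)$ and $(u^\ast,C^\circ\cup C^\sharp,m+n+1)$; Proposition~\ref{prop:uarnormandconfl} forces these to coincide, whence $s^\ast=u^\ast$ and $\bar C=C^\sharp$, and item~(i) gives $\mu=\mu'$ when $\vec v'=\vec v$. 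This is exactly where the confluence hypothesis is consumed, guaranteeing that different interleavings of urgent $\tau$-steps accumulate the same clocks and reach the same stable state.

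It remains to establish measurability of $f^C_n$. By item~(iii), whether $(t,\vec w)\Trans[C]_n$ holds depends only on $t$ (urgent steps carry the empty guard, so clock values never block them), and on the states that do admit it the value is the resetting kernel $\vec w\mapsto\mu^{\vec w}_{C,s'_t}$, while elsewhere $f^C_n\equiv\emptyfun$. As $\states$ is denumerable, $\pstates$ is a countable disjoint union of copies of $\R^N$, and on each piece $f^C_n$ is either constant or this kernel, whose measurability into $\Delta(\borel(\pstates))$ is precisely the one already established for $\P(\I)$; measurability on the whole space follows. I expect this measure-theoretic well-definedness to be the main obstacle: one must verify that the integral $\hat\mu$ is genuinely single valued and measurable, and both properties rest on first securing the determinism of item~(iii) via the normal-form uniqueness of $\uars_\I$, so the induction must be organised to deliver~(iii) before the measurability conclusion in each step.
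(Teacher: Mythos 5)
Your proposal is correct and follows essentially the same route as the paper's proof: induction on $n$, with (T1) plus uniqueness of normal forms in $\uars_\I$ (Prop.~\ref{prop:uarnormandconfl}) handling the base case and item 1.(iii), the clock-by-clock (Fubini-style) evaluation of $\hat{\mu}$ giving 1.(i), and concatenation of reductions giving 1.(ii). The only cosmetic difference is in item 2, where the paper checks measurability of $f^C_n$ by computing preimages of the generator sets $\Delta^q(\cdot)$ explicitly, whereas you decompose $\pstates$ into countably many slices $\{t\}\times\R^N$ on which $f^C_n$ is either constant or the reset kernel --- two phrasings of the same argument.
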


The proof of the preceding lemma uses induction on $n$ to prove item 1
and 2 simultaneously.  It makes use of the previous results on
rewriting systems in conjunction with measure theoretical tools such
as Fubini's theorem to deal with Lebesgue integrals on product spaces.
All these tools make the proof that confluence preserves weak
determinism radically different from those of
Milner~\cite{Milner1989:book} and Crouzen~\cite{Crouzen2014:phd}.

The following corollary follows by items 1.(ii) and 1.(iii) of Lemma~\ref{lm:weak:transition}. 
\begin{corollary}\label{cor:weak:transition}
  Let $\I$ be a closed and confluent \iosa.  Then, for all
  $(s,\vec{v})$, if $(s,\vec{v})\Trans[]\mu_1$ and
  $(s,\vec{v})\Trans[]\mu_2$, $\mu_1=\mu_2$.
\end{corollary}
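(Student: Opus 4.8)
The plan is to unfold the definition of the weak transition and reduce the claim to the determinism already packaged inside Lemma~\ref{lm:weak:transition}; the only genuine work is to argue that the two derivations witnessing $(s,\vec{v})\Trans[]\mu_1$ and $(s,\vec{v})\Trans[]\mu_2$ have the same length, after which the conclusion is immediate.

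First I would unfold $\Trans[]$ via Definition~\ref{def:weaktransition}: the hypothesis $(s,\vec{v})\Trans[]\mu_1$ means $(s,\vec{v})\Trans[C_1]_{n_1}\mu_1$ for some $n_1\geq 1$ and some $C_1\subseteq\clocks$, and likewise $(s,\vec{v})\Trans[C_2]_{n_2}\mu_2$ for some $n_2\geq 1$ and $C_2\subseteq\clocks$. Note that if I can establish $n_1=n_2$, then item 1.(iii) of Lemma~\ref{lm:weak:transition}, instantiated with $\vec{v}'=\vec{v}$, yields at once that $C_1=C_2$ and $\mu_1=\mu_2$, which is exactly the desired equality. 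So the whole argument hinges on matching the two derivation lengths.

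The crux is therefore to show $n_1=n_2$, and this is where item 1.(ii) enters. By item 1.(i) of Lemma~\ref{lm:weak:transition} there are stable states $s_1',s_2'$ with $\mu_i=\mu^{\vec{v}}_{C_i,s_i'}$ for $i=1,2$. Applying item 1.(ii) at $C'=\emptyset$ and $m=0$ gives the reductions $(s,\emptyset,0)\reduct[*](s_1',C_1,n_1)$ and $(s,\emptyset,0)\reduct[*](s_2',C_2,n_2)$ in the abstract reduction system $\uars_\I$. Since $\I$ is closed and confluent, Proposition~\ref{prop:uarnormandconfl} tells me that $\uars_\I$ is confluent and that a triple $(t,C,n)$ is in normal form precisely when $t$ is stable. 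As $s_1'$ and $s_2'$ are stable, both $(s_1',C_1,n_1)$ and $(s_2',C_2,n_2)$ are normal forms of $(s,\emptyset,0)$; confluence forces the normal form to be unique, so these two triples coincide. Comparing their third components gives $n_1=n_2$, completing the bridge.

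I do not expect a serious obstacle here: the heavy lifting, namely well-definedness of $\Trans[C]_n$, its determinism for a fixed index, and the translation into reductions of $\uars_\I$, is all discharged by Lemma~\ref{lm:weak:transition}. The single point requiring care is the routing order, that is, one must first equate the derivation lengths through Proposition~\ref{prop:uarnormandconfl} before invoking item 1.(iii), since 1.(iii) is stated only for weak transitions sharing the same subscript $n$. Having established $n_1=n_2$, the final step reads off $C_1=C_2$ and $\mu_1=\mu_2$ directly from 1.(iii), so no further computation is needed.
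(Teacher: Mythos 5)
Your proposal is correct and follows essentially the same route as the paper's own proof: unfold the weak transition into indexed transitions, use item 1.(i)/(ii) of Lemma~\ref{lm:weak:transition} to obtain reductions $(s,\emptyset,0)\reduct[*](s_i',C_i,n_i)$ to stable (hence normal-form) triples, invoke Proposition~\ref{prop:uarnormandconfl} to conclude the normal form is unique so that $n_1=n_2$ and $C_1=C_2$, and finish with item 1.(iii). Your explicit remark that 1.(iii) can only be applied after the indices are equated is exactly the (implicit) ordering in the paper's argument.
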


This corollary already shows that closed and confluent \iosa{s}
satisfy part (b) of Def.~\ref{def:deterministiciosa}.  In general, we
can state:

\begin{theorem}\label{th:iosacdeterministic}
  Every closed confluent \iosa\ is weakly deterministic.
\end{theorem}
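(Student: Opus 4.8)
To show that the closed confluent \iosa\ $\I$ is weakly deterministic I must verify the two clauses of Definition~\ref{def:deterministiciosa}: that $\Trans[]$ is well defined, and that in $\P(\I)$ a state satisfying one of (a), (b), (c) is almost never reached. The well-definedness of $\Trans[]$ is exactly Lemma~\ref{lm:weak:transition}, and clause (b) is immediate from Corollary~\ref{cor:weak:transition}, which shows that no state carries two distinct weak transitions; hence the set of states satisfying (b) is empty and trivially null. The whole remaining content lies in clauses (a) and (c), and my plan is to derive both from a single almost-sure invariant on the clock valuations along reachable paths.

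\textbf{The invariant and its consequences.} I would prove, by induction on the length of a path of $\P(\I)$ starting at $(\init,\vec v_0)$, that almost surely every reachable state $(s,\vec v)$ satisfies: (I1) $\vec v(x)\geq 0$ for every $x\in\activeck(s)$; (I2) at most one $x\in\activeck(s)$ has $\vec v(x)=0$; and (I3) if $s$ is not stable, then $\vec v(x)>0$ for every $x\in\activeck(s)$. Since $\enablingck(s)\subseteq\activeck(s)$ by condition~\ref{def:iosa:clock-never-go-off-early:ii} and $\activeck(s)=\enablingck(s)$ at stable states by condition~\ref{def:iosa:clock-never-go-off-early:iii}, and since in a closed \iosa\ the enabling clocks of $s$ are exactly the clocks controlling its non-urgent outputs (urgent outputs have empty guard), the invariant yields both clauses: (I1)--(I2) permit at most one non-urgent output to be enabled at a stable state, which is clause (a) --- because by restriction~\ref{def:iosa:clock-control-one-output} distinct non-urgent outputs are governed by distinct clocks and therefore contribute distinct measures to $\bigcup_{a}\transitions_a(s,\vec v)$ --- while (I3) leaves no non-urgent output enabled at any non-stable state, which is clause (c).

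\textbf{Maintaining the invariant.} The induction proceeds by cases on the last transition of $\P(\I)$. The $\transitions_\init$ step samples every clock from $\mu_x$ with $\mu_x(\R_{>0})=1$, so all clocks, in particular those in $\activeck(s_0)\subseteq C_0$ (condition~\ref{def:iosa:clock-never-go-off-early:i}), are positive almost surely. A $\transitions_d$ step fires only from a stable state, keeps the symbolic state fixed, and is bounded by the least value among the clocks controlling outputs, which at a stable state is the least active-clock value by condition~\ref{def:iosa:clock-never-go-off-early:iii}; hence active clocks stay $\geq 0$, only the minimal one can reach $0$, and two of them reaching $0$ simultaneously is the null event that two independently sampled continuous clock values coincide. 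A $\tau$-step from a non-stable $s$ has empty guard and either resets clocks (freshly positive) or preserves them, and by condition~\ref{def:iosa:clock-never-go-off-early:iv} any clock active in the target lies in the reset set or was already active in $s$, where (I3) makes it positive, so all active clocks of the target are positive. Finally, a non-urgent output $s\trans[\{x\},a,C']s'$ can fire only with $\vec v(x)=0$ from a stable $s$, where by (I2) $x$ is the unique zero active clock and all others are strictly positive; condition~\ref{def:iosa:clock-never-go-off-early:iv} then forces every clock active in $s'$ to be either reset or an active clock of $s$ distinct from $x$, hence positive. In every case the invariant is restored, with all active clocks strictly positive immediately after any discrete or initial step.

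\textbf{From the invariant to ``almost never'', and the main obstacle.} It remains to package ``almost never reached''. There are only countably many finite symbolic transition sequences, and along each one the single relevant bad event --- a sampled clock equal to $0$, or two active clocks attaining $0$ together --- is null under the product of the continuous clock measures appearing in $\P(\I)$; the countable union of these null sample sets is again null. Moreover the coincidences depend only on the samples, not on the scheduler's choice of which enabled transition or which delay to take, so the conclusion holds uniformly for every scheduler and every $(\init,\vec v_0)$, the delay non-determinism being tamed exactly by the time-additivity and time-determinism established for $\P(\I)$. The step I expect to be hardest is precisely this measure-theoretic bookkeeping: making rigorous that the continuous-coincidence null sets survive the integral (Fubini) structure of the \nlmp\ path measure, and arguing that the freedom to split a sojourn into several $\transitions_d$ steps creates no new way of landing exactly on a clock expiry --- essentially the same machinery already deployed in Lemma~\ref{lm:weak:transition}.
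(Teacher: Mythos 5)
Your overall strategy is the same as the paper's: dispatch clause (b) via Corollary~\ref{cor:weak:transition}, and derive clauses (a) and (c) from an almost-sure invariant on clock valuations propagated step by step (this is exactly the role of the paper's Lemma~\ref{lemma:one-step-invariant}, Corollary~\ref{cor:invariant}, Lemma~\ref{lemma:inv-enables-at-most-one} and Lemma~\ref{lem:nonurgentunreachable}). However, there is a genuine gap: your invariant (I1)--(I3) is too weak to be inductive. Consider a stable state $(s,\vec{v})$ with two active clocks $x_i,x_j$ sharing the same \emph{positive} value $\vec{v}(i)=\vec{v}(j)=c>0$. This state satisfies (I1)--(I3), yet the time step $\transitions_c$ is available (at a stable state the bound on $d$ is the least active-clock value), and it leads with probability one --- the time transition is a Dirac measure --- to a state in which two active clocks are simultaneously $0$, violating (I2) and in fact enabling two distinct non-urgent outputs, i.e.\ landing in the bad set of clause (a). Your preservation argument for the $\transitions_d$ case dismisses this as ``the null event that two independently sampled continuous clock values coincide'', but that event is null only with respect to the sampling that occurred earlier along the path; conditioned on the current state, it happens with probability one. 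In a state-indexed induction, the only way to exploit that earlier-sampling information is to record it in the invariant itself. This is precisely what the paper does: its invariant $\inv$ requires the values of distinct active clocks to be \emph{pairwise distinct} (not merely ``at most one equal to zero''), a property preserved exactly by time steps (all values shift by the same $d$) and re-established almost surely by discrete steps because reset clocks are sampled from continuous distributions (the sets $\notinv_{ij}$ in the proof of Lemma~\ref{lemma:one-step-invariant}). With pairwise distinctness added to your invariant, your case analysis goes through and essentially coincides with the paper's proof.

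A secondary weakness is the final ``countably many finite symbolic transition sequences'' packaging. Paths of $\P(\I)$ carry real-valued delays, so the family of sequences you would union over is not countable; your own worry about splitting a sojourn into several $\transitions_d$ steps is a symptom of this. The paper sidesteps the issue entirely: the one-step Lemma~\ref{lemma:one-step-invariant} quantifies over \emph{all} labels (including every $d\in\R_{>0}$) and all measures a scheduler may choose, and Corollary~\ref{cor:invariant} then follows by induction on the length of scheduled paths, so no enumeration of symbolic sequences is ever needed.
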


The rest of the section is devoted to discuss the proof of this theorem.  From now
on, we work with the closed confluent
\iosa\ $\I=(\istates,\clocks,\actions,\trans,s_0,C_0)$, with
$|\clocks|=N$, and its semantics
$\P(\I)=(\pstates,\borel(\pstates),\{\transitions_\act\mid\act\in\labels\})$.

The idea of the proof of Theorem~\ref{th:iosacdeterministic} is to
show that the property that all active clocks have non-negative values
and they are different from each other is almost surely an invariant
of $\I$, and that at most one non-urgent transition is enabled in
every state satisfying such invariant. Furthermore, we want to show
that, for unstable states, active clocks have strictly positive
values, which implies that non-urgent transitions are never enabled in
these states.
Formally, the invariant is the set
\begin{align}
  \inv =
  & \phantom{{}\cup{}}
         \{(s,\vec{v})\mid \stable{s} \text{ and }
           \forall x_i,x_j \in\activeck(s):
           i\neq j \Rightarrow \vec{v}(i)\neq\vec{v}(j)
           \land \vec{v}(i) \geq 0\}\notag\\
  & \cup \{(s,\vec{v})\mid \neg\stable{s} \text{ and }
           \forall x_i,x_j \in\activeck(s):
           i\neq j \Rightarrow \vec{v}(i)\neq\vec{v}(j)
           \vec{v}(i)>0\}\notag\\
  & \cup (\{\init\}\times\R^N)
\end{align}
with $\activeck$ as in Def.~\ref{def:iosac}.
Note that its complement is:
\begin{align}
  \notinv =
  & \phantom{{}\cup{}}
         \{(s,\vec{v})\mid            \exists x_i,x_j \in\activeck(s):
           i \neq j \land \vec{v}(i)=\vec{v}(j)\}\notag \\
  & \cup \{(s,\vec{v})\mid \stable{s} \text{ and }
           \exists x_i \in\activeck(s): \vec{v}(i)<0 \}\notag\\
  & \cup \{(s,\vec{v})\mid \neg\stable{s}\text{ and }
           \exists x_i \in\activeck(s):\vec{v}(i)\leq0\}
\end{align}

It is not difficult to show that $\notinv$ is measurable and, in
consequence, so is $\inv$.
The following lemma states that $\notinv$ is almost never reached in
one step from a state satisfying the invariant.

\begin{lemma}\label{lemma:one-step-invariant}
  If $(s,\vec{v})\in\inv$, $a\in\labels$, and
  $\mu\in\transitions_a(s,\vec{v})$, then $\mu(\notinv)=0$.
\end{lemma}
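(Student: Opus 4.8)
The plan is to prove the lemma by a case analysis on the kind of label $a\in\labels=\actions\cup\R_{>0}\cup\{\init\}$, relying throughout on three elementary consequences of the fact that every clock $x$ is sampled from a \emph{continuous} measure $\mu_x$ with $\mu_x(\R_{>0})=1$: a freshly sampled clock is almost surely strictly positive; two distinct freshly sampled clocks almost surely take different values; and a freshly sampled clock almost surely differs from any fixed real. These three facts, combined with a union bound over the finitely many pairs of clocks, dispatch every ``almost never'' obligation, so that the remaining reasoning about clocks whose values are merely \emph{preserved} by the transition is purely deterministic and rests on conditions \ref{def:iosa:input-and-commit-are-reactive}--\ref{def:iosa:clock-never-go-off-early} of Def.~\ref{def:iosac}.

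I would first dispose of the two easy cases. If $a=\init$ then $s=\init$ and $\mu=\delta_{s_0}\times\prod_{i=1}^N\mu_{x_i}$, so \emph{all} clocks are freshly sampled; in particular the clocks in $\activeck(s_0)$ are almost surely distinct and positive, so the successor lies in $\inv$ regardless of whether $s_0$ is stable. If $a=d\in\R_{>0}$ then $\mu=\delta_s\times\prod_{i=1}^N\delta_{\vec{v}(i)-d}$, and this transition exists only when $s$ is stable and $0<d\leq\min\{\vec{v}(i)\mid x_i\in\enablingck(s)\}$; since $s$ is stable, condition~\ref{def:iosa:clock-never-go-off-early:iii} gives $\activeck(s)=\enablingck(s)$, hence $\vec{v}(i)-d\geq 0$ for every active clock, while subtracting the constant $d$ preserves distinctness. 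Here $\mu(\inv)=1$ outright.

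The substantive case is $a\in\actions$ (recall $\I$ is closed, so $\actions=\outactions$ and the only urgent output is $\tau$), where $\mu=\mu^{\vec{v}}_{C',s'}$ for a transition $s\trans[C,a,C']s'$ meeting the enabling condition $\bigwedge_{x_i\in C}\vec{v}(i)\leq 0$. By condition~\ref{def:iosa:clock-never-go-off-early:iv} the active clocks of $s'$ split into \emph{resampled} ones (in $C'$) and \emph{preserved} ones (contained in $\activeck(s)\setminus C$), and the plan is to show that the preserved active clocks are in fact strictly positive and mutually distinct. Stability does the work. If $a=\tau$ then $C=\emptyset$ by~\ref{def:iosa:input-and-commit-are-reactive}, and since $\tau\in\coactions\cap\outactions$ the source $s$ cannot be stable, so the invariant at $s$ gives $\vec{v}(i)>0$ for every $x_i\in\activeck(s)$. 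If instead $a\in\outactions\setminus\coactions$, then $C=\{x\}$ by~\ref{def:iosa:output-is-generative} with $x\in\enablingck(s)\subseteq\activeck(s)$ and $\vec{v}(x)\leq 0$; combined with the invariant this forces $s$ stable and $\vec{v}(x)=0$, whence every \emph{other} active clock, being distinct from $x$ and nonnegative, is strictly positive, and $x$ itself is dropped from (or reset in) $\activeck(s')$ by~\ref{def:iosa:clock-never-go-off-early:iv}, so it never survives as a preserved clock carrying the value $0$. In both subcases the preserved active clocks are strictly positive and distinct, while the resampled clocks are almost surely positive and almost surely distinct from each other and from the preserved values; hence the successor satisfies $\inv$ (with the strict or nonstrict bound dictated by the stability of $s'$) almost surely, giving $\mu(\notinv)=0$.

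I expect the main obstacle to be the bookkeeping that guarantees strict positivity of the preserved active clocks at \emph{unstable} successors, and specifically the non-urgent output subcase: the clock enabling the transition sits exactly at $0$, which would violate the strict-positivity half of $\inv$ were it to persist as an active clock of an unstable $s'$. The argument hinges on the interplay of two structural facts --- that~\ref{def:iosa:clock-never-go-off-early:iv} removes this enabling clock from $\activeck(s')$ unless it is reset in $C'$, and that the distinctness clause of $\inv$ isolates the value $0$ to that single clock. Verifying that these facts interlock correctly across all subcases, rather than any measure-theoretic step, is where the care is needed.
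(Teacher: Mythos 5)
Your proof is correct and follows essentially the same route as the paper's: a case split on $a\in\{\init\}\cup\R_{>0}\cup\actions$, a finite union bound over the pairwise-equality and sign conditions making up $\notinv$, continuity of the clock distributions to dispatch all resampled clocks, and the invariant at the source together with restrictions \ref{def:iosa:input-and-commit-are-reactive}, \ref{def:iosa:output-is-generative} and \ref{def:iosa:clock-never-go-off-early} of Def.~\ref{def:iosac} for the preserved clocks. Your subcase split by $a=\tau$ versus $a\in\outactions\setminus\coactions$ coincides with the paper's split by $C=\emptyset$ versus $C=\{x_j\}$, and you even make explicit, via the strict-positivity clause of $\inv$ at unstable states, why the source must be stable in the non-urgent subcase --- a step the paper's proof asserts without justification.
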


From this lemma we have the following corollary

\begin{corollary}\label{cor:invariant}
  The set $\notinv$ is almost never reachable in $\P(\I)$.
\end{corollary}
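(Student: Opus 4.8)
The plan is to bootstrap the single-step Lemma~\ref{lemma:one-step-invariant} into a full reachability statement by induction on the length of a path in $\P(\I)$. First I would observe that every initial state already lies inside the invariant: by construction $\{\init\}\times\R^N\subseteq\inv$, so $(\init,v_0)\in\inv$ for every $v_0$. Applying Lemma~\ref{lemma:one-step-invariant} with $a=\init$ to the unique initial measure $\delta_{s_0}\times\prod_{i=1}^N\mu_{x_i}$ then shows that the first transition places the system in $\inv$ with probability one.

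For the inductive argument, fix any resolution of the internal non-determinism and let $P_n$ denote the resulting distribution over states after $n$ transitions (where each transition uses some label $a\in\labels$, covering both the discrete steps $\transitions_a$ with $a\in\actions$ and the time steps $\transitions_d$ with $d\in\R_{>0}$). The claim to establish is $P_n(\notinv)=0$ for all $n$, equivalently $P_n(\inv)=1$. The base case $n=0$ holds since $P_0=\delta_{(\init,v_0)}$ is concentrated on a point of $\inv$. For the step, I would write $P_{n+1}(\notinv)=\int_{\pstates}\mu_{(s,\vec{v})}(\notinv)\,dP_n(s,\vec{v})$, where $\mu_{(s,\vec{v})}\in\transitions_{a}(s,\vec{v})$ is the measure selected by the scheduler at $(s,\vec{v})$. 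By the induction hypothesis $P_n$ gives full mass to $\inv$, and by Lemma~\ref{lemma:one-step-invariant} the integrand vanishes at every state of $\inv$; hence the integrand is $P_n$-almost everywhere zero and the integral is $0$, giving $P_{n+1}(\notinv)=0$. Crucially, the bound $\mu_{(s,\vec{v})}(\notinv)=0$ holds for the measure chosen at \emph{any} source state in $\inv$, so the argument is completely insensitive to how the non-determinism is resolved.

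Finally, I would assemble the per-step estimates into the reachability statement. The event of ever reaching $\notinv$ is the countable union over $n\in\N$ of the events of being in $\notinv$ after $n$ transitions; since each has measure zero under every scheduler, countable subadditivity yields that the measure of the set of all paths from $(\init,v_0)$ that hit $\notinv$ is zero. This is exactly the meaning of ``almost never reachable'' fixed after Def.~\ref{def:deterministiciosa}.

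The main obstacle I anticipate is the measure-theoretic bookkeeping behind the identity $P_{n+1}(\notinv)=\int\mu_{(s,\vec{v})}(\notinv)\,dP_n$: one must verify that the scheduler's choice $(s,\vec{v})\mapsto\mu_{(s,\vec{v})}$ is measurable and that the path measure on $\P(\I)$ is constructed so that the distribution after $n{+}1$ transitions genuinely factors as this integral against $P_n$. These facts rest on the \nlmp\ machinery, in particular the measurability of each $\transitions_a$ into $H(\Delta(\borel(\pstates)))$ and the standard cylinder construction of the path space. Once they are in place, however, the entire substantive content is supplied by Lemma~\ref{lemma:one-step-invariant}, and what remains is the routine ``zero integrand, hence zero integral'' step combined with countable additivity.
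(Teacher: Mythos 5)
Your proposal is correct and follows essentially the same route as the paper, which in fact omits the detailed argument and states only that the corollary ``boils down to an inductive application of Lemma~\ref{lemma:one-step-invariant}'' using the scheduler and path-measure machinery for NLMPs from Wolovick's thesis. Your induction on path length, with the base case $(\init,v_0)\in\inv$, the one-step lemma killing the integrand over $\inv$-states regardless of the scheduler's choice, and countable subadditivity over $n$, is precisely that intended argument (with the one caveat, which your own final paragraph anticipates, that for history-dependent schedulers the factorization should be stated over length-$n$ path prefixes rather than over the state distribution $P_n$; the lemma's uniform state-wise bound makes this harmless).
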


The proof of the corollary requires the definitions related to
schedulers and measures on paths in NLMPs (see
\cite[Chap.~7]{Wolovick2012:phd} for a formal definition of scheduler
and probability measures on paths in NLMPs.)
We omit the proof of the corollary since it eventually boils down to
an inductive application of Lemma~\ref{lemma:one-step-invariant}.

The next lemma states that any stable state in the invariant $\inv$
has at most one discrete transition enabled. Its proof is the same as
that of \cite[Lemma~20]{DArgenio2016}.

\begin{lemma}\label{lemma:inv-enables-at-most-one}
  For all $(s,\vec{v})\in\inv$ with $s$ stable or $s=\init$, the set
  $\bigcup_{\act\in\actions\cup\{\init\}}\transitions_\act(s,\vec{v})$
  is either a singleton set or the empty set.
\end{lemma}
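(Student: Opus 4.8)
The plan is to split on the two cases permitted by the hypothesis, $s=\init$ and $s$ stable, and in each case to enumerate the enabled transitions directly from Def.~\ref{def:iosasemantic}. The case $s=\init$ is immediate: the only label whose transition relation fires at $\init$ is $\init$ itself, and $\transitions_\init(\init,\vec{v})=\{\delta_{s_0}\times\prod_{i=1}^N\mu_{x_i}\}$ is a singleton, while $\transitions_\act(\init,\vec{v})=\emptyset$ for every $\act\in\actions$ because $\I$ has no transition leaving $\init$. Hence the union is a singleton, as required.

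For the stable case I would first cut down the range of relevant actions. Since $\I$ is closed we have $\inactions=\emptyset$, and since $s$ is stable no urgent output (in particular $\tau$) is enabled; therefore the only labels that can contribute to $\bigcup_{\act}\transitions_\act(s,\vec{v})$ are the non-urgent outputs $\act\in\outactions\setminus\coactions$. By restriction~\ref{def:iosa:output-is-generative} each such transition has the form $s\trans[\{x_i\},\act,C']s'$ with a single controlling clock $x_i$, and by Def.~\ref{def:iosasemantic} it is enabled at $(s,\vec{v})$ exactly when $\vec{v}(i)\leq0$. Moreover this enabling clock lies in $\enablingck(s)$, which equals $\activeck(s)$ by restriction~\ref{def:iosa:clock-never-go-off-early:iii} because $s$ is stable; so every enabling clock is active.

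The heart of the argument is to show that at most one clock can be enabling at $(s,\vec{v})$. As $(s,\vec{v})\in\inv$ with $s$ stable, every active clock satisfies $\vec{v}(i)\geq0$ and distinct active clocks carry distinct values. An enabling (hence active) clock with $\vec{v}(i)\leq0$ must therefore have $\vec{v}(i)=0$, and the distinctness condition allows at most one active clock to equal $0$. Consequently there is at most one enabling clock, and by restriction~\ref{def:iosa:clock-control-one-output} that clock controls a unique transition $s\trans[\{x_i\},\act,C']s'$, thereby fixing $\act$, $C'$ and $s'$. It follows that $\bigcup_{\act\in\actions\cup\{\init\}}\transitions_\act(s,\vec{v})$ equals the singleton $\{\mu^{\vec{v}}_{C',s'}\}$ when some active clock is zero, and is empty otherwise.

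The main obstacle will be the bookkeeping that combines the three ingredients correctly. In particular, two distinct enabled transitions cannot be excluded by restriction~\ref{def:iosa:clock-control-one-output} alone, since that restriction only forbids two outputs sharing the \emph{same} clock; the decisive reduction instead comes from the invariant $\inv$, which forces all but at most one active clock to be strictly positive and so leaves a single candidate enabling clock even when the two outputs carry different labels or different clocks. Once this is established, uniqueness of the transition, and hence of the induced measure $\mu^{\vec{v}}_{C',s'}$, is immediate, and the whole argument mirrors that of \cite[Lemma~20]{DArgenio2016}.
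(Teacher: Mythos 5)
Your proof is correct and follows essentially the same route as the paper: the paper does not spell out the argument but defers to the proof of Lemma~20 in~\cite{DArgenio2016}, which is precisely the argument you reconstruct (closedness kills inputs, stability kills urgent outputs, restriction~\ref{def:iosa:output-is-generative} plus the semantics reduce enabledness to a single clock being ${\leq}0$, and the invariant's non-negativity and pairwise-distinctness of active clocks, via $\activeck(s)=\enablingck(s)$ from~\ref{def:iosa:clock-never-go-off-early}, leave at most one such clock, with restriction~\ref{def:iosa:clock-control-one-output} giving uniqueness of the resulting transition and measure). Your handling of the $s=\init$ case and of the adaptation to the urgency setting (restricting attention to stable states) is also exactly what the lemma requires.
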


The next lemma states that any unstable state in the invariant $\inv$
can only produce urgent actions.

\begin{lemma}
\label{lem:nonurgentunreachable}
  For every state $(s,\vec{v})\in\inv$, if $\neg\stable{s}$
  and $(s,\vec{v})\trans[a]\mu$, then $a\in\coactions$.
\end{lemma}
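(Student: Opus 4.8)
The plan is to prove the claim by contradiction, ruling out every label that is not urgent. Since $\I$ is closed we have $\inactions=\emptyset$, so every discrete action available at $s$ is an output, and the outputs split into the urgent ones (those in $\coactions$) and the non-urgent ones (those in $\outactions\setminus\coactions$); the task reduces to excluding the latter, together with the non-discrete labels $d\in\R_{>0}$ and $\init$.

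First I would dispose of the non-discrete labels. The label $\init$ is impossible because $\transitions_\init$ is defined only at the distinguished state $\init$, whereas here $s\in\istates$. A time label $d\in\R_{>0}$ is excluded by maximal progress: since $s$ is not stable, by the definition of $\stable{\cdot}$ there is some $b\in\outactions\cap\coactions$ with $s\trans[\emptyset,b,\_]\_$, and the clause defining $\transitions_d$ in Def.~\ref{def:iosasemantic} then forces $\transitions_d(s,\vec{v})=\emptyset$; hence no $\mu$ can arise from a time label. Thus any label $a$ producing a transition is a genuine output action.

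The core step treats a putative non-urgent output. Suppose, towards a contradiction, that $a\in\outactions\setminus\coactions$ with $(s,\vec{v})\trans[a]\mu$. By Def.~\ref{def:iosasemantic} there is an \iosa\ transition $s\trans[C,a,C']s'$ satisfying the firing condition $\bigwedge_{x_i\in C}\vec{v}(i)\leq 0$. Restriction~\ref{def:iosa:output-is-generative} makes $C$ a singleton, say $C=\{x_k\}$, so the firing condition reads $\vec{v}(k)\leq 0$. On the other hand, $s\trans[\{x_k\},\_,\_]\_$ witnesses $x_k\in\enablingck(s)$, and condition~\ref{def:iosa:clock-never-go-off-early:ii} gives $\enablingck(s)\subseteq\activeck(s)$, whence $x_k\in\activeck(s)$. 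Because $(s,\vec{v})\in\inv$ and $\neg\stable{s}$, the second clause of the invariant applies and yields $\vec{v}(i)>0$ for every active clock $x_i$, in particular $\vec{v}(k)>0$. This contradicts $\vec{v}(k)\leq 0$, so no non-urgent output can fire.

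Having excluded $\init$, every time label, and every non-urgent output, the only remaining possibility is $a\in\coactions$, which is the claim. I do not expect a genuine obstacle: the argument is a direct confrontation between the firing condition $\vec{v}(k)\leq 0$ imposed by a clock-controlled output and the strict positivity $\vec{v}(k)>0$ that the invariant imposes on active clocks at unstable states. The only point requiring care is the bookkeeping of which clause of $\inv$ is in force; it is \emph{essential} that $s$ is not stable, so that active clocks are strictly positive rather than merely non-negative, since at a stable state the non-negative bound alone would not rule out $\vec{v}(k)=0$.
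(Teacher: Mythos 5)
Your proof is correct and follows essentially the same route as the paper's: closedness removes inputs, the strict positivity of active (hence enabling) clocks at unstable invariant states contradicts the firing condition $\vec{v}(k)\leq 0$ of any non-urgent output, and instability kills all time transitions via maximal progress. Your treatment is merely more explicit (e.g., handling the $\init$ label and naming restriction~\ref{def:iosa:output-is-generative}), which the paper leaves implicit.
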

\begin{proof}
  First recall that $\I$ is closed; hence $\inactions=\emptyset$.
  If $(s,\vec{v})\in\inv$ and $\neg\stable{s}$ then $\vec{v}_i>0$ for
  all $x_i\in\enablingck(s)\subseteq\activeck(s)$. Therefore, by
  Def.~\ref{def:iosasemantic},
  $\transitions_\act(s,\vec{v})=\emptyset$ if
  $\act\in\outactions\setminus\coactions$.
  Furthermore, for any $d\in\R_{>0}$,
  $\transitions_d(s,\vec{v})=\emptyset$ since $s$ is not stable and
  hence $s\trans[\_,\actb,\_]\_$ for some
  $\actb\in\outactions\cap\coactions$.
  \qed
\end{proof}

Finally, Theorem~\ref{th:iosacdeterministic} is a consequence of
Lemma~\ref{lemma:inv-enables-at-most-one},
Lemma~\ref{lem:nonurgentunreachable}, Corollary~\ref{cor:invariant},
and Corollary~\ref{cor:weak:transition}.

\section{Sufficient conditions for weak determinism}\label{sec:ensuring-confluence}

Fig.~\ref{fig:composed} shows an example in which the composed
\iosa\ is weakly deterministic despite that some of its components are not
confluent.
The potential non-determinism introduced in state $s_1\pll s_4\pll s_6$
is never reached since urgent actions at states $s_0\pll s_4\pll s_6$
and $s_1\pll s_3\pll s_6$ prevent the execution of non urgent actions
leading to such state.
We say that state $s_1\pll s_4\pll s_6$ is not \emph{potentially
  reachable}.  The concept of potentially reachable can be
defined as follows.

\begin{definition}\label{def:pot:reach}
  Given an \iosa\ $\I$, a state $s$ is \emph{potentially reachable} if
  there is a path
  $s_0\trans[\_,a_0,\_]s_1\ldots,s_{n-1}\trans[\_,a_{n-1},\_]s_n=s$
  from the initial state, with $n\geq 0$, such that for all $0\leq i<n$, if
  $s_i\trans[\_,b,\_]\_$ for some $b\in\coactions\cap\outactions$
  then $a_i\in\coactions$.  In such case we call the path
  \emph{plausible}.
\end{definition}

Notice that none of the paths leading to  $s_1\pll s_4\pll s_6$ in
Fig.~\ref{fig:composed} are plausible.
Also, notice that an \iosa\ is bisimilar to the same \iosa\ when its
set of states is restricted to only potentially reachable states.

\begin{proposition}\label{prop:only:pot:reach}
  Let $\I$ be a closed \iosa\ with set of states $\states$ and let
  $\overline{\I}$ be the same \iosa\ as $\I$ restricted to the set of
  states
  $\overline{\states}=\{s\in\states\mid \text{ is potentially reachable in }\I\}$.
  Then $\I\bisim\overline{\I}$.
    \end{proposition}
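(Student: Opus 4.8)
The plan is to establish a bisimulation between $\I$ and $\overline{\I}$ by exhibiting the bisimulation relation at the level of the NLMP semantics and showing it relates the two initial states. The key observation is that bisimulation on \iosa{s} is lifted from bisimulation on their NLMP semantics, so I would work in $\P(\I)$ and $\P(\overline{\I})$. Since $\overline{\I}$ is $\I$ restricted to potentially reachable states, I would define the candidate relation $\Rel$ on the disjoint union of state spaces by relating $(s,\vec{v})$ in $\P(\I)$ with $(s,\vec{v})$ in $\P(\overline{\I})$ precisely when $s\in\overline{\states}$ (together with the analogous identification of the $\init$ states and the clock valuations). The intuition is that every transition leaving a potentially reachable state either lands in another potentially reachable state or is a non-urgent transition departing from an unstable state, and the latter is exactly what is pruned away.

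First I would verify that $\overline{\I}$ is well defined, i.e. that restricting to $\overline{\states}$ preserves the \iosa\ constraints of Def.~\ref{def:iosac}; this is routine because removing states and their incident transitions cannot violate conditions \ref{def:iosa:input-and-commit-are-reactive}--\ref{def:iosa:clock-never-go-off-early}, using the same $\activeck$ function restricted appropriately. Next, the heart of the argument is a \textbf{closure claim}: if $s$ is potentially reachable and $s\trans[C,a,C']s'$ is a transition that may be taken along a plausible path, then $s'$ is potentially reachable and the transition is retained in $\overline{\I}$. The plausibility side-condition in Def.~\ref{def:pot:reach} requires that whenever an urgent output is enabled at $s$ (i.e. $\neg\stable{s}$), only urgent actions are used to extend the path. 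I would therefore split on whether $s$ is stable: if $s$ is stable, every enabled transition extends a plausible path and $s'$ inherits potential reachability; if $s$ is unstable, only urgent transitions $a\in\coactions$ extend plausible paths, so exactly these are retained, and the pruned non-urgent transitions from unstable states are the only ones discarded.

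With the closure claim in hand, I would check the bisimulation transfer conditions at the NLMP level: for each label $\act\in\labels$ and each related pair of states, the sets $\transitions_\act(s,\vec{v})$ in $\P(\I)$ and $\P(\overline{\I})$ induce matching distributions over $\Rel$-equivalence classes. The delicate point is that $\P(\I)$ may contain extra transitions out of unstable states (the pruned non-urgent outputs), but these depart from states that, by maximal progress and the semantics of Def.~\ref{def:iosasemantic}, carry an enabled urgent action; one must argue these do not create observable mismatches when restricted to states reachable by plausible paths. Here I would lean on the fact that the relevant distributions $\mu^{\vec{v}}_{C',s'}$ assign probability one to the successor symbolic state $s'$, so measurability and the hit-$\sigma$-algebra conditions reduce to checking that successors remain on the same side of $\Rel$, which the closure claim guarantees. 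Finally, since $s_0$ is trivially potentially reachable (the empty path, $n=0$), the initial states are related, completing the bisimulation.

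The main obstacle I expect is not the combinatorial closure argument but the \textbf{measure-theoretic bookkeeping} of verifying that $\Rel$ is a genuine NLMP bisimulation in the sense of~\cite{DArgenioSTW12:mscs}: one must show that $\Rel$ is a measurable relation and that the transition functions respect it as measurable maps into $H(\Delta(\Sigma))$, rather than merely matching supports pointwise. Since the pruned transitions are precisely those from unstable states carrying enabled urgent outputs, and bisimulation must match behaviour label-by-label including the time labels $d\in\R_{>0}$, I would need to confirm that $\transitions_d$ agrees on related states --- which it does, because $\transitions_d(s,\vec{v})=\emptyset$ whenever an urgent output is enabled, so the discarded non-urgent outputs at unstable states never compete with a time step that $\overline{\I}$ would be missing.
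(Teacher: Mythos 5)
Your overall strategy coincides with the paper's: the paper also disposes of this proposition via the identity relation on the NLMP semantics, observing that the pruned states cannot cause a mismatch. But the paper's entire proof rests on one semantic observation that your proposal never makes, and your substitute for it does not work. The transitions that exist in $\I$ but not in $\overline{\I}$ are exactly those $s\trans[\{x\},a,C']s'$ where $s$ is potentially reachable but unstable, $s'$ is not potentially reachable, and $a\in\outactions\setminus\coactions$ is guarded by a clock $x$ (the \iosa\ is closed, so $a$ cannot be an input). By Def.~\ref{def:iosasemantic}, such a transition \emph{does} contribute a concrete transition $\mu^{\vec{v}}_{C',s'}\in\transitions_a(s,\vec{v})$ at every valuation with $\vec{v}(x)\le 0$. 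Since your relation $\Rel$ relates $(s,\vec{v})$ in $\P(\I)$ to $(s,\vec{v})$ in $\P(\overline{\I})$ for \emph{all} valuations $\vec{v}$, at any such state $\transitions_a(s,\vec{v})$ is nonempty on the $\I$ side and empty on the $\overline{\I}$ side, and the transfer condition fails outright. Neither of your two remarks touches this: the fact that $\mu^{\vec{v}}_{C',s'}$ is Dirac on its symbolic successor concerns only where the \emph{retained} transitions land, and the maximal-progress argument only shows that the \emph{time} transitions $\transitions_d$ agree --- maximal progress forbids the passage of time at unstable states, it does not disable a discrete non-urgent transition whose clock has already expired.

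What closes the gap is the observation the paper's proof is built on (formalized in Lemma~\ref{lem:nonurgentunreachable}): the guarding clock of a non-urgent output satisfies $x\in\enablingck(s)\subseteq\activeck(s)$, and on the relevant part of the state space --- the invariant $\inv$, where active clocks of unstable states are strictly positive --- the guard $\vec{v}(x)\le 0$ is false, so the pruned transitions are concretely vacuous in $\P(\I)$ as well. A careful version of the proof therefore restricts the identity relation to states in $\inv$ (the mismatching valuations above lie in $\notinv$ and are simply not related), and then uses Lemma~\ref{lemma:one-step-invariant} to check the transfer condition: measures emanating from invariant states assign probability zero to $\notinv$ and to pruned symbolic states, hence agree on all $\Rel$-closed measurable sets. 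The statement you actually need is about clock \emph{values} at unstable states, not about which symbolic states are potentially reachable; your closure claim, while correct, cannot deliver it.
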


Although we have not formally introduced bisimulation, it should
be clear that both semantics are bisimilar through the identity
relation since a transition $s\trans[\{x\},a,C]s'$ with $s$ unstable does
not introduce any concrete transition. (Recall the \iosa\ is
closed so there is no input action on $\I$.)

For a state in a composed \iosa\ to be potentially reachable,
necessarily each of the component states has to be potentially
reachable in its respective component \iosa.

\begin{lemma}
\label{lemma:streach}
  If a state $s_1\pll\cdots\pll s_n$ is potentially reachable in
  $\II_1\pll\cdots\pll \II_n$ then $s_i$ is potentially
  reachable in $\II_i$ for all $i=1,\dots,N$.
\end{lemma}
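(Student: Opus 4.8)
The plan is to prove the contrapositive-flavoured statement directly by projecting a plausible witnessing path from the composed \iosa\ down to each component. Suppose $s_1\pll\cdots\pll s_n$ is potentially reachable in $\II_1\pll\cdots\pll\II_n$. By Definition~\ref{def:pot:reach} there is a plausible path $t_0\trans[\_,a_0,\_]t_1\cdots t_{m-1}\trans[\_,a_{m-1},\_]t_m = s_1\pll\cdots\pll s_n$ from the initial composed state, in which every transition whose source has an enabled urgent output is itself labelled by an urgent action. The goal is to extract, for each fixed component index $i$, a plausible path in $\II_i$ ending in $s_i$. The natural tool is the projection $\project{i}$ that maps a composed state $u_1\pll\cdots\pll u_n$ to its $i$-th coordinate $u_i$.

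First I would examine how the parallel-composition rules \eqref{eq:par:l}, \eqref{eq:par:r}, \eqref{eq:par:s} in Table~\ref{tb:parcomp} act on coordinate $i$. For a single composed step $t_j\trans[C,a_j,C']t_{j+1}$ there are two cases: either component $i$ participates (rules R1/R3 with $a_j\in\actions_i$, or R2/R3 symmetrically), in which case $\project{i}(t_j)\trans[C_i,a_j,C_i']\project{i}(t_{j+1})$ is a genuine transition in $\II_i$; or component $i$ is idle (the step is fired by other components only), in which case $\project{i}(t_j)=\project{i}(t_{j+1})$. Deleting the idle steps and keeping the active ones yields a path in $\II_i$ from $\project{i}(t_0)=s_0^i$ (the initial state of $\II_i$) to $\project{i}(t_m)=s_i$. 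This establishes reachability of $s_i$; it remains to upgrade ``reachable'' to ``potentially reachable'', i.e.\ to show the extracted component path is plausible.

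The plausibility argument is the crux and is where I expect the main difficulty. I must show that for every retained step $\project{i}(t_j)\trans[\_,a_j,\_]\project{i}(t_{j+1})$, if $\project{i}(t_j)$ enables some urgent output $b\in\coactions_i\cap\outactions_i$ in $\II_i$, then $a_j\in\coactions$. The key observation is that an urgent output enabled in the component is also enabled in the composition: since $b$ is urgent and by compatibility (Definition~\ref{def:compatible}) the components agree on urgent actions, a transition $\project{i}(t_j)\trans[\emptyset,b,\_]\_$ in $\II_i$ lifts---via R1/R2 if $b$ is non-synchronising or via R3 together with the matching urgent transition guaranteed in the partner by the agreement condition---to a transition $t_j\trans[\emptyset,b,\_]\_$ with $b\in\coactions\cap\outactions$ in the composed automaton. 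Hence $t_j$ is unstable in the composition, so plausibility of the original composed path forces $a_j\in\coactions$, exactly what is needed. Here one must be careful that a synchronising urgent output of $\II_i$ genuinely has an enabling partner; this is precisely what the compatibility agreement $\actions_1\cap\coactions_2=\actions_2\cap\coactions_1$ (extended inductively to $n$ components) provides, and it is the subtle point deserving the most care.

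Finally I would assemble these pieces: the deletion of idle steps produces a well-formed path, its endpoints are correct by the projection identity, and the plausibility transfer just argued certifies that the retained path satisfies the condition of Definition~\ref{def:pot:reach}. Since $i$ was arbitrary, $s_i$ is potentially reachable in $\II_i$ for every $i=1,\dots,n$, which is the claim. The induction on the number $n$ of components (reducing the $n$-ary composition to binary via associativity of $\pll$) keeps the case analysis over the three rules manageable, and I would phrase the lifting lemma for urgent outputs as the reusable core of the argument.
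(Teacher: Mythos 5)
Your proof is correct and takes essentially the same route as the paper's: the paper also transfers plausibility from the composed path down to the component (phrased as an induction on the length of the plausible path with a contradiction at the final step, instead of your direct projection with idle steps deleted), and both arguments hinge on the identical key fact that an urgent output enabled in a component state is also enabled in the corresponding composed state. However, you misattribute the justification of that key fact---the very point you flag as the most delicate. The existence of a matching partner transition allowing rule \eqref{eq:par:s} to fire for a shared urgent output $b\in\coactions_i\cap\outactions_i$ is \emph{not} provided by the agreement condition $\actions_1\cap\coactions_2=\actions_2\cap\coactions_1$ of Def.~\ref{def:compatible}: that condition only guarantees that the partner also classifies $b$ as urgent. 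What guarantees that the partner actually \emph{has} a $b$-transition in its current state is input enabledness (restriction~\ref{def:iosa:input-enabled} of Def.~\ref{def:iosac}), combined with the fact that compatibility forces $b$ to be an input of any other component sharing it (since $\outactions_i\cap\outactions_j\subseteq\{\tau\}$ and $\tau$ never synchronises). This is exactly the justification the paper's proof invokes (``being $\II_2$ input enabled''). The agreement condition is instead what you need at the other end of your argument: to convert the conclusion $a_j\in\coactions$ (urgency in the composition) into $a_j\in\coactions_i$ (urgency in $\II_i$), which is what plausibility of the projected path literally requires. With these two attributions straightened out, your argument goes through.
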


By Theorem~\ref{th:iosacdeterministic}, it suffices to check whether
a closed \iosa\ is confluent to ensure that it is weakly deterministic.
In this section, and following ideas introduced
in~\cite{Crouzen2014:phd}, we build on a theory that allows us to
ensure that a closed composed \iosa\ is confluent in a compositional
manner, even when its components may not be confluent.
Theorem~\ref{theo:sufCondForDet} provides the sufficient conditions to
guarantee that the composed \iosa\ is confluent.
Because of Proposition~\ref{prop:compPreservesConfl}, it suffices to
check whether two urgent actions that are not confluent in a single
component are potentially reached.  Since potential reachability
depends on the composition, the idea is to overapproximate by
inspecting the components.  The rest of the section builds on concepts
that are essential to construct such overapproximation.

Let $\uen(s)=\{a\in\coactions\mid s\trans[\_,a,\_]\_\}$ be the set of
urgent actions enabled in a state $s$. We say that a set $B$ of output
urgent actions is spontaneously enabled by a non-urgent action $b$ if
$b$ is potentially reached and it transitions to a state enabling all
actions in $B$.

\begin{definition}  A set $B\subseteq\coactions\cap\outactions$ is \emph{spontaneously
    enabled} by $\act\in\actions\setminus\coactions$ in $\II$, if either
  $B=\emptyset$ or there are potentially reachable states $s$ and $s'$
  such that $s$ is stable, $s\trans[\_,\act,\_]s'$, and
  $B\subseteq\uen(s')$.
  $B$ is \emph{maximal} if for any $B'$ spontaneously enabled by $b$
  in $\II$ such that $B\subseteq B'$, $B=B'$.
                    \end{definition}

A set that is spontaneously enabled in a composed \iosa, can be
constructed as the union of spontaneously enabled sets in each of the
components as stated by the following proposition.  Therefore,
spontaneously enabled sets in a composed \iosa\ can be overapproximated
by unions of spontaneously enabled sets of its components.

\begin{proposition}\label{prop:overaprox:sp:enab}
  Let $B$ be spontaneously enabled by action $a$ in
  $\II_1\pll\ldots\pll\II_n$.
  Then, there are $B_1,\ldots,B_n$ such that each $B_i$ is
  spontaneously enabled by $a$ in $\II_i$, and $B=\bigcup_{i=1}^nB_i$.
  If in addition $B$ is maximal, there are $B_1,\ldots,B_n$ such that
  each $B_i$ is maximal spontaneously enabled by $a$ in $\II_i$,
  and $B\subseteq\bigcup_{i=1}^nB_i$.
      \end{proposition}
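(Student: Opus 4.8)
The plan is to take the witness for $B$ in the composed automaton and project it onto the components. Write $\II=\II_1\pll\cdots\pll\II_n$. The case $B=\emptyset$ is immediate (take every $B_i=\emptyset$), so assume $B\neq\emptyset$ and fix potentially reachable states $s=s_1\pll\cdots\pll s_n$ and $s'=s'_1\pll\cdots\pll s'_n$ with $s$ stable, $s\trans[\_,a,\_]s'$, and $B\subseteq\uen(s')$. Since $a\in\actions\setminus\coactions$ we have $a\neq\tau$, so by the rules of Table~\ref{tb:parcomp} this transition decomposes as follows: every component $i$ with $a\in\actions_i$ (the \emph{participating} ones) moves $s_i\trans[\_,a,\_]_i s'_i$, while each remaining component idles with $s_i=s'_i$. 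By Lemma~\ref{lemma:streach} every $s_i$ and every $s'_i$ is potentially reachable in $\II_i$.

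The first key step is to show that stability is inherited componentwise: if $s$ is stable then each $s_i$ is stable in $\II_i$. I would argue the contrapositive. Suppose $s_i\trans[\emptyset,c,\_]_i$ for some $c\in\coactions_i\cap\outactions_i$; then $c\in\coactions\cap\outactions$, and I lift this to a transition $s\trans[\emptyset,c,\_]$. If $\II_i$ is the only owner of $c$ (i.e.\ $c\notin\actions_j$ for $j\neq i$), rule~\eqref{eq:par:l}/\eqref{eq:par:r} applies directly. If $c$ is shared, then since synchronizable outputs are not shared, $c$ is an input of every other owner, and by the compatibility agreement on urgent actions (Def.~\ref{def:compatible}) it is an \emph{urgent} input there, hence fired by input enabledness (condition~\ref{def:iosa:input-enabled}); rule~\eqref{eq:par:s} then yields $s\trans[\emptyset,c,\_]$. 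Either way $s$ would enable an urgent output and not be stable, a contradiction.

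With this in hand I would set $B_i = B\cap\outactions_i\cap\uen_i(s'_i)$ for each $i$, where $\uen_i$ is the enabling set computed in $\II_i$. Since $\uen_i(s'_i)\subseteq\coactions_i$, automatically $B_i\subseteq\coactions_i\cap\outactions_i$ and $B_i\subseteq\uen_i(s'_i)$. To see $B=\bigcup_i B_i$, observe that because synchronizable outputs are not shared ($\outactions_i\cap\outactions_j\subseteq\{\tau\}$), each $c\in B$ has a unique output owner $\II_{i_0}$, and $c$ being enabled in $s'$ forces $c\in\uen_{i_0}(s'_{i_0})$; this owner must be participating, for otherwise $s'_{i_0}=s_{i_0}$ would be a stable state enabling the urgent output $c$, contradicting the previous step. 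Hence $c\in B_{i_0}$, and conversely each $B_i\subseteq B$. Finally each $B_i$ is spontaneously enabled by $a$ in $\II_i$: if $B_i=\emptyset$ this holds by definition, and if $B_i\neq\emptyset$ then $i$ is participating, so $(s_i,s'_i)$ is the required witness --- $s_i$ is stable and potentially reachable, $s_i\trans[\_,a,\_]_i s'_i$, $B_i\subseteq\uen_i(s'_i)$, and $a\in\actions_i\setminus\coactions_i$ because $a\notin\coactions$.

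For the maximal case it suffices to enlarge each $B_i$ just constructed to a maximal spontaneously enabled set $\widehat{B}_i\supseteq B_i$ in $\II_i$ by a routine maximality argument over the inclusion order of spontaneously enabled sets; since $B=\bigcup_i B_i\subseteq\bigcup_i\widehat{B}_i$, the claimed inclusion follows (the hypothesis that $B$ itself is maximal is in fact not needed for this direction). I expect the entire difficulty to lie in the first part: the stability-inheritance claim together with the observation that the output owner of each $c\in B$ must participate in $a$. The delicate point is the treatment of \emph{synchronized} urgent actions, resolved by combining input enabledness with the compatibility requirement that components agree on urgent actions; no measure-theoretic machinery is involved here.
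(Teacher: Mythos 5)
Your proof is correct and runs on the same engine as the paper's own proof (which is given for $n=2$ only, with an appeal to easy generalization): project the witness pair of states onto the components, get potential reachability of the projections from Lemma~\ref{lemma:streach}, and lift enabled urgent outputs of a component back to the composition --- unshared outputs lift directly, shared ones synchronize with input transitions that exist by input enabledness --- to contradict stability of the composed source state. Where you genuinely differ is the decomposition. The paper splits $B$ as $\bar B_i=B\cap\actions_i$ and replaces both your stability-inheritance lemma and your participation argument by a single inline case distinction: if $a\in\actions_2\setminus\actions_1$ then $s_1=s'_1$ and any $b\in\bar B_1$ would make the composed state unstable, a contradiction, so $a\in\actions_1$ and the projected transition is the witness. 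Your finer choice $B_i=B\cap\outactions_i\cap\uen_i(s'_i)$ costs a little more bookkeeping but buys type-correctness: $B\cap\actions_i$ may contain an urgent action that is an output of another component and merely an input of $\II_i$, and such a set cannot be ``spontaneously enabled in $\II_i$'' at all, since the definition demands a subset of $\coactions_i\cap\outactions_i$; your intersection with $\outactions_i$ repairs exactly this defect of the paper's decomposition. Two small blemishes, both shared with the paper rather than introduced by you: the phrase ``unique output owner'' fails for $c=\tau$ (every component owns $\tau$), though the argument survives because a composed $\tau$-transition still originates in some single component, so ``some owner with $\tau\in\uen_{i_0}(s'_{i_0})$'' is what you get and all you need; and both maximality arguments silently assume that every spontaneously enabled set extends to a maximal one, which is immediate only when the set of urgent outputs is finite.
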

\begin{proof}
  We only prove it for $\II_1\pll\II_2$. The generalization to any $n$
  follows easily.
  Let $\bar{B}_i=B\cap\actions_i$ for $i=1,2$ and note that
  $B=\bar{B}_1\cup\bar{B}_2$.  We show that $\bar{B}_1$ is
  spontaneously enabled by $a$ in $\II_1$.  The case of $\bar{B}_2$
  follows similarly.
  Since $B$ is spontaneously enabled by $a$ in $\II_1\pll\II_2$,
  there exist potentially reachable states $s_1\pll s_2$ and
  $s'_1\pll s'_2$, such that $s_1\pll s_2$ is stable,
  $s_1\pll s_2\trans[\_,a,\_]s'_1\pll s'_2$, and
  $B\subseteq\uen(s'_1\pll s'_2)$.
  First notice that $\bar{B}_1\subseteq\uen(s_1)$.  Also, suppose
  $\bar{B}_1\neq\emptyset$, otherwise $\bar{B}_1$ is spontaneously
  enabled by $a$ trivially.
  Consider first the case that $a\in \actions_2\setminus\actions_1$.
  By $\textrm{(R2)}$, $s_1=s'_1$, but, since there is some
  $b\in\bar{B}_1$, $s_1\trans[\_,b,\_]\_$ and hence
  $s_1\pll s_2\trans[\_,b,\_]\_$ rendering $s_1\pll s_2$ unstable, which
  is a contradiction.
  So $a\in\actions_1$ and $s_1\trans[\_,a,\_]s'_1$. By
  Lemma~\ref{lemma:streach}, $s_1$ and $s'_1$ are potentially
  reachable and, necessarily, $s_1$ is stable (otherwise $s_1\pll s_2$
  has to be unstable as shown before).  Therefore $\bar{B}_1$ is
  spontaneously enabled by $a$ in $\II_1$.
  The second part of the proposition is immediate from the first part.
                                        \qed
\end{proof}

Spontaneously enabled sets refer to sets of urgent output actions that
are enabled after some steps of execution.  Urgent output actions can
also be enabled at the initial state.

\begin{definition}  \label{def:init:set}
  A set $B\subseteq\coactions\cap\outactions$ is \emph{initial} in an
  \iosa\ $\II$ if $B\subseteq\uen(s_0)$, with $s_0$ being the initial
  state of $\II$.
  $B$ is \emph{maximal} if $B=\uen(s_0)\cap\outactions$.
                  \end{definition}

An initial set of a composed \iosa\ can be constructed as the union of
initial sets of its components.  In particular the maximal initial set
is the union of all the maximal sets of its components.  The proof
follows directly from the definition of parallel composition taking
into consideration that \iosa{s} are input enabled.

\begin{proposition}\label{prop:initAdit}
  Let $B$ be initial in $\II=(\II_1 \pll \ldots$ $\pll \II_n)$.  Then,
  there are $B_1,\ldots,B_2$, with $B_i$ initial of $\II_i$, $1\leq
  i\leq n$ and $B=\bigcup_{i=1}^nB_i$.
  Moreover,
  $\uen(s_0)\cap\outactions_\II=\bigcup_{i=1}^n\uen(s^0_i)\cap\outactions_i$.
        \end{proposition}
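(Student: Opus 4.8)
The statement to prove is Proposition~\ref{prop:initAdit}: for $\II = \II_1 \pll \cdots \pll \II_n$, any initial set $B$ decomposes as a union of initial sets $B_i$ of the components, and the maximal initial set satisfies $\uen(s_0)\cap\outactions_\II = \bigcup_{i=1}^n \uen(s_i^0)\cap\outactions_i$.

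\medskip

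My plan is to reduce to the binary case $\II_1 \pll \II_2$, as the paper does for Proposition~\ref{prop:overaprox:sp:enab}, and then lift to arbitrary $n$ by an easy induction on the associativity of parallel composition. For the binary case, recall that the initial state of $\II_1 \pll \II_2$ is $s_0^1 \pll s_0^2$ and that, by Definition~\ref{def:parcomp}, $\outactions_\II = \outactions_1 \cup \outactions_2$ and $\coactions_\II = \coactions_1 \cup \coactions_2$. Given an initial set $B \subseteq \coactions_\II \cap \outactions_\II$, so $B \subseteq \uen(s_0^1 \pll s_0^2)$, I would set $B_i = B \cap \actions_i$ for $i=1,2$ and show each $B_i$ is initial in $\II_i$, i.e.\ $B_i \subseteq \uen(s_i^0)$. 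Since $B \subseteq \outactions_1 \cup \outactions_2$ and outputs are not shared (compatibility forces $\outactions_1 \cap \outactions_2 \subseteq \{\tau\}$), every $a \in B$ lies in exactly one $\actions_i$ (modulo $\tau$, which I treat as belonging to whichever component exhibits it), so $B = B_1 \cup B_2$ holds trivially.

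\medskip

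The crux is the direction $B_i \subseteq \uen(s_i^0)$. Take $a \in B_i \subseteq B$, so there is a transition $s_0^1 \pll s_0^2 \trans[\_,a,\_]\_$ with $a \in \coactions_\II \cap \outactions_\II$ and $a \in \actions_i$. By compatibility the components agree on urgent actions, and since outputs do not synchronize, this composed transition must arise from rule $\textrm{(R1)}$ or $\textrm{(R2)}$ (not $\textrm{(R3)}$, which requires $a \in \actions_1 \cap \actions_2 \setminus \{\tau\}$ and would make $a$ a synchronized output, contradicting non-sharing of outputs). Hence the transition is produced by the $i$-th component alone, giving $s_i^0 \trans[\_,a,\_]\_$, i.e.\ $a \in \uen(s_i^0)$. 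This establishes $B_i \subseteq \uen(s_i^0)\cap\outactions_i$, so each $B_i$ is initial.

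\medskip

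For the maximality claim, I would apply the same reasoning to $B = \uen(s_0)\cap\outactions_\II$, which is maximal by Definition~\ref{def:init:set}. The inclusion $\uen(s_0)\cap\outactions_\II \subseteq \bigcup_i \uen(s_i^0)\cap\outactions_i$ is exactly what the decomposition above gives. The reverse inclusion uses input enabling: if $a \in \uen(s_i^0) \cap \outactions_i$, then $a$ is an urgent output enabled at $s_i^0$ in $\II_i$; because the other component is input enabled (Restriction~\ref{def:iosa:input-enabled}) it can always react to $a$ if $a$ is shared, and otherwise rule $\textrm{(R1)}$/$\textrm{(R2)}$ fires directly, so in either case $s_0^1 \pll s_0^2 \trans[\_,a,\_]\_$, whence $a \in \uen(s_0)\cap\outactions_\II$. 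I expect the main subtlety to be the bookkeeping around $\tau$ and shared \emph{input} names being outputs in one component, but since the proposition restricts attention to $\outactions_\II$ and outputs are not shared, this collapses cleanly; the proof is essentially a direct reading of the parallel-composition rules combined with input enabling.
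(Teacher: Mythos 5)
There is a genuine gap, and it lies in your treatment of rule $\text{(R3)}$. Compatibility only forbids \emph{shared outputs} ($\outactions_1\cap\outactions_2\subseteq\{\tau\}$); it does not prevent an urgent output of one component from being an urgent \emph{input} of another --- indeed that is exactly how synchronization works in \iosa\ (e.g.\ $c!!$ in $\II_1$ synchronizing with $c??$ in $\II_3$ in Fig.~\ref{fig:components}). So your claim that every $a\in B$ lies in exactly one $\actions_i$ is false, and your claim that the composed transition ``must arise from rule $\text{(R1)}$ or $\text{(R2)}$, not $\text{(R3)}$'' is also false: when $a\in\outactions_2\cap\inactions_1$, the transition $s_0^1\pll s_0^2\trans[\_,a,\_]\_$ is produced by $\text{(R3)}$ from an output move of $\II_2$ and an input move of $\II_1$. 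This breaks your decomposition $B_i = B\cap\actions_i$: in that situation $a\in B_1$ but $a\in\inactions_1$, so $B_1\not\subseteq\coactions_1\cap\outactions_1$ and $B_1$ is simply \emph{not} an initial set of $\II_1$ according to Def.~\ref{def:init:set}. The conclusion you claim to reach, $B_i\subseteq\uen(s_i^0)\cap\outactions_i$, therefore does not follow from your argument and is in fact false for your choice of $B_i$.

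The repair is to cut along outputs rather than along alphabets: take $B_i = B\cap\outactions_i\cap\uen(s_i^0)$ (the intersection with $\uen(s_i^0)$ only matters for $\tau$, which belongs to every $\outactions_i$ but need not be enabled in every component). Then for $a\in B$ with $a\neq\tau$ there is a unique $i$ with $a\in\outactions_i$, and that component necessarily participates in any composed transition labelled $a$: via $\text{(R1)}$/$\text{(R2)}$ when $a\notin\actions_j$ for $j\neq i$, and via $\text{(R3)}$ when $a$ is shared --- in which case it is an input of the other components, so they move too but the output move still comes from $\II_i$. Hence $a\in\uen(s_i^0)\cap\outactions_i$, and compatibility's agreement on urgent actions gives $a\in\coactions_i$, so each $B_i$ is initial in $\II_i$ and $B=\bigcup_i B_i$. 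Your argument for the reverse inclusion of the ``moreover'' part (using input enabledness of the non-owning components to fire $\text{(R3)}$, or $\text{(R1)}$/$\text{(R2)}$ when the action is not shared) is correct and is exactly the ingredient the paper alludes to; only the forward, decomposition direction needs the fix above.
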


We say that an urgent action triggers an urgent output action if the
first one enables the occurrence of the second one, which was not
enabled before.

\begin{definition}  \label{def:trigrel}
  Let $a \in \coactions$ and $b \in \coactions\cap\outactions$.
  $a$ \emph{triggers} $b$ in an \iosa\ $\II$ if there are potentially
  reachable states $s_1$, $s_2,$ and $s_3$ such that
  $s_1\trans[\_,a,\_]s_2\trans[\_,b,\_]s_3$ and, if $a\neq b$,
  $b\notin\uen(s_1)$.
                    
              \end{definition}

Notice that, for the particular case in which $a=b$, $b\notin\uen(s)$
is not required.
The following proposition states that if one action triggers another
one in a composed \iosa, then the same triggering occurs in a particular
component.

\begin{proposition}\label{prop:triggApprox}
  Let $a \in \coactions$ and $b \in \coactions\cap\outactions$ such
  that $a$ triggers $b$ in $\II_1\pll\ldots\pll \II_n$.  Then there is
  a component $\II_i$ such that $b\in \outactions_i$ and $a$ triggers
  $b$ in $\II_i$.
          \end{proposition}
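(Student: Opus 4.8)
The plan is to prove the statement for the binary composition $\II_1\pll\II_2$ and then obtain the general case by induction on $n$, writing $\II_1\pll\cdots\pll\II_n=\II_1\pll(\II_2\pll\cdots\pll\II_n)$ and recursing into whichever side carries the actual $b$-transition, exactly as in the proof of Proposition~\ref{prop:overaprox:sp:enab}. So I fix the hypothesis for $\II_1\pll\II_2$: there are potentially reachable composed states $u_1,u_2,u_3$ with $u_1\trans[\_,a,\_]u_2\trans[\_,b,\_]u_3$ and, if $a\neq b$, $b\notin\uen(u_1)$; write $u_k=p_k\pll q_k$ for the component projections. Since $b\in\coactions\cap\outactions$ and outputs never synchronize (Def.~\ref{def:parcomp}), the $b$-step is produced by a single component: there is an index $i$ with $b\in\outactions_i$ whose projection performs $b$. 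Without loss of generality take $i=1$, so $p_2\trans[\_,b,\_]_1 p_3$ (the output side of an (R3) step when $b$ is shared as an input of $\II_2$, and an (R1) step when $b=\tau$ or $b\in\actions_1\setminus\actions_2$). A short check using the compatibility requirement $\actions_1\cap\coactions_2=\actions_2\cap\coactions_1$ shows $b\in\coactions_1$, as demanded by Def.~\ref{def:trigrel}.

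The crux is to show that the $a$-step also moves component $1$, i.e.\ $p_1\trans[\_,a,\_]_1 p_2$. First I would suppose it does not; then $a$ is an (R2)-step with $a\in\actions_2\setminus\actions_1$, forcing $p_1=p_2$. If $a=b$ this is impossible, since $a=b\in\outactions_1\subseteq\actions_1$. If $a\neq b$, then from $p_1=p_2\trans[\_,b,\_]_1 p_3$ I get that $b$ is enabled at $p_1$ in $\II_1$; lifting this to the composition---directly by (R1) when $b\in\actions_1\setminus\actions_2$, and by (R3) together with input enabling (Def.~\ref{def:iosac}\ref{def:iosa:input-enabled}) of $\II_2$ when $b$ is shared---yields $b\in\uen(u_1)$, contradicting the triggering hypothesis. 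Hence $p_1\trans[\_,a,\_]_1 p_2$, and by the same compatibility argument used for $b$ one gets $a\in\coactions_1$.

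It then remains to assemble the triggering witness in $\II_1$ out of the path $p_1\trans[\_,a,\_]_1 p_2\trans[\_,b,\_]_1 p_3$. Lemma~\ref{lemma:streach} gives that $p_1,p_2,p_3$ are potentially reachable in $\II_1$. For the side condition in the case $a\neq b$, I must rule out that $b$ is enabled at $p_1$ in $\II_1$, i.e.\ that $p_1\trans[\_,b,\_]_1\_$: this is the same lifting argument as above, since were $b$ enabled at $p_1$ in $\II_1$ it would be enabled at $u_1$ in the composition, contradicting $b\notin\uen(u_1)$. Thus $a$ triggers $b$ in $\II_1$, completing the binary case.

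The one delicate point, and the step I expect to require the most care in the writeup, is the uniform treatment of the lifting ``$b$ enabled in $\II_1$ implies $b$ enabled in $\II_1\pll\II_2$'' across all the composition rules, together with the bookkeeping of the $b=\tau$ case (where $\tau$ is output in both components and the relevant transition is identified by which side actually fires). The contradiction with $b\notin\uen(u_1)$ must be seen to go through in every sub-case, invoking input enabling of the partner component precisely when $b$ synchronizes via (R3); once that lemma-like fact is isolated, both the localization of the $a$-step and the verification of the side condition reduce to a single reusable observation.
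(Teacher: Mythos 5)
Your proof follows essentially the same route as the paper's: reduce to the binary case, use the fact that $b\in\outactions_1$ (w.l.o.g.) forces the $b$-step to project onto a component-$1$ transition, show by contradiction that the $a$-step also moves component $1$ --- since otherwise $p_1=p_2$ and the enabledness of $b$ at $p_1$ in $\II_1$ lifts (via (R1)/(R3) and input enabling) to $b\in\uen(u_1)$, violating the triggering hypothesis --- and then assemble the witness in $\II_1$ using Lemma~\ref{lemma:streach} and the inclusion $\uen(p_1)\cap\outactions\subseteq\uen(u_1)$ for the side condition, treating $a=b$ separately. Your extra bookkeeping (the compatibility check that $a,b\in\coactions_1$, and the flagged $\tau$ subtlety in rules (R1)/(R2)) is detail the paper's own proof silently glosses over, so the two arguments match in both structure and substance.
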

\begin{proof}
  We only prove it for $\II_1\pll\II_2$. The generalization to any $n$
  follows easily.
  Because $b\in\coactions\cap\outactions$ necessarily
  $b\in\outactions_1$ or $b\in\outactions_2$.  W.l.o.g.\ suppose
  $b\in\outactions_1$.
  Since $a$ triggers $b$ in $\II_1\pll\II_1$,
  $s_1\pll s_2\trans[\_,a,\_]s'_1\pll s'_2\trans[\_,b,\_]s''_1\pll s''_2$
  with $s_1\pll s_2$, $s'_1\pll s'_2$, and $s''_1\pll s''_2$ being
  potentially reachable.

  Suppose first that $a\neq b$. Then $b\notin\uen(s_1\pll s_2)$.
  Recall that, by Lemma~\ref{lemma:streach}, $s_1$, $s'_1$, and
  $s''_1$ are potentially reachable in $\II_1$.
  Since $b\in\outactions_1$, $s'_1\trans[\_,b,\_]s''_1$.  Suppose
  $a\in\actions_2 \setminus\actions_1$.  Then, necessarily, $s_1=s'_1$
  which gives $b\in\uen(s_1)\cap\outactions\subseteq\uen(s_1\pll s_2)$, yielding a
  contradiction.  Thus, necessarily $a\in\coactions_1$ and hence
  $s_1\trans[\_,a,\_]s'_1$, by the definition of parallel composition.
  It remains to show that $b\notin\uen(s_1)$, but this is immediate
  since $\uen(s_1)\cap\outactions\subseteq\uen(s_1\pll s_2)$ and
  $b\notin\uen(s_1\pll s_2)$.
  Thus $a$ triggers $b$ in $\II_1$ in this case.
  If instead $a=b$, by the definition of parallel composition we
  immediately have that $s_1\trans[\_,b,\_]s'_1\trans[\_,b,\_]s''_1$,
  proving thus the proposition.
                                            \qed
\end{proof}

Proposition~\ref{prop:triggApprox} tells us that the triggering
relation of a composed \iosa\ can be overapproximated by the union of
the triggering relations of its components.
Thus we define:

\begin{definition}
  The \emph{approximate triggering relation} of $\II_1 \pll \ldots
  \pll \II_n$ is defined by
  ${\triggers} = \bigcup_{i=1}^n \{ (a,b) \mid a \text{ triggers } b \text{ in } \II_i\}$.
  Its reflexive transitive closure $\triggers^*$ is called
  \emph{approximate indirect triggering relation}.
        \end{definition}

The next definition characterizes all sets of urgent output actions
that are simultaneously enabled in any potentially reachable state of a
given \iosa.

\begin{definition}
  A set $B\subseteq\coactions\cap\outactions$ is an \emph{enabled set}
  in an \iosa\ $\II$ if there is a potentially reachable state $s$
  such that $B\subseteq\uen(s)$.  If $a\in B$, we say that $a$ is
  \emph{enabled} in $s$.
  Let $\ES_\II$ be the set of all enabled sets in $\II$.
          \end{definition}

If an urgent output action is enabled in a potentially reachable state
of a \iosa, then it is either initial, spontaneously enabled, or
triggered by some action.

\begin{theorem}\label{theo:enabledTheo}
  Let $b\in\coactions\cap\outactions$ be enabled in some potentially
  reachable state of the \iosa\ $\II$.  Then there is a set $B$ with
  $b\in B$ that is either initial or spontaneously enabled by some
  action $a\in\coactions$, or $b$ is triggered by some action
  $a\in\outactions\setminus\coactions$.
        \end{theorem}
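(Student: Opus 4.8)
The plan is to witness potential reachability of a state enabling $b$ by an explicit plausible path, and then to isolate the single step along that path at which $b$ passes from disabled to enabled; the nature of that step will decide which of the three alternatives holds. First I would fix a plausible path $s_0\trans[\_,a_0,\_]s_1\cdots s_{n-1}\trans[\_,a_{n-1},\_]s_n=s$ with $b\in\uen(s)$, which exists since $s$ is potentially reachable. If $b\in\uen(s_0)$, then $\{b\}\subseteq\uen(s_0)$, so $B=\{b\}$ (maximally $B=\uen(s_0)\cap\outactions$) is \emph{initial}, and the first alternative holds. Otherwise the set $\{i\mid b\notin\uen(s_i)\}$ is nonempty and does not contain $n$, so I set $j=\max\{i\mid b\notin\uen(s_i)\}$, obtaining $j<n$, $b\notin\uen(s_j)$, and $b\in\uen(s_{j+1})$: the step $s_j\trans[\_,a_j,\_]s_{j+1}$ is where $b$ becomes enabled.

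The bookkeeping I would do next is to keep every witness \emph{potentially} reachable, not merely reachable. Every prefix of a plausible path is again plausible, so $s_j$ and $s_{j+1}$ are potentially reachable. Moreover, since $b\in\uen(s_{j+1})$ is an urgent output, extending the prefix by a transition $s_{j+1}\trans[\_,b,\_]s''$ preserves plausibility (this step takes the urgent action $b$ from a state at which an urgent output is enabled), so any such $s''$ is potentially reachable too. These two facts supply exactly the potential-reachability hypotheses demanded by the definitions of spontaneously enabled and of the triggering relation.

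Finally I would split on the enabling action $a_j$. If $a_j\in\coactions$, I would collect into a set $B\ni b$ the urgent outputs that are freshly available at $s_{j+1}$ and conclude that $B$ is \emph{spontaneously enabled by} $a_j\in\coactions$, the witness being the potentially reachable transition $s_j\trans[\_,a_j,\_]s_{j+1}$ with $B\subseteq\uen(s_{j+1})$. If instead $a_j\notin\coactions$, then plausibility forces $s_j$ to be stable (a non-urgent step may be taken only from a state at which no urgent output is enabled), and taking the non-urgent output $a_j\in\outactions\setminus\coactions$ as the triggering action, the chain $s_j\trans[\_,a_j,\_]s_{j+1}\trans[\_,b,\_]s''$ together with $b\notin\uen(s_j)$ and potential reachability of $s_j,s_{j+1},s''$ shows that $b$ \emph{is triggered by} $a_j\in\outactions\setminus\coactions$; when $\II$ is closed ($\inactions=\emptyset$, the setting in which the result is applied) the non-urgent enabling action is necessarily such an output, so no further case remains. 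The main obstacle I anticipate is precisely the bridge between reachability and \emph{potential} reachability: showing that the prefix up to $s_{j+1}$ and its one-step $b$-extension remain plausible, and using plausibility to pin down the stability of the predecessor $s_j$ in the non-urgent case, since that stability is exactly what separates the spontaneous-enabling alternative from the triggering alternative.
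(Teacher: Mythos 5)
Your path surgery is sound and, at its core, it is the same argument as the paper's: the paper peels off the last transition of the plausible path by induction, while your choice of $j=\max\{i\mid b\notin\uen(s_i)\}$ locates the same critical step directly, and your observations that prefixes of plausible paths are plausible and that appending $s_{j+1}\trans[\_,b,\_]s''$ preserves plausibility (the appended label $b$ is urgent, so the plausibility condition at the new index is trivially met) are exactly the bookkeeping the paper leaves implicit. The genuine gap is in your final case split: you have attached the two alternatives to the wrong cases. By the paper's definitions, a set can only be \emph{spontaneously enabled} by an action $a\in\actions\setminus\coactions$, and only from a \emph{stable} source state; and only an action $a\in\coactions$ can \emph{trigger} $b$. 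Hence your urgent case ($a_j\in\coactions$) cannot yield spontaneous enabling --- indeed, if $a_j$ is an urgent output then $s_j$ is unstable, so the stability requirement fails outright --- and your non-urgent case ($a_j\notin\coactions$) cannot yield triggering, since the triggering relation is simply not defined for $a_j\in\outactions\setminus\coactions$. Note that the stability of $s_j$ which you correctly extract from plausibility in the non-urgent case is precisely the hypothesis of spontaneous enabling, not of triggering.

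You were likely misled by the theorem statement itself, whose attribution of the action classes ($a\in\coactions$ for spontaneous enabling, $a\in\outactions\setminus\coactions$ for triggering) is inconsistent with the paper's definitions; the paper's own proof establishes the version with the classes interchanged. The repair is a straight swap of your conclusions. If $a_j\in\coactions$: since $b\notin\uen(s_j)$ we have $a_j\neq b$ and the side condition of the triggering definition holds, and the chain $s_j\trans[\_,a_j,\_]s_{j+1}\trans[\_,b,\_]s''$ through potentially reachable states shows $b$ is triggered by $a_j$. If $a_j\in\actions\setminus\coactions$: $s_j$ is stable by plausibility, and $B=\uen(s_{j+1})\cap\outactions\ni b$ is spontaneously enabled by $a_j$. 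With this swap your argument coincides with the paper's, and your appeal to closedness becomes unnecessary: the theorem makes no assumption $\inactions=\emptyset$, and spontaneous enabling admits any non-urgent action, inputs included.
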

\begin{proof}
  Let $s$ be potentially reachable in $\II$ such that
  $b\in\uen(s)\cap\outactions$.
  We prove the theorem for $b$ by induction on the plausible path
  $\sigma$ leading to $s$.
  If $|\sigma|=0$, then $\sigma=s$ and $s$ is the initial state.  Then
  the set $\uen(s)\cap\outactions$ is initial and we are done in this
  case.
  If $|\sigma|>0$, then $\sigma=\sigma'\cdot(s'\trans[\_,a,\_]s)$ for
  some $s'$, $a$, and plausible $\sigma'$.
  If $a\in\actions\setminus\coactions$ then $s'$ is stable (since
  $\sigma$ is plausible) and thus $\uen(s)\cap\outactions$ is
  spontaneously enabled by $a$.
  If instead $a\in\coactions$, two possibilities arise. If
  $b\notin\uen(s')$, then $b$ is triggered by $a$.
  If $b\in\uen(s')$, the conditions are satisfied by induction since
  $|\sigma'|=|\sigma|-1$.
                                                      \qed
\end{proof}

The next definition is auxiliary to prove the main theorem of this
section.  It constructs a graph from a closed and composed
\iosa\ whose vertices are sets of urgent output actions.  It has the
property that, if there is a path from one vertex to another, all actions
in the second vertex are approximately indirectly triggered by actions
in the first vertex (Lemma~\ref{lemm:IndTrig}).  This will allow to
show that any set of simultaneously enabled urgent output actions is
approximately indirectly triggered by initial actions or spontaneously
enabled sets (Lemma~\ref{lm:ApproxCommonCause}).

\begin{definition}\label{def:egraph}
  Let $\II = (\II_1\pll\ldots\pll\II_n)$ be a closed \iosa.  The
  \emph{enabled graph} of $\II$ is defined by the labelled graph
  $\EG_\II=(V,E)$, where $V \subseteq 2^{\outactions\cap\coactions}$
  and $E\subseteq {V\times(\coactions{\cap}\outactions)\times V}$, with
  $V=\bigcup_{k\geq 0} V_k$ and $E=\bigcup_{k\geq 0} E_k$, and, for
  all $k\in\N$, $V_k$ and $E_k$ are inductively defined by
  \begin{align*}
    V_0 =\
    & \textstyle
      \bigcup_{a\in \actions}
        \{\bigcup_{i=1}^{n} B_i \mid
          \forall {1\leq i \leq n} :\\
    & \phantom{\textstyle\bigcup_{a\in \actions}\{\bigcup_{i=1}^{n}B_i\mid\ }
          B_i \text{ \em is spontaneously enabled by } a
          \text{ \em and maximal in } \II_i\}\\
    & \cup\textstyle
      \{\bigcup_{i=1}^{n} \uen(s^0_i)\cap\outactions_i \mid
      \forall {1\leq i\leq n} : s^0_i \text{ \em is the initial state in } \II_i\}\\
                E_k =\
    & \{(v,a,(v{\setminus}\{a\})\cup\{b \mid a{\triggers}b\}) \mid
        v\in V_k, a\in v \}\\
    V_{k+1} =\
    & \textstyle
      \{ v' \mid v\in V_i, (v,v')\in E_k, v'\notin\bigcup_{j=0}^{k}V_j\}
  \end{align*}
                                                  \end{definition}

Notice that $V_0$ contains the maximal initial set of $\I$ and an
overapproximation of all its maximal spontaneously enabled sets.
Notice also that, by construction, there is a path from any vertex in
$V$ to some vertex in $V_0$.

The set closure of $V$ in $\EG_\II$, defined by
$\overline{\ES}_\II = \{B \mid B\subseteq v, v\in V\}$,
turns out to be an overapproximation of the actual set $\ES_\II$ of
all enabled sets in $\II$.

\begin{lemma}\label{lm:enabledGraphSuper}
  For any closed \iosa\ $\II=(\II_1\pll\cdots\pll\II_n)$, $\ES_\II
  \subseteq \overline{\ES}_\II$.
        \end{lemma}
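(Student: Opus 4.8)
The plan is to prove the slightly stronger statement that \emph{for every potentially reachable state $s$ there is a vertex $v\in V$ of $\EG_\II$ with $\uen(s)\subseteq v$}. The lemma follows at once: any $B\in\ES_\II$ has $B\subseteq\uen(s)$ for some potentially reachable $s$, hence $B\subseteq v\in V$ and so $B\in\overline{\ES}_\II$. Here I use that $\II$ is closed, so $\inactions=\emptyset$ and therefore $\uen(s)\subseteq\coactions\cap\outactions$, which lets me treat $\uen(s)$ and $\uen(s)\cap\outactions$ interchangeably. I would establish the stronger statement by induction on the length of a plausible path $\sigma$ (Def.~\ref{def:pot:reach}) witnessing that $s$ is potentially reachable; this mirrors the single-action argument of Theorem~\ref{theo:enabledTheo}, but now I track whole enabled sets while walking through $\EG_\II$.

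For the base case $\length{\sigma}=0$ we have $s=s_0$, and Proposition~\ref{prop:initAdit} gives $\uen(s_0)=\bigcup_{i=1}^n\uen(s^0_i)\cap\outactions_i$, which is exactly the initial vertex placed in $V_0$ by Def.~\ref{def:egraph}. For the inductive step write $\sigma=\sigma'\cdot(s'\trans[\_,a,\_]s)$ with $\sigma'$ plausible, and fix by the induction hypothesis a vertex $v'\in V$ with $\uen(s')\subseteq v'$. If $a\in\actions\setminus\coactions$ is non-urgent, plausibility of $\sigma$ forces $s'$ to be stable, so $\uen(s)$ is spontaneously enabled by $a$; I would extend it to a maximal such set $B^*$ and apply the second part of Proposition~\ref{prop:overaprox:sp:enab} to obtain maximal component-wise spontaneously enabled sets $B_1,\dots,B_n$ with $\uen(s)\subseteq B^*\subseteq\bigcup_{i=1}^n B_i$, where $\bigcup_i B_i$ is a vertex of $V_0$.

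The interesting case is $a\in\coactions$. Since $s'\trans[\_,a,\_]s$ we have $a\in\uen(s')\subseteq v'$, so $v'$ carries an outgoing edge labelled $a$ with target $v''=(v'\setminus\{a\})\cup\{b\mid a\triggers b\}$; from the stratified construction of the $V_k$ and $E_k$ one checks that the target of any $a$-edge out of a vertex of $V$ again lies in $V$, whence $v''\in V$. It remains to show $\uen(s)\subseteq v''$. Take $b\in\uen(s)$. If $b\in\uen(s')$ and $b\neq a$, then $b\in v'\setminus\{a\}\subseteq v''$. If $b\notin\uen(s')$ (so $b\neq a$), then $s'\trans[\_,a,\_]s\trans[\_,b,\_]\_$ with $b\notin\uen(s')$ witnesses that $a$ triggers $b$ in $\II$ (Def.~\ref{def:trigrel}), hence $a\triggers b$ by Proposition~\ref{prop:triggApprox} and $b\in v''$. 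Finally, if $b=a$, then $a$ is enabled at both $s'$ and $s$ and $s'\trans[\_,a,\_]s\trans[\_,a,\_]\_$ witnesses $a\triggers a$ (the self-triggering clause imposes no disabledness requirement), so again $b=a\in v''$. In every case $b\in v''$, completing the step.

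The main obstacle I expect is precisely the bookkeeping of this urgent case: one must argue $v''\in V$ directly from Def.~\ref{def:egraph} (the target of an $a$-edge out of $V_k$ is placed into $V_{k+1}$ unless it already appears in some earlier $V_j$), and one must not overlook the self-triggering subcase $b=a$, where deleting $a$ in $v'\setminus\{a\}$ has to be compensated by $a\triggers a$. Potential reachability of the third state required by Def.~\ref{def:trigrel} is obtained by extending the plausible path with the urgent transition $s\trans[\_,b,\_]\_$, which is automatically plausible because the action taken is urgent. A minor technical caveat is the passage to a maximal spontaneously enabled set in the non-urgent case, which I would justify by the finiteness of $\coactions\cap\outactions$ (equivalently, by invoking the maximal extension already guaranteed in the second part of Proposition~\ref{prop:overaprox:sp:enab}).
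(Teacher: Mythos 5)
Your proof is correct and takes essentially the same route as the paper's: induction on the length of the plausible path, with the base case handled via Prop.~\ref{prop:initAdit}, the non-urgent case via maximal spontaneously enabled sets and Prop.~\ref{prop:overaprox:sp:enab}, and the urgent case by following an $a$-labelled edge of $\EG_\II$ and invoking Prop.~\ref{prop:triggApprox}, including the self-triggering subcase $b=a$. The only difference is cosmetic: you strengthen the induction hypothesis to track the full set $\uen(s)$, whereas the paper applies the induction to the auxiliary set $B'=\{a\}\cup(B\cap\uen(s'))$; the two bookkeeping choices yield the same argument.
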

\begin{proof}
  Let $B\in\ES_\II$.  We proceed by induction on the length of the plausible path
  $\sigma$ that leads to the state $s$ s.t.\ $B\subseteq\uen(s)$.
  If $|\sigma|=0$ then $s$ is the initial state and thus $B$
  is initial in $\II$. Thus, by Def.~\ref{def:init:set},
  Prop.~\ref{prop:initAdit}, and Def.~\ref{def:egraph},
  $B \subseteq (\uen(s_0){\cap}\outactions_\II) =
  (\bigcup_{i=1}^n\uen(s^0_i){\cap}\outactions_i) \in V_0 \subseteq
  \overline{\ES}_\II$.
  As a consequence $B\in\overline{\ES}_\II$.

  If $|\sigma|>0$ then $\sigma=\sigma'\cdot(s'\trans[\_,a,\_]s)$, for
  some $s'$, $a$, and plausible $\sigma'$.
  If $a\in\actions\setminus\coactions$ then $s'$ is stable (since
  $\sigma$ is plausible) and thus $B$ is spontaneously enabled by $a$.
  By Prop.~\ref{prop:overaprox:sp:enab}, there are $B_1,\ldots,B_n$
  such that each $B_i$ is spontaneously enabled by $a$ and maximal in
  $\II_i$, and $B\subseteq\bigcup_{i=1}^nB_i$.  Since
  $\bigcup_{i=1}^nB_i\in V_0\subseteq\overline{\ES}_\II$, then
  $B\in\overline{\ES}_\II$.
  If instead $a\in\coactions$, let $B'=\{a\}\cup(B\cap\uen(s'))$.
  Notice that $B'\subseteq\uen(s')\cap\outactions$.  Since $s'$ is the
  last state on $\sigma'$ and $|\sigma'|=|\sigma|-1$,
  $B'\in\overline{\ES}_\II$ by induction.
  Hence, there is a vertex $v'\in V$ in $\EG_\II$ such that
  $B'\subseteq v$ and, by Def.~\ref{def:egraph}, $v'\in V_k$ for some
  $k\geq 0$.
  Let $v=(v'{\setminus}\{a\})\cup\{b \mid a{\triggers}b\}$, then
  $(v',a,v) \in E_k$ and hence $v\in V_{k+1}$.
  We show that $B\subseteq v$.
  Let $b\in B$.  If $b=a$, then $a\in\uen(s)\cap\outactions$ and hence
  $a$ triggers $a$ in $\II$.  By Prop.~\ref{prop:triggApprox},
  $a\triggers a$ which implies $a\in v$.
  Suppose, instead, that $b\neq a$. If $b\in\uen(s')$, then
  $b\in B'{\setminus}\{a\} \subseteq v'{\setminus}\{a\} \subseteq v$.
  If $b\notin\uen(s')$, then $a$ triggers $b$ in $\II$, and by
  Prop.~\ref{prop:triggApprox}, $a\triggers b$ which implies $b\in v$.
  This proves $B\subseteq v\in\overline{\ES}_\II$ and hence
  $B\in\overline{\ES}_\II$.
  \qed

\end{proof}

The next lemma states that if there is a path from a vertex of
$\EG_\II$ to another vertex, every action in the second vertex is
approximately indirectly triggered by some action in the first vertex.

\begin{lemma}\label{lemm:IndTrig}
  Let $\II$ be a closed \iosa, let $v,v'\in V$ be vertices of
  $\EG_\II$ and let $\rho$ be a path following $E$ from $v$ to
  $v'$. Then for every $b\in v'$ there is an action $a\in v$ such that
  $a\triggers^*b$.
          \end{lemma}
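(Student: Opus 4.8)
The plan is to prove Lemma~\ref{lemm:IndTrig} by induction on the length of the path $\rho$ from $v$ to $v'$ in $\EG_\II$. The statement is exactly the kind of reachability-along-edges property that is amenable to such an induction, because each edge of $\EG_\II$ encodes a single application of the approximate triggering relation $\triggers$, and $\triggers^*$ is its reflexive transitive closure.

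First I would settle the base case $|\rho|=0$, so $v=v'$. For every $b\in v'=v$, reflexivity of $\triggers^*$ gives $b\triggers^*b$ with the witnessing action $a=b\in v$, and we are done. For the inductive step, suppose $\rho$ has length $k+1$ and factors as a path $\rho'$ from $v$ to an intermediate vertex $w$ of length $k$, followed by a single edge $(w,c,v')\in E_k$. By Def.~\ref{def:egraph}, such an edge has the form $v'=(w\setminus\{c\})\cup\{b'\mid c\triggers b'\}$ with $c\in w$. Now take any $b\in v'$. There are two cases according to which part of the union $b$ comes from. If $b\in w\setminus\{c\}$, then in particular $b\in w$, and the induction hypothesis applied to $\rho'$ yields some $a\in v$ with $a\triggers^*b$. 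If instead $b\notin w\setminus\{c\}$, then $b$ lies in $\{b'\mid c\triggers b'\}$, i.e.\ $c\triggers b$; since $c\in w$, the induction hypothesis gives some $a\in v$ with $a\triggers^*c$, and composing with $c\triggers b$ (hence $c\triggers^*b$) through transitivity of $\triggers^*$ yields $a\triggers^*b$ with $a\in v$.

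The only mild subtlety, which I would make explicit, is the bookkeeping of the vertex indices: edges are drawn from $E_k$ out of vertices in $V_k$, and $V_{k+1}$ collects the fresh targets, so a path of length $k+1$ genuinely decomposes as a length-$k$ prefix to an intermediate vertex plus one final edge, and the prefix is itself a valid $E$-path between vertices of $V$ to which the induction hypothesis applies. The remaining work is purely the two-case split above together with the use of transitivity and reflexivity of $\triggers^*$; no measure-theoretic or probabilistic reasoning is involved here, in contrast to Lemma~\ref{lm:weak:transition}. I therefore do not expect any real obstacle: the main point is simply to align the edge-generation rule of Def.~\ref{def:egraph} with the definition of $\triggers^*$ as the reflexive transitive closure of $\triggers$.
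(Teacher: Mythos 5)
Your proposal is correct and follows essentially the same argument as the paper's proof: induction on the length of $\rho$, with the base case settled by reflexivity of $\triggers^*$, and the inductive step decomposing $\rho$ into a prefix plus a final edge $(w,c,v')$ and splitting on whether $b\in w\setminus\{c\}$ (apply the induction hypothesis directly) or $c\triggers b$ (apply the induction hypothesis to $c\in w$ and use transitivity). The explicit bookkeeping of the $V_k$/$E_k$ indices is a welcome clarification but does not change the substance of the argument.
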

\begin{proof}
  We proceed by induction in the length of $\rho$.  If $|\rho|=0$ then
  $v=v'$ and the lemma holds since $\triggers^*$ is reflexive.
  If $|\rho|>0$, there is a path $\rho'$, $v''\in V$, and
  $c\in\coactions\cap\outactions$ such that
  $\rho=\rho'\cdot(v'',c,v')$.
  By induction, for every action $d\in v''$ there is some $a\in v$
  such that $a\triggers^*d$.
  Because of the definition of $E$ in Def.~\ref{def:egraph}, either $b\in
  v''$ or $c\triggers b$ and $c\in v''$.  The first case follows by
  induction.  In the second case, also by induction, $a\triggers^*c$
  for some $a\in v$ and hence $a\triggers^*b$.
  \qed
                                \end{proof}

The next lemma states that every enabled set $B$ in a composed
\iosa\ is either approximately triggered by a set of initial actions
of the components of the \iosa\ or by a subset of the union of
spontaneously enabled sets in each component where such sets are
spontaneously enabled by the same event.

\begin{lemma}\label{lm:ApproxCommonCause}
  Let $\II=(\II_1\pll\ldots\pll\II_n)$ be a closed \iosa\ and let
  $\{b_1,\ldots,b_m\}\subseteq\coactions\cap\outactions$ be enabled in
  $\II$.
  Then, there are (not necessarily different) $a_1,\ldots,a_m$ such
  that $a_j\triggers^*b_j$, for all $1\leq j\leq m$, and either
  \begin{inparaenum}[(i)]
  \item    $\{a_1,\ldots,a_m\}\subseteq\bigcup_{i=1}^{n}\uen(s^0_i)\cap\outactions_i$,
    or
  \item    there exists $e\in\actions$ and (possibly empty) sets $B_1,\ldots,B_n$
    spontaneously enabled by $e$ in $\II_1,\ldots,\II_n$ respectively,
    such that $\{a_1,\ldots,a_m\}\subseteq\bigcup_{i=1}^n B_i$.
  \end{inparaenum}
                          \end{lemma}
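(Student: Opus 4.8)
The plan is to read the conclusion off the enabled graph $\EG_\II$ together with the two lemmas that immediately precede it. First I would observe that, since $\{b_1,\dots,b_m\}$ is enabled in $\II$, it belongs to $\ES_\II$, so by Lemma~\ref{lm:enabledGraphSuper} it belongs to $\overline{\ES}_\II$. By the definition of $\overline{\ES}_\II$ this yields a \emph{single} vertex $v'\in V$ of $\EG_\II$ with $\{b_1,\dots,b_m\}\subseteq v'$. The emphasis on one vertex is essential: it is what will later force all the $a_j$ to share a common origin.

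Next I would use the structural fact noted right after Def.~\ref{def:egraph}, namely that every vertex of $V$ is reachable from $V_0$ along $E$. This is an easy induction on the stage index $k$: each $v'\in V_{k+1}$ has, by construction, an incoming edge from a vertex already placed in $\bigcup_{j\le k}V_j$, which by the induction hypothesis is reachable from $V_0$. Applying this to the vertex $v'$ above produces a path $\rho$ following $E$ from some vertex $v\in V_0$ to $v'$. Lemma~\ref{lemm:IndTrig} then gives, for every $b_j\in v'$, an action $a_j\in v$ with $a_j\triggers^* b_j$; crucially, all these $a_j$ lie in the \emph{same} vertex $v\in V_0$.

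It remains to inspect the two possible shapes of a vertex of $V_0$. If $v$ is the maximal initial set $\bigcup_{i=1}^n\uen(s^0_i)\cap\outactions_i$, then $\{a_1,\dots,a_m\}\subseteq v$ gives case (i) directly. Otherwise $v=\bigcup_{i=1}^n B_i$ where, for one fixed $e\in\actions$, each $B_i$ is maximal spontaneously enabled by $e$ in $\II_i$; since a maximal spontaneously enabled set is in particular spontaneously enabled (and $\emptyset$ is spontaneously enabled by convention), $\{a_1,\dots,a_m\}\subseteq v=\bigcup_{i=1}^n B_i$ gives case (ii) with this $e$ and these $B_i$.

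The main obstacle, and the reason the statement is phrased with a common $e$, is exactly this uniformity: one must land on a single $V_0$-vertex so that all of the $a_j$ are simultaneously initial, or simultaneously spontaneously enabled by one and the same action $e$. This is precisely what is secured by first collapsing $\{b_1,\dots,b_m\}$ into one vertex $v'$ via Lemma~\ref{lm:enabledGraphSuper} and then running a single path back to one vertex $v\in V_0$; had the $b_j$ been scattered across different vertices, or had Lemma~\ref{lemm:IndTrig} been invoked along different paths, the common witness $e$ (resp.\ the common initiality) could be lost. Everything else is bookkeeping over the two forms of $V_0$.
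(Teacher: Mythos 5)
Your proposal is correct and follows essentially the same route as the paper's own proof: invoke Lemma~\ref{lm:enabledGraphSuper} to place $\{b_1,\ldots,b_m\}$ inside a single vertex of $\EG_\II$, use the inductive construction of $V$ and $E$ to obtain a path from a $V_0$-vertex to it, apply Lemma~\ref{lemm:IndTrig} to pull each $b_j$ back to an $a_j$ in that common $V_0$-vertex, and finish by case analysis on the two possible forms of a $V_0$-vertex. Your added remarks (the explicit induction showing reachability from $V_0$, and the emphasis that a single vertex and a single path are what secure the common witness $e$) are sound elaborations of steps the paper leaves implicit.
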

\begin{proof}
  Because of Lemma~\ref{lm:enabledGraphSuper} there is a vertex $v$ of
  $\EG_\II$ such that $\{b_1,\ldots,b_n\}\subseteq v$.
  Because of the inductive construction of $E$ and $V$, there is a
  path from some $v'\in V_0$ to $v$ in $\EG_\II$.
  From Lemma~\ref{lemm:IndTrig}, for each $1\leq j\leq m$, there is an
  $a_j\in v'$ such that $a_j\triggers^*b_j$.
  Because $v'\in V_0$, then either
  $v'=\bigcup_{i=1}^{n}\uen(s^0_i)\cap\outactions_i$ or there is some
  $e\in\actions$ such that $v'=\bigcup_{i=1}^{n} B_i$ with $B_i$
  spontaneously enabled by $e$ in $\II_i$
  \qed
\end{proof}

The following theorem is the main result of this section and provides
sufficient conditions to guarantee that a closed composed \iosa\ is
confluent or, as stated in the theorem, necessary conditions for the \iosa\
to be non-confluent.

\begin{theorem}\label{theo:sufCondForDet}
  Let $\II=(\II_1\pll\cdots\pll\II_n)$ be a closed \iosa.
  If $\II$ potentially reaches a non-confluent state then there are
  actions $a,b\in\coactions\cap\outactions$ such that
        some $\II_i$ is not confluent w.r.t.\ $a$ and $b$, and
      there are $c$ and $d$ such that $c\triggers^*a$, $d\triggers^*b$,
    and, either
    \begin{inparaenum}[(i)]
    \item      $c$ and $d$ are initial actions in any component, or
    \item      there is some $e\in\actions$ and (possibly empty) sets
      $B_1,\ldots,B_n$ spontaneously enabled by $e$ in
      $\II_1,\ldots,\II_n$ respectively, such that
      $c,d\in\bigcup_{i=1}^n B_i$.
    \end{inparaenum}
                                                                                      \end{theorem}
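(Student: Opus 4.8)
The plan is to read the witnessing data off the hypothesis and then discharge the two halves of the conclusion using the two results already established in this section: the first half (non-confluence of a single component) via the contrapositive of Proposition~\ref{prop:compPreservesConfl}, and the second half (the initial/spontaneous triggering alternatives) via Lemma~\ref{lm:ApproxCommonCause}. Essentially no new argument is needed; the effort lies in aligning the informal notion of a ``non-confluent state'' with the formal definitions.

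First I would unfold the hypothesis. That $\II$ potentially reaches a non-confluent state yields a potentially reachable state $s$ at which the confluence diamond of Def.~\ref{def:confluence} fails: there are $a,b\in\coactions$ and transitions $s\trans[\emptyset,a,C_1]s_1$ and $s\trans[\emptyset,b,C_2]s_2$ admitting no common successor $s_3$ with $s_1\trans[\emptyset,b,C_2]s_3$ and $s_2\trans[\emptyset,a,C_1]s_3$. Since $\II$ is closed, $\inactions=\emptyset$, hence $\actions=\outactions$ and $\coactions\subseteq\outactions$; in particular $a,b\in\coactions\cap\outactions$. The failure of the diamond at $s$ says precisely that $\II$ is not confluent w.r.t.\ $a$ and $b$ (Def.~\ref{def:confluence} quantifies over all states, so one failing state suffices), and taking the contrapositive of Proposition~\ref{prop:compPreservesConfl}, iterated over the $n$ factors of $\II$, gives a component $\II_i$ that is not confluent w.r.t.\ $a$ and $b$. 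This is the first claim.

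For the triggering conditions I would observe that both $a$ and $b$ label transitions out of $s$, so $\{a,b\}\subseteq\uen(s)\cap\outactions$, and since $s$ is potentially reachable, $\{a,b\}$ is an enabled set of $\II$. Feeding $\{a,b\}$ into Lemma~\ref{lm:ApproxCommonCause} with $m=2$, $b_1=a$, $b_2=b$ produces actions $c,d$ (its $a_1,a_2$) with $c\triggers^* a$ and $d\triggers^* b$ satisfying either $\{c,d\}\subseteq\bigcup_{i=1}^n\uen(s^0_i)\cap\outactions_i$ or the spontaneous-enabling alternative with some $e\in\actions$ and sets $B_1,\ldots,B_n$; these are exactly cases (i) and (ii) of the theorem.

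I expect the main obstacle to be bookkeeping rather than mathematics: the single state $s$ must serve two roles at once, certifying non-confluence of the whole composition (so that the contrapositive of Prop.~\ref{prop:compPreservesConfl} applies) and witnessing that $\{a,b\}$ is enabled (so that Lemma~\ref{lm:ApproxCommonCause} applies), and I would make sure both are read off the same potentially reachable witness. I would also record the degenerate case $a=b$, where the enabled set is the singleton $\{a\}$ and the lemma is applied with $m=1$; otherwise everything follows directly from the two quoted results.
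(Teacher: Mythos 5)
Your proposal is correct and follows essentially the same route as the paper's proof: read off the witnessing pair $a,b\in\uen(s)$ at the potentially reachable non-confluent state $s$, apply the contrapositive of Proposition~\ref{prop:compPreservesConfl} to locate a non-confluent component $\II_i$, and feed the enabled set $\{a,b\}$ into Lemma~\ref{lm:ApproxCommonCause} to obtain $c,d$ and the initial/spontaneously-enabled alternatives. Your additional bookkeeping (closedness giving $a,b\in\coactions\cap\outactions$, iterating the proposition over $n$ factors, and the degenerate case $a=b$) is consistent with, and slightly more explicit than, the paper's argument.
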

\begin{proof}
  Suppose $\II$ potentially reaches a non confluent state $s$.  Then
  there are necessarily $a,b\in\uen(s)$ that show it and hence $\II$
  is not confluent w.r.t.\ $a$ and $b$.
  By Prop.~\ref{prop:compPreservesConfl}, there is necessarily a
  component $\II_i$ that is not confluent w.r.t.\ $a$ and $b$.
  Since $\{a,b\}$ is an enabled set in $\II$, the rest of the theorem
  follows by Lemma~\ref{lm:ApproxCommonCause}.
  \qed
\end{proof}

Because of Prop.~\ref{prop:only:pot:reach} and
Theorem~\ref{th:iosacdeterministic}, if all potentially reachable
states in a closed \iosa\ $\II$ are confluent, then $\II$ is weakly
deterministic.  Thus, if no pair of actions satisfying conditions in
Theorem~\ref{theo:sufCondForDet} are found in $\II$, then $\II$ is
weakly deterministic.

Notice that the \iosa\ $\II=\II_1\pll\II_2\pll\II_3$ of
Example~\ref{ex:iosas:pll} (see also Figs.~\ref{fig:components}
and~\ref{fig:composed}) is an example that does not meet the
conditions of Theorem~\ref{theo:sufCondForDet}, and hence detected as
confluent.
$c$ and $d$ are the only potential non-confluent actions, which is
noticed in state $s_6$ of $\II_3$.
The approximate indirect
triggering relation can be calculated to
$\triggers^*=\{(c,c),(d,d)\}$.  Also, $\{c\}$ is spontaneously enabled
by $a$ in $\I_1$ and $\{d\}$ is spontaneously enabled by $b$ in
$\I_2$. Since both sets are spontaneously enabled by \emph{different}
actions and $c$ and $d$ are not initial, the set $\{c,d\}$ does not
appear in $V_0$ of $\EG_\II$ which would be required to meet the
conditions of the theorem.

\begin{wrapfigure}[10]{r}{4.2cm}
  \centering\vspace{-2em}
  \begin{tikzpicture}[shorten >=1pt,node distance=1.4cm,on grid,auto, scale=0.8, transform shape, initial text={}, initial where=left]
    \tikzstyle{every state}=[text=black,draw=black,fill=white,minimum size=.2cm]
    \node[state,initial] (s11) {};
    \node[state]         (s12) [right=of s11] {};
    \node[state]         (s13) [right=of s12] {};
    \node[state,draw=white] (I1) [left=1cm of s11] {$\II_1$};

    \node[state,initial] (s21) [below=1cm of s11] {};
    \node[state]         (s22) [right=of s21] {};
    \node[state]         (s23) [right=of s22] {};
    \node[state,draw=white] (I2) [left=1cm of s21] {$\II_2$};

    \node[state,initial] (s31) [below left=1.6cm and .5cm of s21] {};
    \node[state]         (s32) [above right=.6cm and 1cm of s31] {};
    \node[state]         (s33) [below right=.6cm and 1cm of s31] {};
    \node[state]         (s34) [right=of s32] {};
    \node[state]         (s35) [right=of s33] {};
    \node[state]         (s36) [right=of s34] {};
    \node[state,draw=white] (I3) [above left=.6cm and .5cm of s31] {$\II_3$};

    \path[->]
    (s11) edge [] node {$a?$} (s12)
    (s12) edge [] node {$b!!$} (s13)

    (s21) edge [] node {$a?$} (s22)
    (s22) edge [] node {$c!!$} (s23)

    (s31) edge [] node[xshift=.1cm] {$b??$} (s32)
    (s31) edge [] node[yshift=-.5cm,xshift=-.6cm] {$c??$} (s33)
    (s32) edge [] node {$c??$} (s34)
    (s33) edge [] node {$b??$} (s35)
    (s34) edge [] node {$a!$} (s36);
  \end{tikzpicture}
  \caption{$\I_1\pll\I_2\pll\I_3$ meets conditions in
    Theorem~\ref{theo:sufCondForDet}}\label{fig:false:negative}
\end{wrapfigure}

Conditions in Theorem~\ref{theo:sufCondForDet} are not sufficient and
confluent \iosa{s} may satisfy them.  Consider the \iosa{s} in
Fig.~\ref{fig:false:negative}. $\I_1\pll\I_2\pll\I_3$ is a closed
\iosa\ with a single state and no outgoing transition. Hence, it is
confluent.
However, $\I_3$ is not confluent w.r.t.\ $b$ and $c$,
$\triggers^*=\{(b,b),(c,c)\}$, $B_1=\{b\}$ is spontaneously enabled by
$a$ in $\I_1$, and $B_2=\{c\}$ is spontaneously enabled by $a$ in
$\I_2$. Hence $b,c\in\bigcup_{i=1}^n B_i$, thus meeting the conditions
of Theorem~\ref{theo:sufCondForDet}.

\section{Concluding remarks}

In this article, we have extended \iosa\ as introduced
in~\cite{DArgenio2016} with urgent actions.  Though such extension
introduces non-determinism even if the \iosa\ is closed, it does so in a
limited manner.  We were able to characterize when a \iosa\ is weakly
deterministic, which is an important concept since weakly deterministic
\iosa{s} are amenable to discrete event simulation.  In particular,
we showed that closed and confluent \iosa{s} are weakly deterministic
and provided conditions to check compositionally if a closed \iosa\ is
confluent.
Open \iosa{s} are naturally non-deterministic due to input
enabledness:
at any moment of time either two different inputs may be enabled or an
input is enabled jointly with a possible passage of time.  Thus, the
property of non-determinism can only be possible in closed \iosa{s}.
However, Theorem~\ref{theo:sufCondForDet} relates open \iosa{s} to the
concept of weak determinism by providing sufficient properties on open
\iosa{s} whose composition leads to a closed weakly deterministic \iosa.
In addition, we notice that languages like
Modest~\cite{tse/BohnenkampDHK06,fmsd/HahnHHK13,Hartmanns15:phd}, that
have been designed for compositional modelling of complex timed and
stochastic systems, embrace the concept of non-determinism as a
fundamental property.  Thus, ensuring weak determinism on Modest
models using compositional tools like Theorem~\ref{theo:sufCondForDet}
will require significant limitations that may easily boil down to
reduce it to \iosa.  Notwithstanding this observation, we remark that
some translation between \iosa\ and Modest is possible through
Jani~\cite{DBLP:conf/tacas/BuddeDHHJT17}.

Finally, we remark that, though not discussed in this paper, the conditions provided by
Theorem~\ref{theo:sufCondForDet}, can be verified in polynomial time
respect to the size of the components and the number of actions.

 {}
\bibliographystyle{splncs03}

\begin{thebibliography}{10}
\providecommand{\url}[1]{\texttt{#1}}
\providecommand{\urlprefix}{URL }

\bibitem{DBLP:books/daglib/0092409}
Baader, F., Nipkow, T.: Term rewriting and all that. Cambridge University Press
  (1998)

\bibitem{DBLP:conf/sfm/BehrmannDL04}
Behrmann, G., David, A., Larsen, K.G.: A tutorial on uppaal. In: Bernardo, M.,
  Corradini, F. (eds.) Formal Methods for the Design of Real-Time Systems.
  LNCS, vol. 3185, pp. 200--236. Springer (2004),
  \url{https://doi.org/10.1007/978-3-540-30080-9\_7}

\bibitem{DBLP:conf/cav/BengtssonGKLLPY96}
Bengtsson, J., Griffioen, W.O.D., Kristoffersen, K.J., Larsen, K.G., Larsson,
  F., Pettersson, P., Yi, W.: Verification of an audio protocol with bus
  collision using {UPPAAL}. In: Alur, R., Henzinger, T.A. (eds.) Procs.\ of
  {CAV} '96. LNCS, vol. 1102, pp. 244--256. Springer (1996),
  \url{https://doi.org/10.1007/3-540-61474-5\_73}

\bibitem{tse/BohnenkampDHK06}
Bohnenkamp, H.C., D'Argenio, P.R., Hermanns, H., Katoen, J.: {MODEST:} {A}
  compositional modeling formalism for hard and softly timed systems. {IEEE}
  Trans. Software Eng.  32(10),  812--830 (2006),
  \url{https://doi.org/10.1109/TSE.2006.104}

\bibitem{bravettiDArgenio04}
Bravetti, M., D'Argenio, P.R.: Tutte le algebre insieme: Concepts, discussions
  and relations of stochastic process algebras with general distributions. In:
  Baier, C., Haverkort, B.R., Hermanns, H., Katoen, J., Siegle, M. (eds.)
  Validation of Stochastic Systems - {A} Guide to Current Research. LNCS, vol.
  2925, pp. 44--88. Springer (2004),
  \url{https://doi.org/10.1007/978-3-540-24611-4_2}

\bibitem{Budde2017:phd}
Budde, C.E.: Automation of Importance Splitting Techniques for Rare Event
  Simulation. Ph.D. thesis, Universidad Nacional de C\'ordoba, Argentina (2017)

\bibitem{BuddeDM2016}
Budde, C.E., D'Argenio, P.R., Monti, R.E.: Compositional construction of
  importance functions in fully automated importance splitting. In: Puliafito,
  A., Trivedi, K.S., Tuffin, B., Scarpa, M., Machida, F., Alonso, J. (eds.)
  Procs.\ of {VALUETOOLS 2016}. ACM (2017),
  \url{http://dx.doi.org/10.4108/eai.25-10-2016.2266501}

\bibitem{DBLP:conf/tacas/BuddeDHHJT17}
Budde, C.E., Dehnert, C., Hahn, E.M., Hartmanns, A., Junges, S., Turrini, A.:
  {JANI:} quantitative model and tool interaction. In: Legay, A., Margaria, T.
  (eds.) Procs.\ of {TACAS} 2017. LNCS, vol. 10206, pp. 151--168 (2017),
  \url{https://doi.org/10.1007/978-3-662-54580-5\_9}

\bibitem{Crouzen2014:phd}
Crouzen, P.: Modularity and Determinism in Compositional {M}arkov Models. Ph.D.
  thesis, Universit\"at des Saarlandes, Germany (2014)

\bibitem{DArgenio99:phd}
D'Argenio, P.R.: Algebras and Automata for Timed and Stochastic Systems. Ph.D.
  thesis, University of Twente, The Netherlands (1999)

\bibitem{DArgenioK05:iandc}
D'Argenio, P.R., Katoen, J.P.: A theory of stochastic systems part {I:}
  {S}tochastic automata. Inf. Comput.  203(1),  1--38 (2005),
  \url{http://dx.doi.org/10.1016/j.ic.2005.07.001}

\bibitem{procomet/DArgenioKB98}
D'Argenio, P.R., Katoen, J., Brinksma, E.: An algebraic approach to the
  specification of stochastic systems. In: Gries, D., de~Roever, W.P. (eds.)
  {PROCOMET} '98. {IFIP} Conference Proceedings, vol. 125, pp. 126--147.
  Chapman {\&} Hall (1998)

\bibitem{DArgenio2016}
D'Argenio, P.R., Lee, M.D., Monti, R.E.: Input/output stochastic automata -
  compositionality and determinism. In: Fr{\"{a}}nzle, M., Markey, N. (eds.)
  {FORMATS} 2016. LNCS, vol. 9884, pp. 53--68. Springer (2016),
  \url{https://doi.org/10.1007/978-3-319-44878-7_4}

\bibitem{DArgenioSTW12:mscs}
D'Argenio, P.R., {S{\'{a}}nchez Terraf}, P., Wolovick, N.: Bisimulations for
  non-deterministic labelled {M}arkov processes. Mathematical Structures in
  Computer Science  22(1),  43--68 (2012),
  \url{http://dx.doi.org/10.1017/S0960129511000454}

\bibitem{DesharnaisEP02:iandc}
Desharnais, J., Edalat, A., Panangaden, P.: Bisimulation for labelled {M}arkov
  processes. Inf. Comput.  179(2),  163--193 (2002),
  \url{http://dx.doi.org/10.1006/inco.2001.2962}

\bibitem{giry81catprob}
Giry, M.: A categorical approach to probability theory. In: Categorical Aspects
  of Topology and Analysis. LNM, vol. 915, pp. 68--85. Springer (1981)

\bibitem{GlabbeekSS95:iandc}
van Glabbeek, R.J., Smolka, S.A., Steffen, B.: Reactive, generative and
  stratified models of probabilistic processes. Inf. Comput.  121(1),  59--80
  (1995), \url{http://dx.doi.org/10.1006/inco.1995.1123}

\bibitem{fmsd/HahnHHK13}
Hahn, E.M., Hartmanns, A., Hermanns, H., Katoen, J.: A compositional modelling
  and analysis framework for stochastic hybrid systems. Formal Methods in
  System Design  43(2),  191--232 (2013),
  \url{https://doi.org/10.1007/s10703-012-0167-z}

\bibitem{Hartmanns15:phd}
Hartmanns, A.: On the analysis of stochastic timed systems. Ph.D. thesis,
  Saarland University (2015),
  \url{http://scidok.sulb.uni-saarland.de/volltexte/2015/6054/}

\bibitem{Hermanns02}
Hermanns, H.: Interactive Markov Chains: The Quest for Quantified Quality,
  LNCS, vol. 2428. Springer (2002), \url{https://doi.org/10.1007/3-540-45804-2}

\bibitem{LarsenS91:iandc}
Larsen, K.G., Skou, A.: Bisimulation through probabilistic testing. Inf.
  Comput.  94(1),  1--28 (1991),
  \url{http://dx.doi.org/10.1016/0890-5401(91)90030-6}

\bibitem{LawKelton:1999}
Law, A.M., Kelton, W.D.: Simulation Modeling and Analysis. McGraw-Hill Higher
  Education, 3rd edn. (1999)

\bibitem{Milner1989:book}
Milner, R.: Communication and Concurrency. Prentice-Hall, Inc. (1989)

\bibitem{DBLP:journals/csr/RuijtersS15}
Ruijters, E., Stoelinga, M.: Fault tree analysis: {A} survey of the
  state-of-the-art in modeling, analysis and tools. Computer Science Review
  15,  29--62 (2015), \url{https://doi.org/10.1016/j.cosrev.2015.03.001}

\bibitem{Viglizzo2005:phd}
Viglizzo, I.: Coalgebras on Measurable Spaces. Ph.D. thesis, Indiana
  University, USA (2005)

\bibitem{Wolovick2012:phd}
Wolovick, N.: Continuous Probability and Nondeterminism in Labeled Transition
  Systems. Ph.D. thesis, Universidad Nacional de C\'ordoba, Argentina (2012)

\bibitem{WuSS97:tcs}
Wu, S., Smolka, S.A., Stark, E.W.: Composition and behaviors of probabilistic
  {I/O} automata. Theor. Comput. Sci.  176(1-2),  1--38 (1997),
  \url{http://dx.doi.org/10.1016/S0304-3975(97)00056-X}

\bibitem{Yi90:concur}
Yi, W.: Real-time behaviour of asynchronous agents. In: Baeten, J.C.M., Klop,
  J.W. (eds.) {CONCUR} '90. LNCS, vol. 458, pp. 502--520. Springer (1990),
  \url{http://dx.doi.org/10.1007/BFb0039080}

\end{thebibliography}

\appendix

\section{Proofs}\label{appendix}

\begin{proof}[of Prop.~\ref{prop:comp:is:closed}]
  The proof of restrictions~\ref{def:iosa:input-and-commit-are-reactive},
  \ref{def:iosa:output-is-generative}, \ref{def:iosa:input-enabled},
  and \ref{def:iosa:input-deterministic} follow by straightforward
  inspection of the rules, considering that $\II_1$ and $\II_2$ also
  satisfy the respective restriction, and doing some case analysis.
  Since $\II_1$ and $\II_2$ are compatible,
  restriction~\ref{def:iosa:clock-control-one-output} also follows by
  inspecting the rules taking into account, in addition, that
  $\II_1$ and $\II_2$ satisfy
  restriction~\ref{def:iosa:input-deterministic}.

  To prove \ref{def:iosa:clock-never-go-off-early} we need to take
  into account that $\enablingck(s_1\pll s_2) =
  \enablingck(s_1)\cup\enablingck(s_2)$ (guaranteed by input
  enabling), and that $\enablingck(s_1)$ and $\enablingck(s_2)$ are
  disjoint sets (guaranteed by compatibility).

  We take $\activeck(s_1\pll s_2)=\activeck_1(s_1)\cup
  \activeck_2(s_2)$ and prove that it satisfies conditions (i)--(iv)
  in~\ref{def:iosa:clock-never-go-off-early}.

  \begin{enumerate}[\rm(i)]
  \item    $\activeck(s_0^1\pll s_0^2) = \activeck_1(s_0^1)\cup \activeck_2(s_0^2)
    \subseteq C_0^1\cup C_0^2 = C_0$.
      \item    $\enablingck(s_1\pll s_2)= \enablingck(s_1)\cup\enablingck(s_2)
    \subseteq \activeck_1(s_1)\cup \activeck_2(s_2)=\activeck(s_1\pll s_2)$.
      \item    Let $s_1\pll s_2$ be stable, then $s_1$ and $s_2$ are stable as
    well (guaranteed by input enabledness).
    Then $\activeck(s_1\pll s_2) = \activeck_1(s_1)\cup \activeck_2(s_2) =
    \enablingck(s_1)\cup\enablingck(s_2) = \enablingck(s_1\pll s_2)$.
      \item    Let $t_1\pll t_2 \trans[C,a,C'] s_1\pll s_2$. We prove by cases
    according to the rules in Table~\ref{tb:parcomp}
    \begin{enumerate}
    \item[(\ref{eq:par:l})]      Let $a\in\actions_1\setminus\actions_2$. Then
      $t_1\trans[C,a,C']s_1$ and $s_2=t_2$, and we can calculate:
      ${\activeck(s_1\pll s_2)} = {\activeck_1(s_1)\cup \activeck_2(s_2)}
      = {\activeck_1(s_1)\cup \activeck_2(t_2)} \subseteq
      {(\activeck_1(s_1)\setminus C)\cup C' \cup \activeck_2(t_2)} =
      {((\activeck_1(t_1)\cup \activeck_2(t_2))\setminus C)\cup C'} =
      {(\activeck(t_1\pll t_2)\setminus C)\cup C'}$. In particular, the
      last but one equality follow by compatibility.
          \item[(\ref{eq:par:r})]      Similar to the previous case if $a\in\actions_2\setminus\actions_1$.
          \item[(\ref{eq:par:s})]      Let $a\in\actions_1\cup\actions_2$.  Then
      $t_1\trans[C_1,a,C'_1]s_1$ and $t_2\trans[C_2,a,C'_2]s_2$, with
      $C=C_1\cup C_2$ and $C'=C'_1\cup C'_2$, and we can calculate:
      ${\activeck(s_1\pll s_2)} = {\activeck_1(s_1)\cup \activeck_2(s_2)}
      \subseteq {((\activeck_1(t_1)\setminus C_1)\cup C'_1) \cup ((\activeck_2(t_2)\setminus C_2)\cup C'_2)} =
      {((\activeck_1(t_1)\cup \activeck_2(t_2)) \setminus C_1\cup C_2)\cup C'_1\cup C'_2} =
      {(\activeck(t_1\pll t_2)\setminus C)\cup C'}$.  The last but one
      equality follow by compatibility.
      \qed
    \end{enumerate}
  \end{enumerate}
\end{proof}

\begin{proof}[of Prop.~\ref{prop:compPreservesConfl}]
  Let $s_1\pll s_2$ in $\states_{\II_1\pll\II_2}$, such that
  $s_1\pll s_2\trans[\emptyset,a,C']s'_1\pll s'_2$ and
  $s_1\pll s_2\trans[\emptyset,b,C'']s''_1\pll s''_2$ with
  $a,b\in\actions_{\II_1\pll\II_2}$.
  We proceed by case analysis on each possible combinations of the
  rules in Table~\ref{tb:parcomp} that originates the transitions.
  We prove the case in which $s_1\pll
  s_2\trans[\emptyset,a,C']s'_1\pll s'_2$ is produced by rule $\text{(R1)}$,
  hence $a\in \actions_{\I_1}\setminus\actions_{\I_2}$. The rest
  proceeds in a similar way.
  Then $s'_2=s_2$ and $s_1\trans[\emptyset,a,C']s'_1$.
  We have then three sub-cases given the nature of $b$:
  \begin{itemize}
  \item    If $b\in \actions_{\I_1}{\setminus}\actions_{\I_2}$, rule $\text{(R1)}$
    applies and hence $s''_2=s_2$ and
    $s_1\trans[\emptyset,b,C']s''_1$.  Since $\I_1$ is confluent,
    there exists $s'''_1$ such that $s'_1\trans[\emptyset,a,C']s'''_1$
    and $s''_1\trans[\emptyset,b,C'']s'''_1$. Using $\text{(R1)}$ in both
    cases, $s'_1\pll s_2\trans[\emptyset,a,C']s'''_1\pll s_2$ and
    $s''_1\pll s_2\trans[\emptyset,b,C'']s'''_1\pll s_2$, which proves
    this case.
  \item    If $b\in\actions_{\I_2}{\setminus}\actions_{\I_1}$, $\text{(R2)}$ applies
    and hence $s_1=s''_1$ and $s_2\trans[\emptyset,b,C'']s''_2$.  By
    $\text{(R1)}$, $s_1\pll s''_2\trans[\emptyset,a,C'] s'_1\pll s''_2$, and
    by $\text{(R2)}$, $s'_1\pll s_2\trans[\emptyset,b,C'']s'_1\pll s''_2$
    which proves this case.
  \item    If $b\in\actions_{\I_1}{\cap}\actions_{\I_2}$, $\text{(R3)}$ applies.
    Hence there are $C''_1$ and $C''_2$ such that $C''=C''_1\cup
    C''_2$, $s_1\trans[\emptyset,b,C''_1]s''_1$ and
    $s_2\trans[\emptyset,b,C''_2]s''_2$.  Furthermore, since $\I_1$ is
    confluent, there exists $s'''_1$ such that
    $s'_1\trans[\emptyset,b,C''_1]s'''_1$ and
    $s''_1\trans[\emptyset,a,C']s'''_1$.  Then, by $\text{(R3)}$, $s'_1\pll
    s_2\trans[\emptyset,b,C'']s'''_1\pll s''_2$, and by $\text{(R1)}$,
    $s''_1\pll s''_2\trans[\emptyset,a,C']s'''_1\pll s''_2$, which
    concludes the proof.
    \qed
  \end{itemize}
\end{proof}

\begin{proof}[of Corollary~\ref{cor:weak:transition}]
  Suppose $(s,\vec{v})\Trans[]\mu$ and
  $(s,\vec{v})\Trans[]\mu'$. Then, there are $n_1$, $n_2$, $C_1$ and
  $C_2$, such that $(s,\vec{v})\Trans[C_1]_{n_1}\mu$ and
  $(s,\vec{v})\Trans[C_2]_{n_2}\mu'$.
  By 1.(ii) in Lemma~\ref{lm:weak:transition} there are $s_1$ and
  $s_2$ stables such that $(s,\emptyset,0)\reduct[*](s_1,C_1,n_1)$ and
  $(s,\emptyset,0)\reduct[*](s_2,C_2,n_2)$.
  Since both $s_1$ and $s_2$ are stable, by
  Prop.~\ref{prop:uarnormandconfl}, $(s_1,C_1,n_1)$ and
  $(s_2,C_2,n_2)$ are in normal form, and since they must be unique
  $s_1=s_2$, $C_1=C_2$, and $n_1=n_2$.
  Finally, By 1.(iii) in Lemma~\ref{lm:weak:transition}, $\mu_1=\mu_2$
  \qed
\end{proof}

\begin{proof}[of Lemma~\ref{lm:weak:transition}]
  We proceed by induction on $n$ proving first item 1 and using it to
  prove 2.

  So, suppose $n=1$ and $(s,\vec{v})\Trans[C]_1\mu$. By rule
  $\text{(T1)}$ in Def.~\ref{def:weaktransition}, there exists $s'$
  stable such that $s\trans[\emptyset,\act,C]s'$ for some
  $\act\in\coactions$ with $\mu=\mu^{\vec{v}}_{C,s'}$, which proves (i).
  From here and Def.~\ref{def:urgentabstractreduction},
  $(s,C',m)\reduct[*](s',C'{\cup}C,m{+}1)$, proving (ii).
                  To prove (iii), suppose $(s,\vec{v}')\Trans[C']_1\mu'$. By (i) and
  (ii) applied to this other transition, there exists a stable $s''$
  such that $\mu'=\mu^{\vec{v}'}_{C',s''}$ and
  $(s,\emptyset,0)\reduct[*](s'',C',1)$. But also
  $(s,\emptyset,0)\reduct[*](s',C,1)$ as proven before. Since $s'$ and
  $s''$ are stable, then, by Prop.~\ref{prop:uarnormandconfl}, both
  $(s',C,1)$ and $(s'',C',1)$ are in normal form which must also be
  unique. Then $s'=s''$ and $C'=C''$. Moreover, if $\vec{v}'=\vec{v}$
  then $\mu'=\mu^{\vec{v}'}_{C',s''}=\mu^{\vec{v}}_{C,s'}=\mu$.

  To prove item 2 for $n=1$, notice first that, by (iii), $f^C_1$ is
  indeed a function.  By (i), $f^C_1(t,\vec{w})=\mu^{\vec{w}}_{C,t'}$
  whenever $(t,\vec{w})\Trans[C]_1\mu^{\vec{w}}_{C,t'}$ for some $t'$
  stable which is granted to exist, and $f^C_1(t,\vec{w})=\emptyfun$
  otherwise.  To show that $f^C_1$ is measurable,
  by~\cite[Lemma~3.6]{Viglizzo2005:phd}, it suffices to prove that
  $(f^C_1)^{-1}(\Delta^q(A\times\prod_{i=1}^NV_i))$ is measurable for
  all $A\subseteq\states$ and $V_i\in\borel(\R)$.  Notice that
  \begin{align*}
    & (f^C_1)^{=1}(\Delta^q(A\times{\textstyle\prod_{i=1}^NV_i})) = \\
    & = \{(t,\vec{w})\mid \exists t':
    (t,\vec{w})\Trans[C]_1\mu^{\vec{w}}_{C,t'} \land
    \mu^{\vec{w}}_{C,t'}(A\times{\textstyle\prod_{i=1}^NV_i})\geq q\} \\
    & = \{(t,\vec{w})\mid \exists t'{\in}A:
    (t,\vec{w})\Trans[C]_1\mu^{\vec{w}}_{C,t'} \land
    {\textstyle\underset{x_i\in C}{\prod} \mu_{x_i}}({\textstyle\underset{x_i\in C}{\prod} V_i})\geq q \land
    \forall x_i\notin C : \vec{w}(i)\in V_i \}   \end{align*}

  \begin{align*}
    & = {\textstyle \bigcup_{\substack{t\in\states\\t'\in A}}}
    \underbrace{
      \{(t,\vec{w})\mid
      (t,\vec{w})\Trans[C]_1\mu^{\vec{w}}_{C,t'} \land
      {\textstyle\underset{x_i\in C}{\prod} \mu_{x_i}}({\textstyle\underset{x_i\in C}{\prod} V_i})\geq q \land
      \forall x_i\notin C : \vec{w}(i)\in V_i \}
    }_{\normalsize = X_t}
  \end{align*}
  Notice that, if ${\textstyle\prod_{x_i\in C}
    \mu_{x_i}}({\textstyle\prod_{x_i\in C} V_i})\geq q$, then
  $X_t=\{t\}\times{\textstyle\prod_{i=1}^N\overline{V}_i}$, with
  $\overline{V}_i=\R$ if $x_i\in C$ and $\overline{V}_i=V_i$ if
  $x_i\notin C$, and $X_t=\emptyset$ otherwise.  In both cases $X_t$
  is measurable.
  Since $\states$ is finite, the union is also finite and hence
  $f^C_1$ es measurable, which proves the base case.

  For the inductive case, let $n\geq 1$ and suppose
  $(s,\vec{v})\Trans[C]_{n+1}\mu$.  By $\text{(T2)}$, there are $C'$
  and $C''$ such that $C=C'\cup C''$, $s\trans[\emptyset,\act,C']s'$,
  $\forall \vec{v}'\in\R^\N: (s',\vec{v}')\Trans[C'']_n\mu'$, and
  $\mu=\int_{\states\times\R^N} f_n^{C''}d\mu^{\vec{v}}_{C',s'}$.
  By induction, $C''$ is unique (by 1.(iii)),
  $(s',\vec{v}')\Trans[C'']_n\mu^{\vec{v}'}_{C'',s''}$ for all
  $\vec{v'}$ and unique stable state $s''$ (by 1.(i) and 1.(ii)), and
  $f_n^{C''}$ is measurable (by 2).  Thus $\int_{\states\times\R^N}
  f_n^{C''}d\mu^{\vec{v}}_{C',s'}$ is well defined.  Moreover, notice
  that $f_n^{C''}(s',\vec{v}')=\mu^{\vec{v}'}_{C'',s''}$ for all
  $\vec{v}'$.

  We focus on 1.(i) and show that $\mu=\mu^{\vec{v}}_{C'\cup C'',s''}$.
  First, notice that
  $\mu=\int_{\{s'\}\times\R^N}f_n^{C''}d\mu^{\vec{v}}_{C',s'}+\int_{(\states\setminus\{s'\})\times\R^N}f_n^{C''}d\mu^{\vec{v}}_{C',s'}$
  and since
  $\mu^{\vec{v}}_{C',s'}=\delta_{s'}\times\prod_{i=1}^N\overline{\mu}^{v_i}_{x_i}$
  with $\overline{\mu}^{v_i}_{x_i} = {\mu}_{x_i}$ if $x_i \in C'$ and
  $\overline{\mu}^{v_i}_{x_i} = \delta_{v_i}$ otherwise (we write
  $v_i$ for $\vec{v}(i)$), then the second summand is the null
  function $\emptyfun$.
  Now, for $A\subseteq\states$ and $Q_i\in\R$, $1\leq i\leq N$, we
  calculate
  \begin{align*}
    \mu(A\times Q_1\times\cdots\times Q_N)
    & = \int_{\{s'\}\times\R^N}f_n^{C''}(t,\vec{w})(A\times Q_1\times\cdots\times Q_N)\ d\mu^{\vec{v}}_{C',s'}(t,\vec{w}) \\
    & = \int_{\R^N}f_n^{C''}(s',\vec{w})(A\times Q_1\times\cdots\times Q_N)\ d({\textstyle\prod_{i=1}^N}\overline{\mu}^{v_i}_{x_i})(\vec{w}) \\
    & = \int_{\R^N}\mu^{\vec{w}}_{C'',s''}(A\times Q_1\times\cdots\times Q_N)\ d({\textstyle\prod_{i=1}^N}\overline{\mu}^{v_i}_{x_i})(\vec{w}) = (\dag)
  \end{align*}
  By definition,
  $\mu^{\vec{w}}_{C'',s''}=\delta_{s''}\times\prod_{i=1}^N\overline{\mu}^{w_i}_{x_i}$
  with $\overline{\mu}^{w_i}_{x_i} = {\mu}_{x_i}$ if $x_i \in C''$ and
  $\overline{\mu}^{w_i}_{x_i} = \delta_{v_i}$ otherwise. Then (in the
  following we omit the domain of each integral is $\R$), using
  Fubini's theorem, we have:
  \begin{align*}
    &    (\dag)
    = \compactints\delta_{s''}(A)\cdot\overline{\mu}^{w_1}_{x_1}(Q_1)\compactcdots\overline{\mu}^{w_N}_{x_N}(Q_N)\ d\overline{\mu}^{v_1}_{x_1}(w_1)...d\overline{\mu}^{v_N}_{x_N}(w_N) \\
    &    = \delta_{s''}(A)\!\compactints\!\overline{\mu}^{w_2}_{x_2}(Q_2)\compactcdots\overline{\mu}^{w_N}_{x_N}(Q_N)\Big(\!\underbrace{\int\!\overline{\mu}^{w_1}_{x_1}(Q_1)\,d\overline{\mu}^{v_1}_{x_1}(w_1)}_{(*)}\!\Big) d\overline{\mu}^{v_2}_{x_2}(w_2)...d\overline{\mu}^{v_N}_{x_N}(w_N)
  \end{align*}
  We focus on $(*)$. Three cases may arise.
  If $x_1\in C''$, then
  $(*) = \int\mu_{x_1}(Q_1)\,d\overline{\mu}^{v_1}_{x_1}(w_1) =
  \mu_{x_1}(Q_1)\int d\overline{\mu}^{v_1}_{x_1}(w_1)=\mu_{x_1}(Q_1)$
  since $\int d\overline{\mu}^{v_1}_{x_1}(w_1)=1$.
  If $x_1\in C'\setminus C''$,
  $(*) = \int\delta_{w_1}(Q_1)\,d\mu_{x_1}(w_1) =
  \int\chi_{Q_1}(w_1)\,d\mu_{x_1}(w_1) = \mu_{x_1}(Q_1)$
  where $\chi_{Q_1}$ is the usual characteristic function.
  Finally, if $x_1\notin C\cup C''$,
  $(*) = \int\delta_{w_1}(Q_1)\,d\delta_{v_1}(w_1) =
  \int\chi_{Q_1}(w_1)\,d\delta_{v_1}(w_1) = \delta_{v_1}(Q_1)$.
  Therefore $(*)=\overline{\mu}_{x_1}(Q_1)$ with
  $\overline{\mu}_{x_1}=\mu_{x_1}$ if $x_1\in C'\cup C''$ and
  $\overline{\mu}_{x_1}=\delta_{v_1}$ otherwise.
  Then, proceeding in the same manner for all the indices, we continue,
  \begin{align*}
    &= \delta_{s''}(A)\overline{\mu}_{x_1}(Q_1)\!\compactints\!\overline{\mu}^{w_2}_{x_2}(Q_2)\compactcdots\overline{\mu}^{w_N}_{x_N}(Q_N)\, d\overline{\mu}^{v_2}_{x_2}(w_2)...d\overline{\mu}^{v_N}_{x_N}(w_N)\\
    &= \delta_{s''}(A)\cdot\overline{\mu}_{x_1}(Q_1)\cdots\overline{\mu}_{x_N}(Q_N)
    = (\delta_{s''}\times{\textstyle\prod_{i=1}^{N}}\overline{\mu}_{x_i})(A\times Q_1\times\cdots\times Q_N) \\
    &= \mu^{\vec{v}}_{C\cup C'',s''}(A\times Q_1\times\cdots\times Q_N)
  \end{align*}
  which proves 1.(i).

  To prove 1.(ii), by Def.~\ref{def:urgentabstractreduction},
  $(s,C^*,m)\reduct[](s',C^*{\cup}C',m{+}1)$ since
  $s\trans[\emptyset,\act,C']s'$.  By induction,
  $(s',\vec{v}')\Trans[C'']_n\mu'$ implies
  $(s',C^*{\cup}C',m{+}1)\reduct[*](s'',C^*{\cup}C'{\cup}C'',m{+}1{+}n)$.
  Thus $(s,C^*,m)\reduct[*](s'',C^*{\cup}C'{\cup}C'',m{+}1{+}n)$,
  which proves 1.(ii).

  The proofs of 1.(iii) and 2 follows like for the base case.
  \qed
\end{proof}

\begin{proof}[of Lemma~\ref{lemma:one-step-invariant}]
  We proceed analyzing by cases according $a$ is $\init$, in
  $\actions$, or in $\R_{>0}$.

  If $a$ is $\init$, we only consider cases where $s=\sinit$, since
  $\transitions_{\init}(s,v)=\emptyset$ otherwise.
  If $\mu\in\transitions_{\init}(\init,v)$, then
  $\mu=\delta_{s_0}\times\prod_{i=1}^N\mu_{x_i}$.  Since each
  $\mu_{x_i}$ is a continuous probability measure, the likelihood of
  two clocks being set to the same value is $0$ and
  $\mu_{x_i}(\R_{>0})=1$. Then $\mu(\notinv)=0$.
  This proves the first case.

  For the other cases we introduce the following notation.
  For each $x_i,x_j\in\activeck(s')$, define
  $\notinv_{ij}=\{(s'',\vec{w})\mid \vec{w}(i)=\vec{w}(j)\}$
  whenever $i\neq j$,
  $\notinv_{i,\text{st}}=\{(s'',\vec{w})\mid\stable{s''}, \vec{w}(i)<0\}$, and
  $\notinv_{i,\text{nst}}=\{(s'',\vec{w})\mid\neg\stable{s''},\vec{w}(i)\leq0\}$.
  It is not difficult to prove that each of this type of sets is
  measurable.
  Notice that
  $\notinv=\bigcup\notinv_{ij}\cup\bigcup\notinv_{i,\text{st}}\cup\bigcup\notinv_{i,\text{nst}}$
  and, since the unions are finite, $\mu(\notinv)=0$ if and only if
  $\mu(\notinv_{ij})=0$, $\mu(\notinv_{i,\text{st}})=0$, and
  $\mu(\notinv_{i,\text{nst}})=0$, for every $i,j$.  Thus, for the
  remaining two cases we focus on proving these last three equalities.

  Let $a\in\actions$, $\mu\in\transitions_a(s,\vec{v})$ and
  $(s,\vec{v})\in\inv$.  Then $s\neq\init$ and hence, by
  Def.~\ref{def:iosasemantic}, there exists $s\trans[C, \act, C']s'$
  such that $\bigwedge_{x_i \in C} \vec{v}(i)\leq 0$, and
  $\mu=\delta_{s'}\times\prod_{i=1}^N\overline{\mu}_{x_i}$ with
  $\overline{\mu}_{x_i}=\mu_{x_i}$ if $x_i\in C$,
  $\overline{\mu}_{x_i}=\delta_{\vec{v}(i)}$ otherwise.

  Let $x_i\in\activeck(s')$, then
  $x_i\in(\activeck(s)\setminus C)\cup C'$.
  If $x_i\in C'$, then $\mu_{x_i}(\R_{>0})=1$ and hence
  $\mu(\notinv_{i,\text{st}})=\mu(\notinv_{i,\text{nst}})=0$.
  If $x_i\in (\activeck(s)\setminus C)\setminus C'$ we consider two
  subcases: either $C=\emptyset$ or $C=\{x_j\}$.
  In the first case, $a\in\coactions$ and therefore $s$ is not
  stable. Then $\vec{v}(i)>0$ (since $(s,\vec{v})\in\inv$) and hence
  $\delta_{\vec{v}(i)}(\R_{>0})=1$, which implies
  $\mu(\notinv_{i,\text{st}})=\mu(\notinv_{i,\text{nst}})=0$.
  If instead $C=\{x_j\}$, $i\neq j$ and, by
  Def.~\ref{def:iosasemantic}, $\vec{v}(j)=0$. Since $s$ is stable and
  $(s,\vec{v})\in\inv$, then $\vec{v}(i)\geq0$ and
  $\vec{v}(i)\neq\vec{v}(j)$, hence $\vec{v}(i)>0$ and, as before,
  $\mu(\notinv_{i,\text{st}})=\mu(\notinv_{i,\text{nst}})=0$.

  Suppose now $x_i,x_j\in\activeck(s')$ with $i\neq j$, then
  $x_i,x_j\in(\activeck(s)\setminus C)\cup C'$. If $x_i\in C$ then
  $\mu_{x_i}$ is a continuous probability measure and hence
  $\mu(\notinv_{ij})=0$.  Similarly if $x_j\in C$.  If instead
  $x_i,x_j\in\activeck(s){\setminus}C$, then
  $\vec{v}(i)\neq\vec{v}(j)$ because $(s,\vec{v})\in\inv$ and hence
  $\delta_{\vec{v}(i)}\neq\delta_{\vec{v}(j)}$. Therefore
  $\mu(\notinv_{ij})=0$.
  This proves that $\mu(\notinv)=0$ for this case.

  Finally, take $d\in\R_{>0}$ and suppose that
  $\transitions_d(s,\vec{v})=\{\mu\}$ with
  $(s,\vec{v})\in\inv$.
  By Def.~\ref{def:iosasemantic}, $s$ needs to be stable,
  $0<d\leq\min\{\vec{v}(k) \mid s\trans[\{x_k\},a,C']s', a{\in}\outactions\}$,
  and
  $\mu=\delta_{s}\times\prod_{i=1}^N\delta_{\vec{v}(i)-d}$.
  Since $s$ is stable, $\mu(\notinv_{i,\text{nst}})=0$.
  For $x_i\in\activeck(s)$,
  ${\vec{v}(i){-}d}
  \geq {\min\{\vec{v}(k) \mid {s\trans[\{x_k\},a,C']s'}, {a{\in}\actions^O}\}-d}
  \geq 0$,
  since $\activeck(s)=\enablingck(s)$ ($s$ is stable).  Hence
  $\delta_{\vec{v}(i)-d}(\R_{\geq 0})=1$. Therefore
  $\mu(\notinv_{i,\text{st}})=0$.
  For $x_i,x_j\in\activeck(s)$ with $i{\neq}j$,
  $\vec{v}(i)\neq\vec{v}(j)$ because $(s,\vec{v})\in\inv$.  Hence
  $\delta_{\vec{v}(i)-d}\neq\delta_{\vec{v}(j)-d}$.  So
  $\mu(\notinv_{ij})=0$.  This proves that $\mu(\notinv)=0$ for this
  case, and therefore the lemma.
  \qed
\end{proof}

\begin{proof}[of Lemma~\ref{lemma:streach}]
  We only prove it for $\II_1\pll\II_2$. The generalization to any $n$
  follows easily.
  We prove it by induction on the length of the plausible path $\sigma$ that leads
  to $s_1\pll s_2$.  If $|\sigma|=0$ the $\sigma=s^0_1\pll s^0_2$,
  where each $s^0_i$ is initial in each $\II_i$ and hence potentially
  reachable.
  For the inductive case let
  $\sigma=\sigma'\cdot(s'_1\pll s'_2)\trans[C,a,C'](s_1\pll s_2)$.
  W.l.o.g.\ and by contradiction, suppose $s_1$ is not potentially
  reachable in $\II_1$.  Necessarily, $s_1\neq s'_1$ since $s'_1$ is
  potentially reachable by induction ($|\sigma|=|\sigma'|+1$).  Thus
  $s'_1\pll s'_2\trans[C,a,C']s_1\pll s_2$ is the result of applying
  $\textrm{(R1)}$ or $\textrm{(R3)}$.  The rest of the proof follows
  similarly for both cases.  So suppose $\textrm{(R3)}$ was applied.
  Then $s'_1\trans[C_1,a,C'_1]s_1$ for some $C_1\subseteq C$ and
  $C'_1\subseteq C'$.  Since $s_1$ is not potentially reachable but
  $s'_1$ is, then $a\in\actions\setminus\coactions$ and there is a
  $b\in\coactions\cap\outactions$ such that
  $s'_1\trans[\_,b,\_]\_$. Then $s'_1\pll s'_2\trans[\_,b,\_]\_$,
  either by $\textrm{(R1)}$ or by $\textrm{(R3)}$ (being $\II_2$ input
  enabled) yielding $\sigma$ not plausible and hence a contradiction.
                                                      \qed
\end{proof}

\begin{proof}[of Theorem~\ref{th:iosacdeterministic}]
  We have to show that every measurable set $B\in\borel(\pstates)$ of
  states satisfying conditions (a), (b), or (c) in
  Def.~\ref{def:deterministiciosa} is almost never reached in
  $\P(\I)$.
  Let
  $B_{\text{st}}=B\cap((\{s\mid\stable{s}\}\cup\{\init\})\times\R^N)$
  and $B_{\neg\text{st}}=B\cap (\{s\mid\neg\stable{s}\}\times\R^N)$.
  Then $B=B_{\text{st}}\cup B_{\neg\text{st}}$, and $B_{\text{st}}$ and
  $B_{\neg\text{st}}$ are measurable.  Hence $B$ is almost never reached if
  and only if $B_{\text{st}}$ and $B_{\neg\text{st}}$ are almost never
  reached.

  Let $\mathsf{En}_{\geq 2} = \{(s,\vec{v})\in\pstates \mid
  (\stable{s} \lor s=\init) \land
  {|\bigcup_{\act\in\actions\cup\{\init\}}\transitions_\act(s,\vec{v})|}\geq2\}$.
  By Lemma~\ref{lemma:inv-enables-at-most-one},
  $\mathsf{En}_{\geq2}\subseteq\notinv$, and by (a) in
  Def.~\ref{def:deterministiciosa},
  $B_{\text{st}}\subseteq\mathsf{En}_{\geq 2}$.  Then, by
  Corollary~\ref{cor:invariant}, $B_{\text{st}}$ is almost never
  reached.
  In addition, Corollary~\ref{cor:weak:transition}, ensures that no
  $(s,\vec{v})\in B_{\neg\text{st}}$ satisfies (b).  Therefore every
  $(s,\vec{v})\in B_{\neg\text{st}}$ satisfies (c).  Hence, by
  Lemma~\ref{lem:nonurgentunreachable}
  $B_{\neg\text{st}}\subseteq\notinv$.  Then, by
  Corollary~\ref{cor:invariant}, $B_{\neg\text{st}}$ is almost never
  reached, which proves the theorem.
  \qed
\end{proof}

\end{document}